\newcommand{\qgate}[1]{\textsc{#1}}
\newcommand{\qcode}[1]{\texttt{#1}}
\newcommand{\secref}[1]{Section~\hyperref[#1]{\ref*{#1}}}
\newcommand{\appref}[1]{Appendix~\hyperref[#1]{\ref*{#1}}}
\newcommand{\tabref}[1]{Table~\hyperref[#1]{\ref*{#1}}}
\newcommand{\figref}[1]{Fig.~\hyperref[#1]{\ref*{#1}}}
\newcommand{\sfigref}[2]{Fig.~\hyperref[#1]{\ref*{#1}(#2)}}
\newcommand{\etal}{et al.~}
\newcommand{\orcid}[1]{\href{https://orcid.org/#1}{\includesvg[width=8pt]{orcid_icon}}}
\begin{document}

\title{Analog information decoding of bosonic quantum LDPC codes}

\author{Lucas Berent}
\email{lucas.berent@tum.de}
\thanks{These authors contributed equally.}
\affiliation{Technical University of Munich, Germany}

\author{Timo Hillmann}
\email{timo.hillmann@rwth-aachen.de}
\thanks{These authors contributed equally.}
\affiliation{Department of Microtechnology and Nanoscience (MC2), Chalmers University of Technology, SE-412 96 Gothenburg, Sweden}

\author{Jens Eisert}
\email{jense@zedat.fu-berlin.de}
\affiliation{Freie Universit\"at Berlin, Germany}
\affiliation{Helmholtz-Zentrum Berlin f\"ur Materialien und Energie, Germany}

\author{Robert Wille}
\email{robert.wille@tum.de}
\affiliation{Technical University of Munich, Germany}
\affiliation{Software Competence Center Hagenberg, Austria}

\author{Joschka Roffe}
\email{joschka@roffe.eu}
\affiliation{Quantum Software Lab, University of Edinburgh}
\affiliation{Freie Universit\"at Berlin, Germany}

\begin{abstract}
Quantum error correction is crucial for scalable quantum information processing applications. Traditional discrete-variable quantum codes that use multiple two-level systems to encode logical information can be hardware-intensive. An alternative approach is provided by bosonic codes, which use the infinite-dimensional Hilbert space of harmonic oscillators to encode quantum information. Two promising features of bosonic codes are that syndrome measurements are natively analog and that they can be concatenated with discrete-variable codes. In this work, we propose novel decoding methods that explicitly exploit the analog syndrome information obtained from the bosonic qubit readout in a concatenated architecture. Our methods are versatile and can be generally applied to any bosonic code concatenated with a quantum low-density parity-check (QLDPC) code. Furthermore, we introduce the concept of quasi-single-shot protocols as a novel approach that significantly reduces the number of repeated syndrome measurements required when decoding under phenomenological noise. To realize the protocol, we present the first implementation of time-domain decoding with the overlapping window method for general QLDPC codes and a novel analog single-shot decoding method. Our results lay the foundation for general decoding algorithms using analog information and demonstrate promising results in the direction of fault-tolerant quantum computation with concatenated bosonic-QLDPC codes.
\end{abstract}

\maketitle

\section{Introduction}

For quantum computing, an important design consideration is the choice of technology used to physically realize qubits. 
A standard approach is to engineer \emph{discrete variable} (DV) qubits, where the basis states are defined by two distinct energy levels of the quantum system. 
Examples include ion-trap qubits~\cite{bruzewicz_trapped-ion_2019, wineland_nobel_2013}, superconducting qubits~\cite{devoret_superconducting_2013}, and spin qubits~\cite{burkard_semiconductor_2023, henriet_quantum_2020}.
However, discrete variable qubits pose challenges, primarily in the intricate task of effectively isolating the two-level encoding from external influences that introduce errors. There is substantial evidence that, in the absence of error suppression methods, 
the coherence of quantum circuits is limited to at most a logarithmic depth \cite{deshpande_tight_2022, stilck_franca_limitations_2021}, which represents a challenge for near-term quantum computing. 
To suppress qubit errors, we %
can resort to notions of 
\emph{quantum error correction} (QEC).
The goal of QEC is to harness entanglement to redundantly encode quantum information in the logical qubit state of a larger physical Hilbert space, albeit to the extent of substantial overhead. 
The QEC encoding provides the system with additional degrees of freedom that can be used to detect and correct errors in real time. 
Beyond the initial encoding, full QEC protocols must incorporate logical gates that allow the encoded quantum information to be manipulated in a fault-tolerant way. 
This provides the ability to compute fault-tolerantly, whilst keeping the system protected against local noise. Moreover, given a (potentially corrupted) encoded state, a central task is to check whether errors occurred on the encoded information, and if so to decode, i.e., to computationally derive a suitable recovery operation to restore an error-free state. 
Identifying and realizing feasible and practical schemes for fault-tolerant operation remains one of the core challenges of quantum computing, both in experiment and in theory.

On the highest level, both discrete variable and \emph{continuous variable} (CV) codes for quantum error correction have been considered in the literature and are seen as candidates for 
feasible quantum codes. The latter offer some compelling advantages for a number of platforms, notably for \emph{cat codes}~\cite{cochrane_macroscopically_1999} and \emph{Gottesman-Kitaev-Preskill} (GKP) codes~\cite{gottesman_encoding_2001}.
However, there are also some challenges that are unique to the continuous
variable setting. In particular, it is not obvious how to do decoding in light of continuous syndrome information, as it is not clear how to make use of continuous information in this task. For example, the development of good decoders for GKP codes constitutes a well-known technical challenge. The lack of good methods of decoding for quantum error-correcting codes with a continuous component can be
seen as a roadblock in the field.

In this work, we report substantial progress on the partial use of continuous
syndrome information in the notions of quantum error correction. In this way, we aim
to bring the advantages of DV and CV quantum error correction closer together.
To this end, we
explore the combination of two promising classes of codes, instances of
bosonic codes and quantum \emph{low-density parity-check} (LDPC) codes, and investigate 
suitable decoding protocols for general bosonic-LDPC constructions that make such
partial use of analog information.
To further motivate this combination and to show how
continuous syndrome information comes into play,
we give a brief overview in the following.

\paragraph{Bosonic qubits.}
Bosonic encodings offer an alternative to 
DV qubits~\cite{ofek_extending_2016, sivak_real-time_2023}. 
In this approach, the logical qubit state is non-locally embedded within the phase space of an infinite-dimensional quantum harmonic oscillator. 
The principal advantage of using bosonic qubits is that the infinite-dimensional Hilbert space provides the redundancy needed to correct physical oscillator errors. 
Consequently, bosonic encodings can be interpreted as intrinsic QEC protocols at the individual qubit level. 
In principle, this provides an efficient route to fault tolerance, and bosonic codes have been extensively explored using protocols such as GKP codes~\cite{gottesman_encoding_2001} and cat codes~\cite{cochrane_macroscopically_1999}.

Similarly to discrete variable qubits, multiple bosonic qubits can be combined via a QEC code to create a logical state. This strategy is usually referred to as a \textit{concatenated} bosonic code: at the inner level, the individual qubits are protected by their bosonic encoding, and at the outer level, the bosonic qubits are wired together to form a logical state. Concatenated codes therefore combine the benefits of both bosonic and discrete variable QEC codes.
 
Another appealing feature of various bosonic codes is that they can be precisely tuned to exhibit noise asymmetry in their qubit-level error model. For example, recent experiments have shown that cat code qubits can be engineered to have phase-flip rates that dominate over bit-flips by almost three orders of magnitude~\cite{grimm_stabilization_2020, lescanne_exponential_2020}. In a concatenated bosonic code, highly biased inner-level qubits can reduce the overhead required by the outer code. For example, in recent work by Darmawan~\etal\cite{darmawan_practical_2021}, it is proposed that cat code qubits can be concatenated with the XZZX code~\cite{bonilla_ataides_xzzx_2021}, an instance of a surface code modified by Clifford conjugations that has extremely high thresholds in the limit of large bias~\cite{tiurev_correcting_2023}. 

\paragraph{Concatenated bosonic codes.}
To date, most studies of bosonic codes have focused on their concatenation with repetition codes~\cite{guillaud_repetition_2019, guillaud_error_2021, chamberland_building_2022, regent_high-performance_2023, stafford_biased_2023, guillaud_repetition_2019} or two-dimensional topological codes~\cite{vuillot_quantum_2019, noh_fault-tolerant_2020, darmawan_practical_2021, raveendran_finite_2022}. 
Such codes are favored for near-term experiments because they require only nearest-neighbor interactions. 
This facilitates their implementation on a two-dimensional array of qubits, making them particularly suitable for architectures such as superconducting qubits. 
However, from an information theoretical standpoint, two-dimensional topological codes %
may be seen as being far from optimal. The main drawback lies in their poor rate: the surface code, for instance, encodes only a single logical qubit per patch. 
This means that increasing the surface code distance comes at the expense of the encoding density. 
This is in stark contrast to the efficiency of contemporary classical error correction protocols. 
In particular, numerous classical communication technologies employ %
LDPC codes. 
Such codes have the advantage of preserving a constant encoding density even as the code distance is scaled. 
Moreover, it has been shown that in the asymptotic limit, LDPC codes can approach the Shannon bound, which represents the theoretical limit on the rate of information transfer through a noisy channel.

\paragraph{Low-density parity-check quantum codes.}
Until recently, it was an open question as to whether 
\emph{quantum LDPC} (QLDPC) codes with \emph{good} parameter scaling comparable to their classical counterparts exist. 
This question has recently been answered in the affirmative via a series of theoretical breakthroughs~\cite{hastings_fiber_2021, breuckmann_balanced_2021, panteleev_asymptotically_2022, dinur_good_2022, leverrier_quantum_2022}. 
Central to these innovations has been the use of sophisticated product constructions that provide procedures for transforming classical LDPC codes into quantum codes. 
The resulting QLDPC codes exhibit constant encoding rates and distances that scale proportionally with the length of the code.

Implementing QLDPC codes poses greater challenges than their planar counterparts. Several no-go theorems indicate that implementing (good) QLDPC codes will not be possible in two-dimensional geometrically local architectures~\cite{bravyi_no-go_2009,  baspin_connectivity_2022, baspin_quantifying_2022, baspin_improved_2023, baspin_lower_2023}: they must necessarily involve geometrically non-local connections. 
Nonetheless, various qubit technologies are being developed that enable long-range interactions needed to implement QLDPC codes~\cite{xu_constant-overhead_2023, ramette_any--any_2022,barredo_synthetic_2018, omran_generation_2019, strikis_quantum_2023}. 
In this setting, QLDPC codes promise quantum computation with considerably reduced overhead compared to the surface code~\cite{bravyi_high-threshold_2023, xu_constant-overhead_2023}.

In Ref.~\cite{raveendran_finite_2022}, Raveendran~\etal{}have explored the concatenation of GKP bosonic qubits with QLDPC codes based on the lifted product construction~\cite{panteleev_degenerate_2021}.
Their results show that there are distinct advantages to using a concatenated bosonic code over directly implementing a QLDPC code with discrete variable qubits. 
Specifically, it is possible to feed forward analog information from the GKP qubit readout to improve the performance of the outer code's decoder. This leads to improved error suppression beyond what is achievable with discrete variable qubits alone.

In light of this promising line of research on concatenated bosonic codes, we focus on QEC protocols constructed from bosonic-LDPC codes in a general fashion, assuming only that the inner code provides analog syndrome readout and possibly a noise bias and that the outer code is an arbitrary quantum LDPC code.
We focus on the decoding of such codes and demonstrate that, by combining their properties, we obtain high-performance QEC protocols.

\subsection{Overview of contributions}

\begin{figure*}[t]
    \centering
    \includegraphics{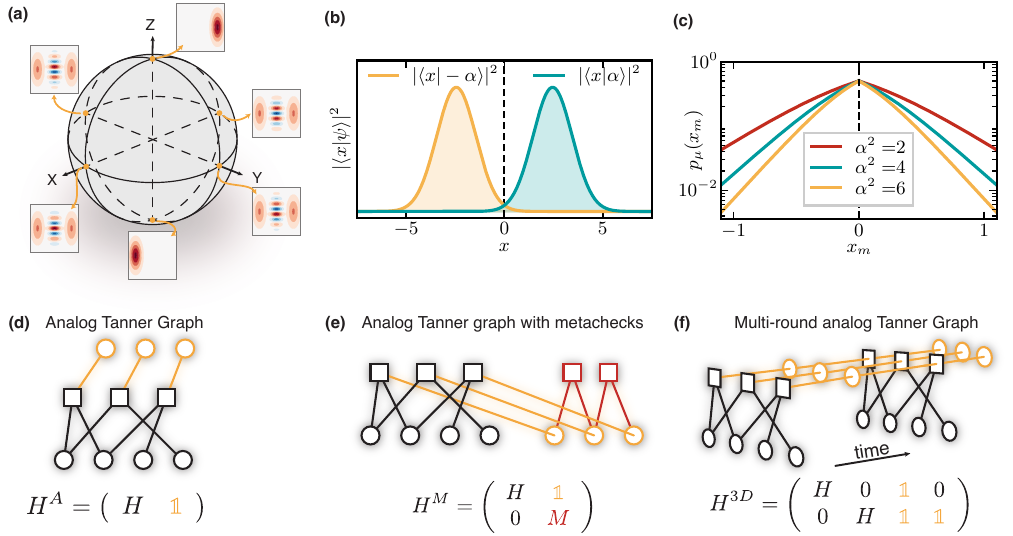}
    \caption{Overview of our main techniques. 
    \textbf{(a)}~We investigate QLDPC codes concatenated with cat qubits encoded in coherent states of a harmonic oscillator.
    \textbf{(b)}~An important property of cat qubits, in addition to their biased noise model, is that the syndrome information obtained from qubit readout is intrinsically analog-valued, as the wavefunction of a coherent state is a Gaussian centered at $\alpha$.
    \textbf{(c)}~Depending on the measured value $x_m$ during (quadrature) readout, we can assign an outcome-dependent error probability $p(x_m)$ that is a function of the size of the cat qubit $\alpha^2$.
    \textbf{(d)}~We incorporate the analog information obtained during the syndrome measurements into a Tanner graph construction that we refer to as an \emph{analog Tanner graph} (ATG). The ATG stores the analog syndrome information directly in the factor graph used for decoding.
    We show that the ATG construction can be adapted to work with decoding strategies such as \textbf{(e)}~single-shot shot decoding and \textbf{(f)}~overlapping window time-domain decoding.
    }
    \label{fig:fig1}
\end{figure*} 

In this work, we develop methods for the decoding and analysis of concatenated bosonic-LDPC codes. 
To this end, our primary test bed is a protocol in which cat code qubits (inner code) are concatenated with the three-dimensional surface code (outer code). We choose to focus on the three-dimensional surface code as it is perhaps the simplest example of a QLDPC code that extends upon the capabilities of the standard two-dimensional surface code. 
From an implementation perspective, three-dimensional surface codes can be realized with relatively few long-range connections in two and completely locally in three dimensions~\cite{vasmer_three-dimensional_2019}. 
Suitable experimental platforms for implementing three-dimensional codes include architectures with photonic links or neutral atom arrays~\cite{barredo_synthetic_2018, omran_generation_2019}. 
For comparison, the concatenated lifted product schemes of Ref.~\cite{raveendran_finite_2022} require arbitrary qubit connectivity in general. 
Another distinguishing feature of the three-dimensional surface code is the fact that it has a transversal CCZ gate~\cite{vasmer_three-dimensional_2019}. 
This leaves it fully equipped for universal quantum computation without the need for resource-intensive magic state injection. 

We introduce a novel decoding method, \textit{analog Tanner graph decoding} (ATD), which makes use of the analog readout information from the bosonic qubits. 
Our numerical simulations show that this leads to improved thresholds for decoding with the three-dimensional surface codes. Furthermore, we demonstrate that using ATD, the number of repetitions required by the non-single-shot component of the three-dimensional surface codes can be reduced to a small number.
In this setting, we refer to three-dimensional surface codes as being \textit{quasi-single shot}.
Finally, this paper is accompanied by open-source software tools that facilitate the reproduction and extension of our analysis of concatenated bosonic-LDPC codes and provide means to automatically conduct respective numerical simulations.
Our results are summarized in more detail below:

\paragraph{Analog Tanner graph decoding.} 
Fundamentally, in discrete-variable QEC, the syndromes obtained from stabilizer measurements are discrete, although readout techniques can yield an analog value. 
This stands in contrast to bosonic qubits, where measurements yield analog outcomes due to the infinite-dimensional Hilbert space. 
The strength of the analog readout can be used to assign an uncertainty associated with the measurement. 
Syndromes derived from analog bosonic readout are termed \textit{analog syndromes}.
The \emph{analog Tanner graph decoder} (ATD) we introduce in this work provides a method for mapping analog syndromes to a belief propagation decoder, sketched in~\figref{fig:fig1}{d}. 
Our approach is extremely versatile: ATD can be applied to any stabilizer code with analog syndrome information.
Furthermore, it is possible to incorporate ATD as part of both single-shot and time-domain decoding strategies, as illustrated in~\figref{fig:fig1}{e} and~\figref{fig:fig1}{f}.

\paragraph{Single-shot decoding with analog information.} 
A problem in QEC is that syndromes must be extracted using auxiliary qubits that are also susceptible to errors. 
As such, there is uncertainty associated with any syndrome we measure. 
To counteract this problem, we can adapt the \textit{time-domain} approach in which syndrome measurements are repeated $\propto d$ times, where $d$ is the code distance. 
Measurement errors can then be accounted for by considering the entire syndrome history in the decoding. 
An alternative strategy is to use a code that has the so-called \textit{single-shot} property. 
Single-shot codes have an additional structure that allows measurement errors to be directly corrected, removing the need to decode over time~\cite{bombin_single-shot_2015}. 
The three-dimensional surface code is single-shot for phase noise but requires time-domain decoding for bit-flip noise~\cite{quintavalle_single-shot_2021, vasmer_three-dimensional_2019}. Our numerical results show that under ATD decoding, the single-shot component of the concatenated three-dimensional surface code has a sustained threshold of $9.9\%$ under phenomenological noise. 
This improves over the previous best-observed threshold for the discrete variable three-dimensional surface codes of $7.1\%$~\cite{higgott_improved_2023}.

\paragraph{Quasi-single-shot protocol.}

Time-domain decoding can lead to large overheads for quantum codes, since the number of repeated syndrome measurements required is proportional to the code distance $d$~\cite{shor_fault-tolerant_1996, delfosse_beyond_2022, dennis_topological_2002}. 
This increases the length of the decoding cycle and reduces the frequency with which logical operations can be applied~\cite{skoric_parallel_2022}. 
To address this overhead, we propose a novel protocol called ($w$)-\emph{quasi-single-shot} decoding. 
The key idea is to use the analog information obtained during the syndrome readout of the inner bosonic code to enhance the decoding performance and reduce the need to repeat the measurement multiple times in the noisy readout setting.
On an intuitive level, this idea is somewhat analogous to the fact that to learn $m$ bits during the quantum phase estimation algorithm, one needs $O(m)$ measurements~\cite{kitaev_quantum_1995}, whereas, using bosonic qubits, one can learn the phase (in principle) using a single measurement~\cite{gottesman_encoding_2001}.

Quasi-single-shot decoding leverages our strategies for decoding with analog information, and we obtain a scheme that reduces the required number of repeated syndrome measurement rounds to a small number $w \ll d$.
In combination with the tunable noise bias of the inner cat qubit code, the quasi-single-shot three-dimensional surface code yields a three-dimensional topological code with significantly higher thresholds and reduced time-overhead when compared to two-dimensional topological codes.

\paragraph{Open-source software tools.}
We have implemented a set of software tools as part of the Munich Quantum Toolkit (MQT) to foster further research in the direction and to provide the community with numerical tools to conduct simulations and reproduce our results.

\smallskip
The remainder of this work is structured as follows. In~\secref{sec:background}, we review the main concepts of bosonic quantum codes and quantum error correction that are needed throughout this work. 
In~\secref{sec:decoding-main}, we discuss iterative decoding approaches that are fundamental for state-of-the-art algorithms and the proposed techniques.
Then, in~\secref{sec:atd} our decoding strategies for decoding quantum codes using analog information and the respective numerical simulation results are presented. 
In~\secref{sec:qss-codes}, we elaborate on the proposed quasi-single-shot protocol and present numerical results.
\secref{sec:architectures} focuses on more practical aspects around the considered bosonic-LDPC code architectures and reviews important open challenges with respect to potential implementations in superconducting architectures.
We conclude and give a brief overview of future work in~\secref{sec:conclusion}.

\section{Preliminaries}\label{sec:background}
Here, we introduce basic notions of quantum coding that are needed throughout the rest of this work. We assume the reader is familiar with fundamental notions of quantum information theory and quantum error correction.

\subsection{Bosonic quantum codes}

The use of bosonic codes toward achieving fault-tolerant quantum computing has become a promising alternative to the approach based on so-called \emph{discrete-variable} qubits such as spin qubits, trapped ions, neutral atoms, or superconducting qubits~\cite{bruzewicz_trapped-ion_2019, wineland_nobel_2013, devoret_superconducting_2013, burkard_semiconductor_2023, henriet_quantum_2020}.
It is often quoted, see, for instance, Refs.~\cite{joshi_quantum_2021, cai_bosonic_2021} that the advantage of bosonic codes over discrete-variable codes is the fact that a single bosonic mode, realized, e.g., in a superconducting cavity or the motional states of trapped ions, lives in an infinite-dimensional Hilbert space. 
Bosonic codes therefore offer a hardware-efficient route toward fault-tolerant quantum computing, since the principle of quantum error correction relies on encoding logical information redundantly in a subspace of a much larger Hilbert space, as opposed to using multiple two-level systems.
The main idea of how bosonic codes can be realized in practice is that a particular bosonic code constitutes the first layer in a concatenated quantum error correction code, consisting of (at least) one continuous-variable and one discrete-variable code, which is called \emph{outer code} in this context.
As there are various realizations and families of discrete-variable codes, there are similarly multiple families of bosonic codes. 
As the choice of the discrete-variable code determines properties, such as the availability of transversal gates, the choice of a particular bosonic code can be tailored to the physical platform on which the code is realized, but also determines the (effective) noise model that is relevant for the outer code as well.

As such, in addition to their large surrounding Hilbert space, bosonic codes offer further advantages over two-level systems relevant to the design of quantum error-correcting codes and decoders.
One of these features is that many families of bosonic codes have a biased-noise error model, where some types of errors occur more frequently than others. 
While discrete-variable qubits can have a biased noise channel, this noise bias cannot be preserved by gate operations needed throughout the QEC protocol, i.e.,  there does not exist a bias-preserving \qgate{CNOT} implementation~\cite{guillaud_repetition_2019}.
While a Pauli \qgate{Z} error occurring before the gate is not converted to other types of Pauli errors, a \qgate{Z} error during the execution of the gate will be converted to other types of Paulis.
Thus, intuitively, for the \qgate{CNOT} $= \qgate{CX}(\pi)$ one must require that \qgate{Z} errors commute with the gate at \emph{all times}, i.e.,  $\comm*{Z}{\qgate{CX}(\alpha)} = 0, \, \forall \alpha \in [0, \pi]$.
However, implementation of the \qgate{CNOT} in such a way is not possible without leaving the code space~\cite{guillaud_repetition_2021}.
Hence, the bias is annihilated during the computation and bias-tailored QEC codes such as the ones proposed in Refs.~\cite{tuckett_ultrahigh_2018, tuckett_fault-tolerant_2020, roffe_bias-tailored_2023, huang_tailoring_2022} are not beneficial for discrete-variable systems.
However, the additional degrees of freedom of continuous-variable states natively allow for bias-preserving operations~\cite{guillaud_repetition_2019, puri_bias-preserving_2020}.

While the additional degrees of freedom of bosonic codes yield a clear advantage over two-level systems, the continuous support of states in quantum phase space has the consequence that any measurement, e.g., the measurement of a stabilizer check, of such a state is inherently imprecise.
Instead of dismissing this characteristic of bosonic quantum codes as a drawback, it can equivalently be seen as a feature that yields additional information during the qubit readout, called \emph{analog information} that can be used for decoding.
We note that in practice the readout of a two-level system in finite time also yields a continuous outcome. 
However, with technological progress in recent years, the readout uncertainty due to the finite measurement time has become almost negligible~\cite{swiadek_enhancing_2023}.
For the interested reader, we describe the properties of stabilized cat codes more explicitly in \secref{sec:architectures}.

\subsection{Low-density parity-check quantum codes}
In the following, all vector spaces are over $\Ft{}$ unless stated otherwise. We use $[\ell]$ to denote the set $\set{1,2,\dots,\ell}$.
A (discrete variable, DV) $[[n,k,d]]$-quantum \emph{stabilizer code} is defined by an Abelian subgroup $S$ of the $n$-qubit Pauli group $\calP_n$ that does not contain $-I$. The generators of $S$ are commonly called stabilizer \emph{checks} of the code. 

An important class of quantum stabilizer codes are \emph{Calderbank-Shor-Steane} (CSS) codes. The defining feature of CSS codes is that their stabilizer generators can be split into two decoupled sets $S_X$ and $S_Z$, which contain only products of the Pauli \qgate{X} and Pauli \qgate{Z} operators, respectively.
By using the (isomorphic) mapping between the $n$-qubit Pauli group (modulo global phases) and binary vector spaces
\cite{gottesman_stabilizer_1997}
\begin{equation}
\calP_n/\set{\pm I, \pm iI} \cong \Ft^{2n},
\end{equation}
it is possible to represent the $S_X$ and $S_Z$ stabilizers of a CSS code as two matrices: $H_X\in \Ft^{r_X \times n}$ and $H_Z\in \Ft^{r_Z\times n}$. 
These matrices can be interpreted as the parity-check matrices of two classical linear codes, the first designed to correct bit-flips and the second to correct phase-flips. For any CSS code, the $H_X$ and $H_Z$ matrices must satisfy the following orthogonality condition.
\begin{equation}
    \label{eq:css-orthogonality}
	H_Z \cdot H_X^{\top} = 0 \equiv H_X \cdot H_Z^{\top} = 0,
\end{equation}
which ensures that the \qgate{X} and \qgate{Z} stabilizer generators commute, i.e., their supports have even overlap.
A CSS code can then be defined as a code over $\Ft^{2n}$ with check matrix
\begin{equation}
    H = \begin{pmatrix}
        0 & H_Z\\
        H_X & 0
    \end{pmatrix}\rm.
\end{equation}
The CSS \emph{syndrome} $s$ of a qubit error $e=(e_X, e_Z)$ is defined as
\begin{equation}\label{eq:syndrome}
    s = (s_X,s_Z) = (H_Z\cdot e_X , H_X \cdot e_Z)\rm.
\end{equation}
From the syndrome equations above, it is clear that the decoding of CSS codes amounts to two independent classical decoding problems for phase-flips and bit-flips, respectively. 
However, note that decoding the two check sides independently ignores correlations, such as Y-errors, and thus is suboptimal in general.
In the following, we will drop the subscripts, but it should be assumed that we are referring to a single error type ($X$ or $Z$) unless otherwise stated.

In the language of vector spaces, the goal of \emph{decoding} is, given a syndrome, to infer an estimate $\varepsilon$ s.t. $s = H\cdot\varepsilon$ that can be used to apply a \emph{recovery operation} to restore an error-free logical state. 
The decoding is successful if the \emph{residual error} $r = e+\varepsilon$ is a stabilizer, and it fails if $r$ is a logical operator. 

The \emph{Tanner graph} $\calT(H) = (V_D \cup V_C, E)$ of a code defined by the parity-check matrix $H$ is a bipartite graph with data nodes $V_D$ and check nodes $V_C$ whose incidence matrix is $H$. 
That is, given $H$, $\calT(H)$ can be constructed by creating a data node $d\in V_D$ for each column and a check node $c \in V_C$ for each row of $H$, and inserting an edge $e=(v_c,v_d)\in E$ if $H_{(c,d)} = 1$.

\begin{example}[Tanner graph of the Hamming code]
    Consider the Hamming code, defined as the kernel of the parity-check matrix 
    \[H = \left(\begin{matrix}
        1 & 0 & 0 & 1 & 0 & 1 & 1 \\
        0 & 1 & 0 & 1 & 1 & 0 & 1 \\
        0 & 0 & 1 & 0 & 1 & 1 & 1
    \end{matrix}
    \right).\]
    The corresponding Tanner graph $\calT(H)$ is illustrated in~\figref{fig:tannergraph-ham-ex}.
    \begin{figure}[!b]
        \centering
        \includegraphics{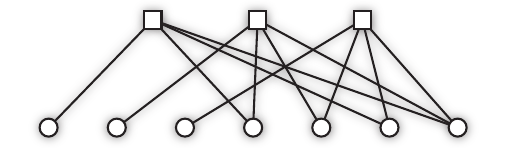}
        \caption{Tanner graph of the Hamming code. The square nodes represent the checks $V_C$ (rows of the check matrix $H$) and the circles represent data nodes $V_D$ (columns of $H$).}
        \label{fig:tannergraph-ham-ex}
    \end{figure}
\end{example}
In the context of decoding algorithms based on factorization of probability distributions, the Tanner graph is frequently referred to as \emph{factor graph}.

A family of \emph{quantum low-density parity-check} (QLDPC) codes refers to a stabilizer code family with the additional property that the stabilizer generators are sparse. 
More specifically, it is required that the degree of the data and check nodes in the Tanner graph is upper bounded by a constant (independent of the code size). 
Intuitively, each check has a constant weight and each qubit participates in a constant number of checks.

\subsubsection{Three-dimensional surface codes}\label{sec:prelim_3dsc}
\begin{figure*}[t]
    \centering
    \includegraphics{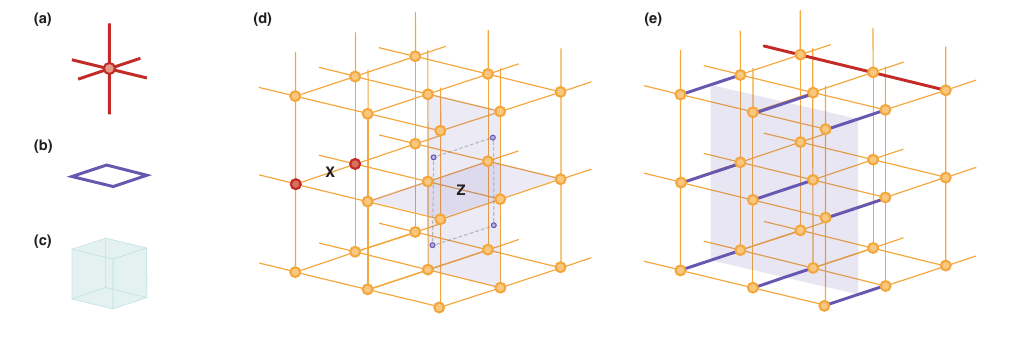}
    \caption{three-dimensional surface codes with periodic boundaries indicated by additional edges on the sides of the lattice. \textbf{(a)} A vertex check (of weight six). \textbf{(b)} A face check (of weight four). \textbf{(c)} A volume check (of weight 12). \textbf{(d)} three-dimensional lattice $\Lambda$ with open boundaries. Rough boundaries are indicated with open edges. A single qubit \qgate{X} error gives a pair-like syndrome at the endpoints of the error string (indicated by red vertices). A single qubit \qgate{Z} error produces a loop-like syndrome at adjacent faces (indicated by blue faces). The loop can be readily seen in the dual lattice pictures whose dual edges are indicated with dashed lines. \textbf{(e)} Logical operators of the three-dimensional surface codes. A loop-like string corresponds to a logical \qgate{X} operator. A logical \qgate{Z} operator corresponds to a loop of faces in the dual lattice, i.e., forming a \emph{dual sheet} wrapping across the lattice along two axes.}
    \label{fig:3dsc}
\end{figure*}
For the remainder of this work, we focus on \emph{topological quantum codes}~\cite{bombin_topological_2006, bravyi_quantum_1998,kitaev_fault-tolerant_2003} as the ``outer code'' of a concatenated code. Such codes are derived from a geometric $D$ dimensional lattice and have local connectivity in $D$-dimensional space. 
In particular, we focus on the \emph{three-dimensional surface code}  (3DSC)~\cite{dennis_topological_2002, hamma_string_2005, vasmer_three-dimensional_2019, huang_tailoring_2022, kubica_single-shot_2022} as a representative QLDPC code.
Each constituent is identified with a qubit with Hilbert space $\mathbb{C}^2$.
Consider a three-dimensional lattice in Euclidean space captured by a graph $\Lambda=(V,E)$ consisting of vertices $V$, edges $E$, faces $F$, and volumes $W$.
For two objects in the lattice $v,f$, we write, $v \sim f$ ($f\sim v$) to indicate that $v$ is adjacent to $f$.

To define a quantum code on the lattice $\Lambda$, we associate qubits with edges and checks with vertices and faces, i.e.,  we define \emph{vertex checks} $A_v, v\in V$, 
acting on edges adjacent to vertices
\begin{equation}\label{eq:3dsc-v-checks}
 A_v := \prod_{e\in E, e \sim v} \qgate{Z}_e \,
\end{equation}
depicted in~\sfigref{fig:3dsc}{a},
and \emph{face checks} $B_f, f\in F$ acting on edges adjacent to faces
\begin{equation}\label{eq:3dsc-f-checks}
 B_f := \prod_{e\in E, e \sim f} \qgate{X}_e \, 
\end{equation}
shown in cf.~\sfigref{fig:3dsc}{b}.
To reason about the logical operators of the code, let us introduce some informal notation.
For a more formal discussion in the language of homology, we refer the reader to~\appref{sec:app-homology}.

Let $\zeta$ be an edge path in $\Lambda$.
A \emph{string} operator is defined as a Pauli $X/Z$ 
operator whose support corresponds to the qubits that 
are associated to edges in $\zeta$ as
\begin{equation}
 S_\zeta^P:= \prod_{i\in \zeta} P_i, P\in \set{X,Z}.  
\end{equation}
The weight of a string is the size of its support, i.e., the number of non-trivial Paulis. 
In addition to the lattice $\Lambda$, consider also the dual lattice $\Lambda^*$, which is obtained from $\Lambda$ by associating dual vertices to volumes, dual edges to faces, dual faces to edges and dual volumes to vertices, and define \emph{co-string operators} as string operators on $\Lambda^*$. 
Hence, qubits are associated with dual faces, \qgate{Z}-checks to dual volumes, and \qgate{X}-checks to dual edges.

Errors correspond to \qgate{X}/\qgate{Z}-strings on the lattice that cause the anti-commuting (\qgate{Z}/\qgate{X}) adjacent checks to be ``flagged'', indicating an error occurred (all other, non-adjacent checks are not violated and thus not flagged).
The syndromes caused by violated vertex checks are created in pairs at the vertices that are the endpoints of \qgate{X}-error strings, as depicted in~\sfigref{fig:3dsc}{d} for a single-qubit error (i.e., a length-1 error string in $\Lambda$).
The syndrome of a \qgate{Z}-error string corresponds to the adjacent violated face-checks as shown in~\sfigref{fig:3dsc}{d} for a single qubit error. 
Syndromes caused by violated face checks are better illustrated in the dual lattice $\Lambda^*$ (recall that edges correspond to dual faces, so a string operator in $\Lambda$ corresponds to the respective collection of faces in $\Lambda^*$).
Considering $\Lambda^*$ it can be seen that \qgate{Z}-syndromes have a loop-like geometry in the lattice. 
In fact, the loop-like syndromes induced by face checks lead to an additional structure that has been shown to imply the single-shot property, see also~\secref{sec:ss-qldpc}.

Given a three-dimensional lattice, we distinguish two types of boundary conditions. 
When $\Lambda$ has periodic boundaries (depicted by additional edges in~\figref{fig:3dsc}), i.e., a tessellation of a three-dimensional torus (therefore also called the \emph{three-dimensional toric code}, 3DTC), the logical $\qgate{X}$ operators correspond to strings that form loops on the lattice along one axis, as depicted in~\figref{fig:3dsc}{e}. 
The logical $\qgate{Z}$ operators correspond to loop-like sets of faces, called \emph{sheets} in $\Lambda^*$, i.e., sets of faces that go across the dual lattice along two axes as illustrated in~\figref{fig:3dsc}{e}. 
Topologically, logical operators correspond to \emph{non-trivial} loops, i.e., loops (of edges and faces, respectively) that are non-contractible and the contractible loops (those that enclose a region on the lattice) correspond to stabilizers of the code. 
It is straightforward to see that there are three pairs of logical operators $\bar{\qgate{X}}, \bar{\qgate{Z}}$, corresponding to three (non-equivalent) minimum-weight loops, one along each of the three different axes, and the three corresponding orthogonal sheets. 
Therefore, the code encodes $k=3$ logical qubits. 
The \emph{weight} of a logical operator is the number of qubits in its support, i.e., the length of $\zeta$.

If we instead consider a code on a three-dimensional lattice $\Lambda$ with open boundaries, we define two opposite sides of the lattice as $X$-type boundaries, called \emph{smooth boundaries} and the four remaining sides as $Z$-type boundaries, called \emph{rough boundaries}---in analogy to the two-dimensional surface code. 
A string operator $S_\zeta^X$ (of minimal length) that connects the smooth boundaries is a logical \qgate{X} operator on the code, and a dual string operator $S_\zeta^Z$ connecting the four rough boundaries corresponds to a logical \qgate{Z} operator.
Since in the presence of open boundaries, there is only a single pair of such strings s.t. they are orthogonal and connect the respective boundaries, the code encodes a single logical qubit.

The X-distance $d_{X}$ of the code is defined as the minimum weight of a logical X operator, and analogously for $d_Z$. 
That is, $d_{X}$ is the minimum number of edges across $\Lambda$, and $d_Z$ is the minimum number of edges corresponding to a sheet across the dual lattice in the presence of periodic boundaries.

For a lattice with open boundaries, $d_X$ is defined as the minimum number of edges in a string connecting smooth boundaries and $d_Z$ is defined as the minimum number of edges corresponding to a sheet connecting rough boundaries.
For instance, the code depicted in~\figref{fig:3dsc} has distances $d_X = 3, d_Z = 9$. 
In summary, the \emph{three-dimensional surface code} (3DSC) 
parameters (with open and periodic boundaries) are given by
\begin{align}
    \text{3DSC: } & \left[\left[2L(L-1)^2+L^3, 1, d_X=L, d_Z=L^2\right]\right],\nonumber \\
    \text{3DTC: } & \left[\left[3L^3, 3, d_X=L,d_Z=L^2\right]\right] .\nonumber 
\end{align}
Note that by associating \qgate{X} checks with vertices and \qgate{Z} checks with faces, we obtain an equivalent code, where the corresponding notions of logicals and distances are simply exchanged.

\subsubsection{Single-shot codes}\label{sec:ss-qldpc}
In general, the \emph{syndrome extraction circuit} used to measure the stabilizer of the code is subject to noise. 
As such, syndrome errors need to be accounted for in fault-tolerant QEC protocols. 
Designing fault-tolerant syndrome extraction circuits can add considerable overhead.
For instance, the Shor fault tolerant scheme requires at least $\propto d^3$ check measurements~\cite{shor_fault-tolerant_1996, delfosse_beyond_2022}.
Dennis~\etal \cite{dennis_topological_2002} argue that for a two-dimensional surface code with distance $d$, $O(d)$ rounds of noisy syndrome measurements need to be conducted to achieve fault-tolerance.

As an alternative to time-domain decoding, Bombin~\cite{bombin_single-shot_2015} showed that there exist \emph{single-shot codes} for which a single round of noisy syndrome measurements suffices.
One of the main advantages of single-shot codes is that the complexity of decoding is reduced since the structure of the decoding problem does not have an additional time dimension---as is the case for non-single-shot codes. 
Moreover, the time needed to conduct a QEC cycle is reduced since only a single set of stabilizer measurements needs to be done, which results in the fact that more logical operations can be conducted in a time interval compared to time-domain decoding.
The explicit construction of single-shot codes has been explored recently~\cite{campbell_theory_2019, quintavalle_single-shot_2021}, where the central idea is to use redundancy in the checks to ensure the single-shot property.
In Ref.~\cite{quintavalle_single-shot_2021}, 
it is shown that single-shot codes with necessary redundancy can be constructed using tensor products of chain complexes (i.e., three-dimensional hypergraph product constructions).
Intuitively, in this construction, the extra dimension yields an additional (classical) code that can be used to detect syndrome errors.
In this case, we can define an additional set of classical checks called \emph{metachecks} with check matrix $M$ that defines the corresponding classical linear \emph{metacode} $\calM{}$. 
The metacode satisfies the condition $M\cdot H_{X/Z}=0$, which means that every syndrome that can originate from $H_{X/Z}$ is a codeword of the metacode $\calM$.
The metacode $\calM$ is used to determine whether a syndrome is valid, i.e., we can use the metachecks to compute a \emph{meta-syndrome} $s_m := M \cdot s_{X/Z}$, which can then be decoded to fix syndrome failures. 

Geometrically, in the 3DSC, we can associate metachecks with the volumes of the lattice, as depicted in~\sfigref{fig:3dsc}{c}. 
This gives a \qgate{X} metacode $\calM_X$ whose checks act on the syndromes induced by the \qgate{X}-face checks. 
Intuitively, the volume metachecks help to close noisy loop-like syndromes.

\subsection{Noise model}
In this section, we describe the noise model that is used for the investigation of the various decoding methods (cf.~\secref{sec:decoding-main}) as well as the quasi-single-shot protocol (cf.~\secref{sec:qss-codes}). More details about our simulation procedures can be found in~\appref{sec:numeric-sim-details}. 

We consider a biased phenomenological noise model with analog syndrome measurements inspired by stabilized cat qubits.
We assume a three-dimensional architecture in which bosonic cat qubits are used as the inner code and a \mbox{$[[n=3 L^3,k=3,d_X=L,d_Z=L^2]]$} three-dimensional surface code with periodic boundaries is used as the outer code. 
i.e.,  the $n$ physical qubits are cat qubits, 
so qubits encoded in continuous-variable quantum systems
with Hilbert space ${\cal L}(\R^2)$,
which are used to protect $k$ logical qubits with $X$ ($Z$) distance $d_X = L$ ($d_Z = L^2)$. That is to say, 
this noise model affects the description on two levels: the level of qubits in the outer code, and the level of continuous variable bosonic modes in the inner code.
Note that our phenomenological noise model, while capturing key features of cat qubit noise, does not explicitly correlate effective error rates to physical noise and confinement parameters.
Our main motivation for using this model is to retain generality and applicability to related inner qubit noise models while capturing the key features of cat qubit noise.

In each QEC cycle, the stabilizers are measured, yielding an analog syndrome as a real number for bosonic codes.
Intuitively, the magnitude of the analog measurement can be interpreted as information on the reliability of the syndrome readout and thus can be used in the overall decoding routine. 
Note that in this work we use a phenomenological noise model and hence do not consider a specific implementation of the syndrome extraction circuit. 
For example, the standard Steane scheme~\cite{steane_error_1996} could be used.

\subsubsection{Qubit error noise model}

Here, we outline specifics concerning the error model on the level of qubits in the outer code.
We consider a model defined as a quantum channel that takes the form of a diagonal Pauli channel, reflecting stochastic Pauli noise. Such a noise model can be obtained from a non-diagonal error model via a group twirl.
Let $\rho$ denote the density matrix of a single qubit state that undergoes the quantum noise. 
Then, the Pauli error-channel $\calE$ has the Kraus representation
\begin{equation}
    \calE{}(\rho) := (1-p) \rho + p_X X \rho X + p_Y Y \rho Y + p_Z Z \rho Z. \label{eq:noise-channel}
\end{equation}
The single qubit physical error rate $p_{\rm err}\in [0,1]$ is the total probability of \qgate{X},\qgate{Y}, and \qgate{Z}-type errors. 
The individual Pauli error rates $p_{\Omega},\, \Omega \in \{X, Y, Z \}$ are specified by the physical error rate $p_{\text{err}}$ and a bias vector $\vec{r} = (r_X, r_Y, r_Z)$. This bias $\eta_\Omega$ for a certain error species $\Omega \in \{X, Y, Z \}$ is given by 
\begin{equation}
    \label{eq:bias_eq}
    \eta_{\Omega} := \frac{r_{\Omega}}{\sum_{\Omega'\neq \Omega} r_{\Omega'}} \, p_{\text{err}}.
\end{equation}
We call, for example, a noise channel \qgate{Z}-\emph{biased}, if 
\begin{equation}\label{eq:z-biased-chnl}
\eta_Z = \frac{r_Z}{r_X + r_Y} > 0.5.    
\end{equation}

\subsubsection{Syndrome error noise model}\label{sec:syndr-noise-model}
Inspired by the physical realization of bosonic cat qubits, we model the syndrome (or check) errors as Gaussian random noise that is added to the continuous syndrome $x\in\mathbb{R}$ obtained in the measurement. This affects the readout of the 
noiseless stabilizer values $s_i = \pm 1$. 
Therefore, the noisy stabilizer values $\tilde{s}_i$ have a continuous outcome given by $\tilde{s}_i = s_i + x_i$ where $x_i \sim \mathcal{N}(0, \sigma_i^2)$ is a Gaussian random variable with mean 0 and variance $\sigma^2$.
By thresholding the noisy syndrome $\tilde{s}$, i.e., taking signs, $\text{sgn}(\tilde{s_i})$, we obtain the \emph{hard syndrome}. 
As detailed in \appref{sec:s-noise-model-conversion}, the thresholding procedure allows us to relate syndrome error rates $p_{\text{err}}^{\text{synd}}$ and the variance $\sigma^2$ of the Gaussian noise process. 
i.e.,  given $p_{\text{err}}^{\text{synd}}$, the associated standard deviation $\sigma$ is given by
\begin{align}\label{eq:perr-to-sigma}
    \sigma = \frac{1}{\sqrt{2} \text{Erfc}^{-1}(2 p_{\text{err}}^{\text{synd}}) },
\end{align}
where $x\mapsto \text{Erfc}^{-1}(x)$ is the inverse of the complementary error function.

When considering biased-noise error models, we also bias the syndrome error channel in the same way as we bias the qubit error channel discussed in the previous section.
From the individual syndrome error rates for \qgate{X} and \qgate{Z} checks, we obtain through Eq.~\eqref{eq:perr-to-sigma} the corresponding variance of the Gaussian random noise model.
For example, a large \qgate{X} bias means that there will be more bit-flip errors, as well as more \qgate{X}-syndrome errors compared to phase-flip errors and associated syndrome errors. In other words, 
we model the error affecting the bosonic modes of the inner code and in consequence the qubits of the outer code in a phenomenological fashion.

When incorporating the analog information into the decoding process, we replace the \emph{log-likelihood ratios} (LLRs) of a discrete error model $\gamma_{\rm synd} = \log[(1 - q) / q]$, where $q$ is the measurement error probability, with 
\begin{equation}
    \label{eq:llr_analog_information}
    \gamma_{i} = \log \left[ \frac{\mathrm{Pr}(\tilde{s}_i \vert s_i = +1)}{\mathrm{Pr}(\tilde{s}_i \vert s_i = -1)} \right] = \frac{2 \tilde{s}_i}{\sigma^2},
\end{equation}
where $\mathrm{Pr}(\tilde{s}_i \vert s_i = \pm 1)$ is the probability of observing the noisy syndrome $\tilde{s}_i$ under the condition that the noiseless syndrome value is $s_i = \pm 1$.

\section{Min-sum belief propagation algorithms}\label{sec:decoding-main}
In this section, we discuss the decoding of quantum codes using iterative decoding procedures based on minimum-sum belief propagation (BP) decoding. 
First in~\secref{sec:hard-msa}, we review standard min-sum BP decoding for hard syndromes (i.e., for discrete variable codes), which forms the basis for the discussed decoding procedures. 
In~\secref{sec:si-decoder}, we review recent work on the use of analog syndrome information to decode quantum codes~\cite{raveendran_soft_2022}, propose improvements to these techniques, and discuss caveats of the original method.
The results of the numerical simulation are presented in the following section,~\secref{sec:atd}, together with a comparison to the proposed decoding technique, analog Tanner graph decoding.

\subsection{Hard syndrome MSA decoding}\label{sec:hard-msa}

\emph{Belief propagation} (BP) is an iterative algorithm that is known to be efficient in decoding classical LDPC codes and has been adapted to quantum codes successfully in recent years. 
BP is a message-passing algorithm operating on the Tanner graph of the code (also called the factor graph in this context). 
The graph is considered a model that describes the factorization of the joint probability distribution of the error. 
Given the measured syndrome, the goal of BP is to (approximately) compute the marginal probabilities for each bit.
For an error $e$ and syndrome $s=H\cdot e$, BP finds an estimate $\varepsilon$ of the error $e$ that yields the same syndrome $s=H\cdot\varepsilon$. 
The estimate vector $\varepsilon$ is formed as $\varepsilon=(\varepsilon_1, \dots, \varepsilon_n)$ where 
\begin{equation}\varepsilon_i:=\text{argmax}_{e_i}[P(e_i|s)].
\end{equation}

There are several variations of the standard BP algorithm that differ in the way marginal probabilities are computed. 
In the following, we focus on min-sum BP, which has been argued to be easier to implement on hardware-near devices than other variants~\cite{raveendran_soft_2022}.
For more details on min-sum BP, we refer the reader to~\appref{sec:app_bp}. 

When applied to factor graphs that are trees (i.e., that do not contain loops), BP is known to be exact and will converge to a solution within a number of iterations less than the diameter of the graph.
However, in general, LDPC codes contain loops.
In this setting, BP computes approximate marginals and is not guaranteed to converge to a solution that satisfies the syndrome equation.
In cases where BP does not terminate, the decoding can be deferred to a post-processing routine.
Such post-processing routines are typically more computationally expensive but will ensure the decoder returns a solution satisfying the syndrome equation.
A commonly used post-processing method to improve the overall decoding performance of BP algorithms is \emph{ordered statistics decoding} (OSD). This was first introduced in Ref.~\cite{fossorier_soft-decision_1995} and has recently been adapted to QLDPC codes~\cite{panteleev_degenerate_2021, roffe_decoding_2020}. 

\begin{figure}[!t]
    \centering
    \includegraphics{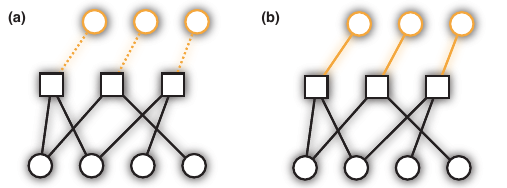}
    \caption{\textbf{(a)} Illustration of the tanner graph for SSMSA decoding.
    The analog information taken into account by SSMSA can be 
    illustrated similarly to ATD (although in SSMSA the analog information is \emph{not} directly incorporated in the factor graph), however, because of the cutoff parameter, it may happen that SSMSA discards the analog information, which corresponds to ignoring the virtual nodes, indicated by dashed edges.
    \textbf{(b)} Sketch of the \emph{analog Tanner graph} (ATG). 
    The subgraph colored in black corresponds to the Tanner graph of the code. 
    The subgraph highlighted in 
    yellow corresponds to the virtual nodes that are used to incorporate the analog information. 
    Their union is the ATG.}
    \label{fig:atg}
\end{figure}

Let us briefly summarize the main aspects of OSD. 
We consider a single check side in the following syndrome equation: 
\begin{equation}\label{eq:osd-syndr-eq}
H\cdot e = s. 
\end{equation}
In general, $H$ may not be a square matrix and may not have full rank, and therefore we cannot directly invert $H$ to solve Eq.~\eqref{eq:osd-syndr-eq}. 
The strategy applied is to choose a subset of columns that are linearly independent, and hence form a basis. 
Let $S:=\set{S_i}$ be the set of column indices of linearly independent columns we chose and $H_S$ be the matrix of column vectors obtained. Clearly, $H_S$ has full rank and can thus be inverted to give a solution 
\begin{equation}
    \label{eq:osd_invert_eq}
  H_S^{-1}\cdot s = e_S.  
\end{equation}
Different choices of $S$ correspond to (unique) solutions $e_S$. 
The main idea of BP+OSD is to use the marginal probabilities computed by the BP decoding algorithm to select a set $C$ that contains columns corresponding to bits with a high error probability.

\subsection{Soft-syndrome MSA decoder}\label{sec:si-decoder}
Recently, Ravendraan~\etal~\cite{raveendran_soft_2022} introduced a variant of iterative min-sum decoding called \emph{soft-syndrome min-sum algorithm} (SSMSA) to decode QLDPC codes using analog information.
We briefly review the main aspects of the algorithm, for which we present the first open-source implementation. 
We show that the results of 
our implementation match the original results presented in Ref.~\cite{raveendran_soft_2022}.
Furthermore, we explore the combination of SSMSA with \emph{ordered statistics decoding} (OSD), which improves the decoding performance compared to the original work. 

SSMSA is an iterative message-passing decoding procedure, i.e., a variant of (min-sum) \emph{belief propagation} (BP) that uses soft information, i.e., real syndrome values instead of hard (binary) ones in its update rules. Initially, the corresponding binary syndrome is obtained by ``thresholding'' the analog syndrome, i.e., determined by the signs of the analog values.
The update rules that are used to compute the messages are equivalent to those used in min-sum BP decoding with the addition of a ``cutoff'' parameter $\Gamma$, which is used to determine if the analog syndrome information should be considered in the update rules or not.
If the absolute value of the analog syndrome is below the cutoff, it is taken into account when computing the min-sum updates (in addition to the standard messages).
Conversely, if the absolute value of the analog syndrome is above the cutoff, the standard min-sum rules are applied. 
The SSMSA decoding process is visualized in~\sfigref{fig:atg}{a} and the detailed pseudocode is presented in~\Cref{alg:ssmsa} in~\appref{sec:app_si-decoder}.

Analogously to BP, it is not guaranteed that SSMSA converges. 
The algorithm tries to infer a decoding estimate based on marginal probabilities, i.e.,  it tries to find the most probable error given a syndrome in a given number of maximum steps.
Hence, in case the algorithm terminates due to reaching the maximum number of steps, one can in principle use the marginal probabilities the algorithm computed up to termination in a \emph{post-processing} step to infer a decoding estimate.
However, post-processing techniques were not considered in Ref.~\cite{raveendran_soft_2022}.

Note that in SSMSA, soft syndromes are not directly included in the parity-check matrix $H$ (i.e., the factor graph used for decoding is not altered), but are only used as an additional parameter to compute the marginal probabilities during the iterative decoding procedure.
This means that measurement errors will not be considered for possible fault locations that satisfy the syndrome in OSD post-processing, i.e., there are situations in which we are trying to solve for a syndrome that is not in the image of $H$.
In that case, it can happen that the ``solution'' with the highest error probability is not a solution of Eq.~\eqref{eq:osd_invert_eq}.
This leads to cases where OSD post-processing does not give a significant improvement over standard SSMSA decoding.
In the following section, we propose techniques that amend this problem by ensuring that the factor graph is always invertible, i.e., has full row rank.

\begin{figure*}
    \centering
    \includegraphics{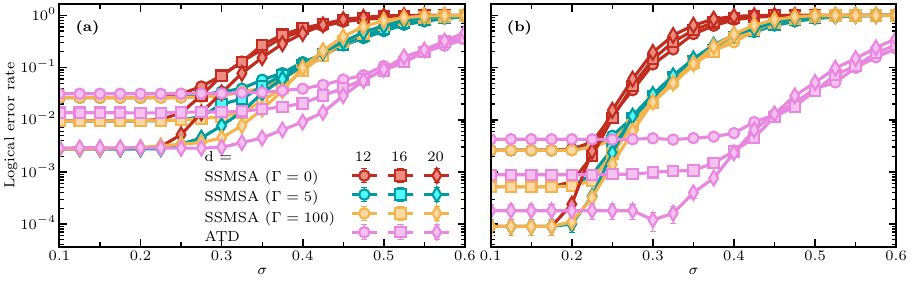}
    \caption{Comparison of the SSMSA decoder with ATD for increasing syndrome noise $\sigma$ on a family of lifted product codes with distances $d\in \set{12,16,20}$, cf.~\secref{sec:app_code_constructions}. The data error rate of the unbiased depolarizing noise model is fixed at $p=0.05$. \textbf{(a)} Comparison of the original SSMSA implementation and ATD using BP (without OSD). \textbf{(b)} Comparison of SSMSA+OSD and ATD using BP+OSD. The legend is shared between both panels and $\Gamma$ indicates the cutoff value of the soft information (SI) SSMSA decoder. Note that the case $\Gamma = 0$ corresponds to ignoring the analog information and, therefore, is equivalent to ordinary hard syndrome decoding.}
    \label{fig:decoder_comparisons}
\end{figure*}

\section{Analog Tanner graph decoding}\label{sec:atd}

In this section, we propose a novel decoding technique for quantum stabilizer codes using analog syndrome information that is based on min-sum BP+OSD decoding. 
We call this technique 
\emph{analog Tanner graph decoding} (ATD) as it is based on the construction of a variant of the standard Tanner graph that incorporates analog syndrome information.

In~\secref{sec:atd-ss}, we briefly review single-shot decoding of quantum codes. 
Then, in~\secref{sec:atd-ss}, we show how the analog Tanner graph decoding can be adapted to be used to decode single-shot quantum codes.
The numerical results in this section are obtained by standard Monte Carlo decoding simulations, which are discussed in more detail in~\secref{sec:numeric-sim-details}. 
Additionally, in \appref{sec:bias-pres-syndromes-cats} we discuss how the proposed techniques can be adjusted to also include analog information of the data qubits and not only of the syndrome readout in similar architectures as the one we focus on in the rest of this article.

Given the Tanner graph of a code $\calT = (V_C \cup V_D, E)$, we can directly incorporate analog information by adding $|V_C|$ additional data nodes, called \emph{virtual (data) nodes}, $V_V$. 
Each check node $V_C$, is connected to a single virtual data node through a single edge, as visualized in \sfigref{fig:atg}{b}. The virtual data node stores the probability that the check is in error. 
In our protocol, this probability is derived from the magnitude of the analog syndrome readout. 
We refer to the modified Tanner graph as an \textit{analog Tanner graph} (ATG).

In terms of parity check matrices, building the ATG amounts to appending an $m\times m$ identity matrix $\Id_m$ to the original parity-check matrix of the code: 

\begin{definition}[Analog check matrix]\label{def:atg}
    Given a parity-check matrix $H \in \F_2^{m\times n}$ corresponding to the incidence matrix of a Tanner graph $\calT(H)$, the analog check matrix $H^{\calA}$ has the following form
    \begin{equation}\label{eq:atg}
        H^\calA := [H \mid \Id_m].
    \end{equation}
\end{definition}

Clearly, $\calT^{\calA}$ constitutes a valid Tanner graph whose incidence matrix is $H^\calA$. Moreover, $H^{\calA}$ always has full rank.  Consequently, there always exists a solution to the syndrome equation, cf. Eq.~\eqref{eq:osd-syndr-eq}.
For decoding, the virtual data nodes are initialized with the analog syndrome LLRs given in Eq.~\eqref{eq:llr_analog_information}.
Note that this construction can be seen as a generalization of \emph{data syndrome codes} for standard, hard (binary) syndromes~\cite{kuo_decoding_2021, li_numerical_2020, grospellier_combining_2021, breuckmann_single-shot_2022, ashikhmin_robust_2014}, tailored to error models inspired by bosonic codes and their associated analog syndrome information.

To decode using the ATG, we can now apply standard decoding approaches such as BP+OSD, making the approach flexible to different decoding strategies and variations of the algorithm. 
If not stated otherwise, we refer to decoding using the ATG with BP+OSD as \emph{analog Tanner graph decoding} (ATD).
Note that the explicit inclusion of virtual nodes that correspond to fault locations for measurement errors eliminates the problems encountered with SSMSA and the OSD post-processing techniques mentioned at the end of~\secref{sec:si-decoder}.
Moreover, even though SSMSA with a cutoff of $\Gamma=\infty$ is conceptually similar to ATD, the syndrome update rules in SSMSA (as discussed in the previous section) and the fact that the information is not directly incorporated into the factor graph used for decoding, leads to significant performance discrepancies.

To investigate the decoding performance and compare ATD to SSMSA(+OSD), we perform standard Monte Carlo simulations to estimate the logical error rate for increasing syndrome noise $\sigma$ (and fixed data error rate $p=0.05$) on a family of \emph{lifted product} (LP) codes. 
The codes are taken from a code family defined in Ref.~\cite{raveendran_finite_2022}.
For more detail on the construction of codes via the lifted product, we refer the reader to~\secref{sec:app_code_constructions} in the appendix.

In~\sfigref{fig:decoder_comparisons}{a}, we compare the performance of SSMSA and ATD in the absence of OSD post-processing, i.e., only BP is used for ATD (ATD+BP) and the original SSMSA algorithm is used.
Following the methodology of Ref.~\cite{raveendran_soft_2022}, we fix the data error rate of the unbiased depolarizing noise model to be $p = 0.05$ and vary the syndrome noise. The strength of the syndrome noise is characterized by the standard deviation $\sigma$ of the Gaussian noise channel as defined in~\secref{sec:syndr-noise-model}.
From~\sfigref{fig:decoder_comparisons}{a}, it can be seen that ATD+BP always outperforms SSMSA if the logical error rate is limited by the data noise.
We examined different cutoff values for the SSMSA implementation and found that a large value of $\Gamma$ yields the best performance for the code and noise parameter settings considered, but that there is no significant performance difference between $\Gamma = 100$ and values $\Gamma > 100$.
Additional discussions on different decoder parameterizations are presented in~\appref{app:additional_results}.
In particular, setting the cutoff for SSMSA to $\Gamma=0$ (meaning that the decoder does not take the analog syndrome information into account) clearly leads to higher logical sub-threshold error rates.

In \sfigref{fig:decoder_comparisons}{b}, the results of the same simulation setup as in \sfigref{fig:decoder_comparisons}{a} but using OSD post-processing for SSMSA and ATD are shown.
We observe that OSD always improves the performance of ATD. However, OSD leads to a reduced threshold when combined with SSMSA.
Furthermore, we observe higher logical error rates for SSMSA-OSD when $\sigma > 0.35$, as well as the value of the cutoff $\Gamma$ becoming less relevant to the decoder performance.
This illustrates the possible issues that can occur when combining SSMSA and post-processing without further modifications as discussed in the previous subsection.

\subsection{Comparison with SSMSA}

The ATD method and SSMSA are conceptually similar, but differ in their implementations. 
First, we focus on the SSMSA cutoff parameter $\Gamma$, an input parameter that the decoding performance explicitly depends on.
The two extreme values of $\Gamma$ correspond to $\Gamma = 0$ and $\Gamma = \infty$.
For the former case, the syndrome is ``trusted'' and the analog information is never taken into account. 
In this scenario, the virtual updates are ignored.
In the case where $\Gamma = \infty$, the analog information is always taken into account. 
The performance of the SSMSA algorithm is therefore related to the choice of $\Gamma$. 
In contrast, ATD does not have a cutoff parameter: the soft information is incorporated directly into the decoding graph and the amount of ``trust'' assigned to the syndrome value is determined through the standard MSA update rules.

Secondly, it is nontrivial to adapt SSMSA to more general Tanner graph constructions that go beyond the case of single-shot decoding. 
For example, SSMSA does not extend directly to the case of time-domain decoding or single-shot decoding with meta-checks~\cite{quintavalle_single-shot_2021, higgott_improved_2023}. By comparison, the ATD decoder works out of the box in both of these settings as it considers a different factor graph for decoding.

Finally, when the magnitude of the analog information is below the cutoff, $\Gamma$, the SSMSA update rules can overwrite the original analog syndrome information, (cf. Line~\ref{line:ssmsa-overwrite} in~\Cref{alg:ssmsa}). 
The remaining SSMSA decoding iterations then no longer have access to the initial anolog syndrome information.
In contrast, this scenario does not occur in ATD, because the analog syndrome information is stored explicitly in the nodes of the factor graph and the initial values are used throughout the algorithm by definition of the update rules.
Hence, the initial analog syndrome information remains accessible in all ATD decoding iterations. 
As such, we argue that ATD is better positioned to make full use of the anolog syndrome information as the information propagates through the decoding graph explicitly.

\subsection{Single-shot decoding with metachecks}\label{sec:ss}

The three-dimensional surface code is defined by three parity-check matrices: $H_X$, $H_Z$, and $M_X$. 
The first two are the standard CSS code matrices that define the \qgate{X}- and \qgate{Z}- stabilizers. 
The matrix $M_X$ defines a so-called \textit{metacode} $\calM_X$, which is designed to provide protection against noise in the \qgate{Z}-syndromes (i.e., \qgate{X}-checks). 
More precisely, the metacode is defined so that all valid (non-errored) \qgate{Z}-syndromes are in its code space. i.e.,  $M_X \cdot s_Z = 0$, when $s_Z=H_X\cdot e_Z$ for all $e_Z \in \mathbb{F}_2^n$.

In Ref.~\cite{quintavalle_single-shot_2021}, Quintavalle~\etal defined the \textit{two-stage} single-shot decoder for the three-dimensional surface codes and related homological product codes. 
The steps of the two-stage single-shot decoding protocol are as follows:

\begin{itemize}
    \item \textbf{Stage 1, syndrome repair:} measure the noisy syndrome $s=s_Z + s_e$, where $s_Z$ is the perfect syndrome and $s_e$ is the syndrome error. 
    Solve the metacode decoding problem: given the metasyndrome \mbox{$s_M=M_X \cdot s$}, find an estimate $s'$ for the noiseless syndrome.
    If the metadecoding is successful, we obtain a corrected syndrome $s_C=s+s'$ satisfying $M_X\cdot s_C = 0$. 
    \item \textbf{Stage 2, main code decoding:} use $s_C$ to obtain the decoding estimate $e'_Z$ by solving $s_C= H_X\cdot e_Z$. 
\end{itemize}
Quintavalle~\etal demonstrated that three-dimensional surface codes can be decoded using an implementation of the two-stage protocol where the first stage uses a \emph{minimum-weight perfect matching} (MWPM) decoder and the second stage a BP+OSD decoder.

A problem with the two-stage single-shot decoder is that the syndrome repair stage is independent of the main decoding stage. 
As a result, syndrome repair is always prioritized over applying corrections to the data qubits, even in situations where it would be more efficient to apply a combined correction. 
Furthermore, in certain circumstances, the two-stage decoder is subject to a failure mode whereby the syndrome repair step results in an invalid ``corrected'' syndrome $s_C$ that is not in the image of the parity-check matrix $H_\qgate{X}$, s.t. $s_C \notin \Im{H_\qgate{X}}$. 
Quintavalle~\etal proposed a subroutine to handle such failure modes, but this adds to the computational run-time of the protocol \cite{quintavalle_single-shot_2021}.

Recently, Higgott and Breuckmann~\cite{higgott_improved_2023} proposed a \emph{single-stage} decoding protocol for single-shot QLDPC codes that solves the problems mentioned earlier with the two-stage decoder. 
In the following we drop subscripts for simplicity and let $H$ be a check matrix, and $M$ be the corresponding metacheck matrix.
The single-stage approach considers the decoding problem of given a noisy syndrome $s = He+s_e$ and meta syndrome $s_M = Ms$, to find a minimum-weight estimate $(e', s'_e)$, such that $H^{M} \cdot (e',s'_e)^T = (s, s_M)^T$, where
\begin{equation}
    \label{eq:single_stage_meta_check_code}
    H^{M}:= \left(\begin{matrix}
        H & \Id_m \\
        0 & M
    \end{matrix}\right),
\end{equation}
is called the \emph{single-stage parity-check matrix}.
The single-stage parity matrix combines both the decoding of the data errors $e$ and the syndrome errors $s_e$. This ensures that the decoder uses the full information available to it, in contrast to the two-stage decoder that processes the meta-syndrome and syndrome separately. Furthermore, note that the first block-row of Eq.~(\ref{eq:single_stage_meta_check_code}) is full rank. This property guarantees that all solutions are valid, meaning that the single-stage decoder does not suffer from the failure mode that arises in the two-stage decoding approach.

\subsection{Analog single-shot decoding}\label{sec:atd-ss}

We now discuss how ATD can be applied to improve single-stage single-shot decoding and explore connections between single-stage parity-check matrices and the ATG construction. 
To begin, we first note that the single-stage parity-check matrix of Eq.~\eqref{eq:single_stage_meta_check_code} is conceptually similar to the analog parity-check matrix of Eq.~\eqref{eq:atg}: the single-stage parity-check matrix simply introduces a particular set of constraints described by the metacode $M_X$. 
Note that, however, as these constraints are linear combinations of the top full-rank block $(H \mid \Id_m)$ of $H^M$. 
Hence $H^M$ can be understood as making the implicit meta constraints explicit in the factor graph~\cite{higgott_improved_2023-1}.
\begin{figure}[b]
    \centering
    \includegraphics{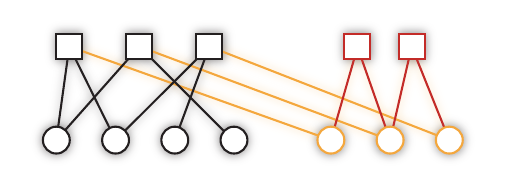}
    \caption{Sketch of the analog single-stage Tanner graph based on $H^{M}$.
    The black checks, bits, and edges correspond to the Tanner graph of the linear code $H$.
    The orange nodes are due to the identity $\Id_m$ in Eq.~\eqref{eq:single_stage_meta_check_code} and are initialized with the analog information of the syndrome readout. 
    The red nodes are due to the metacode, i.e., the metacheck matrix $M$.
    }
    \label{fig:ss-tg}
\end{figure}
An example of a single-stage Tanner graph is depicted in~\figref{fig:ss-tg}. 
From this sketch, it is evident that this construction corresponds exactly to the structure of the ATG together with additional (meta) check nodes as defined by the metacode $\calM_X$.
If we incorporate the metacode directly in the construction of the ATG, (similar to Ref.~\cite{higgott_improved_2023}), we construct the \emph{single-stage ATG}, which can be used to decode single-shot codes using analog information.

Let us briefly review some details of the overall decoding procedure.
Consider an analog single-stage data syndrome $s\in \R^{r_z}$. 
To initialize the ATD, we threshold $s$ to a binary syndrome $s_b\in \set{0,1}^{r_z}$, which we use to decode. 
Additionally, we use the analog syndrome values of $s$ to initialize the virtual nodes of the ATG, as in standard ATD. 
Standard decoding algorithms such as BP+OSD can then be applied to the resulting factor graph to obtain a decoding estimate. 

\begin{figure*}[tbh]
    \centering\includegraphics{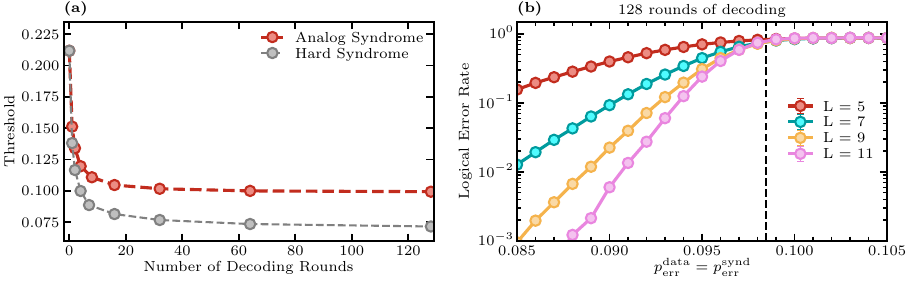}   
    \caption{Performance of single-shot analog Tanner graph decoding (ATD) for 3D toric codes.
    \textbf{(a)} Sustainable threshold of the single-shot side of 3D toric codes using ATD. 
    The analog information increases the sustainable threshold of current state-of-the-art methods~\cite{higgott_improved_2023} by almost $3 \%$.
    The results labeled ``hard syndrome'' are due to Higgott and Breuckmann~\cite{higgott_improved_2023}, while the ``analog syndrome'' results are obtained with our ATD method.
    \textbf{(b)} Example of threshold determination after 128 rounds of decoding using toric codes with lattice size $L \in \{5, 7, 9, 11\}$.
    }
    \label{fig:sus-th-atd-oscar-comparison}
\end{figure*}

In order to investigate the decoding performance of single-stage single-shot ATD decoding, we conduct sustainable threshold simulations for a concatenated bosonic 3D toric code using a phenomenological noise model. 
Single-shot error correction will generally leave some residual error after the correction. 
The goal is to suppress the accumulation of this residual error to the point at which it can be corrected in subsequent rounds.
To this end, we define the \emph{sustainable threshold} for a single-shot code as the physical error rate, $p_{\rm sus-th}$, below which the residual error remains constant and the quantum information can be stored indefinitely by increasing the distance of the code. 
More precisely, the sustainable threshold is defined as
\begin{equation}
    p_{\rm {sus-th}} := \lim_{R \to \infty} p_{\rm th}(R),
\end{equation}
where $p_{\rm th}(R)$ is the threshold for $R$ rounds of noisy stabilizer measurements. 

To estimate $p_{\rm sus-th}$ numerically for the 3D toric code, we estimate $p_{\rm th}(R)$ for increasing values of $R \in \set{0, 1, 2,4,8,\dots, 128}$ until the value $p_{\rm th}(R)$ is constant, i.e., until the threshold does not decrease when increasing the number of rounds $R$.
Our simulation results are shown in \sfigref{fig:sus-th-atd-oscar-comparison}{a}. 
We see that under single-stage ATD decoding, the 3D toric code has a sustained threshold of $9.9\%$. This improves by $2.8\%$ on the sustained threshold of $~7.1\%$ obtained by Higgott and Breuckmann for the same family of 3D toric codes, but using DV qubits and hard syndromes  \cite{higgott_improved_2023}. 
The improved sustainable threshold we observe highlights the benefits of considering analog information in decoding. ~\sfigref{fig:sus-th-atd-oscar-comparison}{b} shows the threshold of the $9.9\%$ for concatenated bosonic 3D toric codes of size $L=\{5,7,9,11\}$ after $128$ noisy rounds of decoding.

We note that the sustainable thresholds we have found could be further optimized by fine-tuning the parameters of the BP+OSD decoder.
However, rather than showing optimal improvements, the goal of this experiment is to indicate the improvements that become possible by considering analog readout information.
A more complete study would additionally compare below-threshold error rates, which are on this scale more relevant than the improvement of the threshold.
We will leave this question open for future work, including a more detailed simulation including circuit-level noise models.

Finally, we note that the check matrix defined in Eq.~\eqref{eq:single_stage_meta_check_code} cannot be used in the SSMSA algorithm without significant modifications.
The reason for this is that syndrome errors are already included explicitly in Eq.~\eqref{eq:single_stage_meta_check_code}, which can a priori not be handled by SSMSA, since for SSMSA the analog information is an input parameter and the algorithm operates on the standard parity-check matrix of the code.
Moreover, the metachecks do not correspond to physical measurements, i.e.,  they do not have analog syndrome information.

\section{Quasi-single shot codes}\label{sec:qss-codes}
Even in the presence of strong noise bias, as in the case of cat-LDPC architectures, both error species, \qgate{X} and \qgate{Z} noise, must be corrected to achieve good overall logical fidelity~\cite{chamberland_building_2022}. 
When implementing one-sided single-shot codes, such as the three-dimensional surface code, we cannot rely solely on the single-shot side of the code to correct errors.

To decode a quantum code that does not have the single-shot property, multiple rounds of (noisy) syndrome measurements must be performed so that the decoder infers the presence of measurement noise ~\cite{dennis_topological_2002, delfosse_beyond_2022, shor_fault-tolerant_1996, fowler_high-threshold_2009}.
Usually, this process is referred to as \emph{repeated measurements} or \emph{time-domain decoding}, since repeating stabilizer measurements adds an additional time dimension when considering the decoding instance. 
In this section, we investigate the decoding of quantum codes under phenomenological noise with repeated measurements in the presence of analog information. 
To this end, in~\secref{sec:atd-ow}, we discuss \emph{overlapping window} decoding~\cite{dennis_topological_2002} that we
generalize to decode QLDPC codes over time in the presence of analog information. 
Moreover, we elaborate on the relation between the overlapping window method and the ATG construction proposed in~\secref{sec:atd}. 

Motivated by the structure of 3D bosonic-LDPC code architectures, we further propose a novel decoding protocol that we call \emph{$w$-quasi single-shot codes} ($w$-QSS codes). 
The main idea is to leverage noise bias and analog syndrome information (provided by bosonic-LDPC codes) to demonstrate that only a small number $w \ll d$ of repeated syndrome measurement cycles suffices for the non-single-shot check side to give an overall decoding protocol with high logical fidelity for error rates sufficiently below threshold. 

The central result of this section is that we demonstrate numerically that for the cat-3DSC with reasonable code sizes ($L=11$), the $w$-QSS protocol achieves a threshold of $\approx 1.5\%$ for the non-single-shot side under phenomenological noise and that in the sub-threshold regime, $w=3$ suffices to match the decoding performance of time-domain decoding with $\propto d$ repeated measurements.

\subsection{Analog overlapping window decoding for QLDPC codes}\label{sec:atd-ow}
To decode an $[[n,k,d]]$-quantum code under phenomenological noise, i.e., when the syndrome measurements are noisy, we need to repeat the measurements several times (usually at least $d$ times) in order to handle the noisy syndromes~\cite{dennis_topological_2002, delfosse_beyond_2022}. 
Generally, this leads to the dimension of the decoding instance being increased by one, as the time dimension is now also being considered. Hence, we refer to this decoding problem as \emph{time-domain decoding}.
For example, the decoding problem of a repetition code under phenomenological noise leads to a two-dimensional decoding problem, analogous to the two-dimensional surface code under bit-flip noise.
Similarly, the decoding of the two-dimensional surface code over time leads to a three-dimensional problem, analogous to decoding the three-dimensional surface codes 
under bit-flip noise. 
We make the connection between the proposed construction of ATG, the decoding of quantum codes over time, and the tensor product chain complexes explicitly and formally argue that they are equivalent (cf.~\Cref{prop:informal}).

\begin{figure}[!b]
    \centering
    \includegraphics{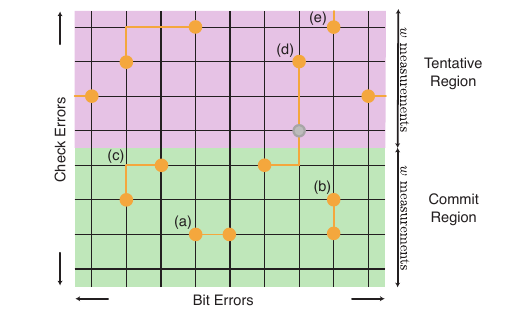}
    \caption{Sketch of the overlapping window decoding method of a repetition code as proposed in Ref.~\cite{dennis_topological_2002}. 
    \textbf{(a)} A space-like single data qubit error, \textbf{(b)} A time-like error, \textbf{(c)} A space-time error, \textbf{(d)} A space-time error that extends across the region boundary, \textbf{(e)} A time-like error that extends into the next decoding round.
    Note that the error \textbf{(d)} will only be partially corrected as it extends over the boundary of the commit region. 
    The inferred correction will imply that all defects in the commit region are removed, but will introduce a new defect in the tentative region shown as a gray dot.
    Defects created in this way are referred to as \emph{virtual defects}~\cite{skoric_parallel_2022} in the literature.
    An error purely residing in the tentative region will be considered during decoding to infer a decoding estimate matching the syndrome, but will not be corrected in the same round.
    }
    \label{fig:overlapping-window-repcode}
\end{figure}

\emph{Overlapping window decoding} (OWD) was originally introduced by Dennis~\etal in Ref.~\cite{dennis_topological_2002} to decode quantum surface codes over (finite) time. 
In OWD we divide the collected syndrome history into $w$-sized regions, the first two of which are sketched in~\figref{fig:overlapping-window-repcode}.
At any instance, the decoder computes the correction for two regions, whereas the older one is called ``commit'' and the newer one ``tentative'', $R_c, R_\tau$, respectively.
A window encompasses a total of $w$ noisy syndrome measurements. 
For each decoding round, the syndrome data of $2w$ rounds, i.e., two regions, is used to find a recovery operation by applying a decoder.
However, only the correction for the first $w$ rounds, i.e.,  for region $R_c$ is applied (by projecting onto the final time step of $R_c$ and applying the recovery accordingly). 
After applying the recovery, the region $R_c$ can be discarded, and only $R_\tau$ is kept. 
Then, in the next decoding round, the same procedure is repeated using the previous $R_\tau$ as the new commit region and the next $w$ rounds as the new temporary region, which is conceptually equivalent to ``sliding up'' the $2w$-sized decoding window one step.

Dennis~\etal argue for the two-dimensional surface code that the number of time steps $w$ should be chosen proportionally to the distance of the code, $w \gg d$, to ensure that the probability of introducing a logical operator is kept small.
This method is needed to simulate memory experiments over (a finite amount of) time, since the last, perfect round of measurements (corresponding to data qubit readout) may artificially increase the observed threshold leading to the fact that doing fewer repetitions always performs better. This aspect has also been observed in Ref.~\cite{pattison_improved_2021}.
Additionally, overlapping window decoding also corresponds to how a fault-tolerant quantum computer will likely be decoded in practice~\cite{skoric_parallel_2022}.

Let us at this point fix some terminology: the window size $w$ is the number of syndrome measurement records in a single window, i.e., the total size of the two regions is given by $|R_c| + |R_\tau| = 2 w$.
A single round of decoding takes $2 w$ noisy syndrome measurements into account, however, due to the sliding nature of the decoding window, $n$ rounds of decoding correspond to $(n+1) w$ syndrome measurements.
We refer to time-domain decoding, where the number of repetitions is proportional to the distance of the code as \emph{standard decoding}.
To apply this method to arbitrary QLDPC codes, we first propose the construction of the \emph{3D analog Tanner graph}, i.e., the decoding graph over time for time-domain decoding. 

\begin{figure}[t]
    \centering
    \includegraphics{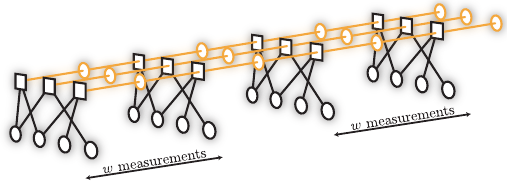}
    \caption{Overlapping window decoding for QLDPC codes (arbitrary Tanner graphs).
    The individual Tanner graphs of a code described by the parity-check matrix $H$ are ``glued'' together by the orange syndrome nodes that correspond to time-like (measurement) errors.
    A single decoding round with window size $w$ involves $2 w$ measurements, but only corrections in the first $w$ are committed, see also \figref{fig:overlapping-window-repcode}. }
    \label{fig:overlapping-window-ldpc}
\end{figure}

Given the Tanner graph of a QLDPC code $\calT(\calC)$, we first create copies of $\calT$ and introduce an additional set of bit nodes between pairs of checks between consecutive copies of $\cal{T}$, and an additional set of bit nodes for the last copy of $\calT$. 
These correspond exactly to time-like errors on syndrome nodes. The construction is sketched in~\figref{fig:overlapping-window-ldpc}.
Algebraically, the multi-round parity-check matrix $\tilde{H}$, i.e., the incidence matrix of the multi-round Tanner graph can be defined as follows.
\begin{definition}[Multi-round 
parity-check matrix and Tanner graph]\label{def:mr-pcm}
    Given an $m\times n$ parity-check matrix $H$, the $r-$multi-round parity-check matrix is defined as
    \begin{align}\label{eq:multi-round-pcm}
        \tilde{H} & := 
            \left(
                \begin{matrix}
                    H_{\rm diag} \mid \Id_{\rm sdiag}
                \end{matrix}
            \right) \\ 
            &=
        \left(
            \begin{matrix}
                    H & 0 & 0 & \dots & \Id_{m} & \\
                     & H & 0 & \dots & \Id_{m} & \Id_{m} & \\
                     &  & H & 0 & \dots & \Id_{m} & \Id_{m} & \\
                    & & &\ddots  & & \ddots \\
                     &  &  & H &  0 & 0 & & \Id_{m}  & \Id_{m}
            \end{matrix} 
        \right),
    \end{align}
    where $H_{\rm diag}$ is a block-diagonal matrix with $r$ diagonal block entries, and $\Id_{\rm sdiag}$ has a step-diagonal form, consisting of $m\times m$ identity matrices.
\end{definition}

Hence, $\tilde{H}$ is an LDPC code whose parity-check matrix consists of copies of $H$ with additional bit nodes between pairs of checks. 
It is easy to see that this code is LDPC (w.r.t.~the number of repetitions) since the vertex degree for each check is increased by (at most) 2, independent of the number of repetitions.
As an example, consider the multi-round (analog) Tanner graph and the corresponding multi-round check matrix depicted in~\figref{fig:fig1}{f}, which corresponds to an instance of a multi-round parity-check matrix (and Tanner graph) for a repetition code over two rounds. 

We show that the construction of the multi-round Tanner graph for $r$ rounds of syndrome measurement can be described as the tensor product chain complex of the chain complex corresponding to the QLDPC code $\calC{}$ and the chain complex of (a slight variant of) the $r$-repetition code $\calR$:

\begin{proposition}[Informal statement]\label{prop:informal}
    The $r$-multi-round parity-check matrix $\tilde{H}$ of $\calC$ is equivalent to the check matrix of the tensor product code $\mathcal{R} \otimes \calC$.
\end{proposition}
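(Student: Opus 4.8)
The plan is to realize both codes as two-term chain complexes, form the tensor product chain complex as in the single-shot constructions of Ref.~\cite{quintavalle_single-shot_2021}, and read off the degree-one boundary map. Concretely, I would view the (single check side of the) QLDPC code $\calC$ as the complex $C_1 \xrightarrow{H} C_0$ concentrated in degrees $1$ and $0$, with $C_1 = \Ft^n$ indexed by qubits and $C_0 = \Ft^m$ indexed by checks. The ``slight variant'' of the $r$-repetition code $\calR$ I would take to be the complex $R_1 \xrightarrow{L} R_0$ with $R_1 = R_0 = \Ft^r$ and $L \in \Ft^{r\times r}$ the square lower-bidiagonal matrix carrying ones on its diagonal and subdiagonal. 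The squareness of $L$ (as opposed to the usual $(r-1)\times r$ repetition check) is exactly the ``variant'': the extra row encodes the open time-boundary, i.e.\ the measurement error of the final round, which participates in a single check only.

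Next I would invoke the definition of the tensor product chain complex. For two-term complexes, $\calR \otimes \calC$ is the three-term complex with $(\calR\otimes\calC)_2 = R_1\otimes C_1$, $(\calR\otimes\calC)_1 = (R_0\otimes C_1)\oplus(R_1\otimes C_0)$, and $(\calR\otimes\calC)_0 = R_0\otimes C_0$. Using $\partial(r\otimes c)=\partial_R r\otimes c + r\otimes\partial_C c$ and that over $\Ft$ signs vanish, its degree-one boundary map is
\[
    \partial_1 = \bigl( \Id_{R_0}\otimes H \;\mid\; L \otimes \Id_{C_0} \bigr) = \bigl( \Id_r \otimes H \;\mid\; L \otimes \Id_m \bigr).
\]
The first block acts on $R_0 \otimes C_1 \cong \Ft^{rn}$ (the per-round data errors, since $r\in R_0$ is a cycle and $\partial_C c=Hc$) and the second on $R_1\otimes C_0 \cong \Ft^{rm}$ (the per-round measurement errors, since $c\in C_0$ is a cycle and $\partial_R r=Lr$). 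I would then verify the two identifications separately. By the Kronecker-product identity, $\Id_r\otimes H$ is block-diagonal with $r$ copies of $H$, which is precisely $H_{\rm diag}$ in Definition~\ref{def:mr-pcm}. Likewise $L\otimes\Id_m$ replaces each nonzero entry of $L$ by an $m\times m$ identity block, so its diagonal and subdiagonal blocks equal $\Id_m$ and all others vanish; this is exactly the step-diagonal array $\Id_{\rm sdiag}$, with column block $t$ (the slot of the $t$-th round's measurement error) meeting check blocks $t$ and $t+1$.

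Since both pieces match blockwise, $\partial_1$ and $\tilde H$ of Eq.~\eqref{eq:multi-round-pcm} agree up to the relabeling that reconciles the lexicographic ordering of the tensor-product basis (index $\times$ round) with the round-major block ordering used in the definition. As a permutation of rows and columns is a code equivalence, this establishes~\Cref{prop:informal}. I expect the only genuine subtlety to be pinning down the correct variant $\calR$: one must take $L$ square with the open boundary so that the last round's measurement error appears in exactly one check, and one must track the permutation bookkeeping so that $\Id_r\otimes H$ and $L\otimes\Id_m$ land in the block pattern of $\tilde H$ rather than its transpose or an interleaved form. As a consistency check and a bridge to~\secref{sec:ss}, I would note that the degree-two map $\partial_2:R_1\otimes C_1\to(\calR\otimes\calC)_1$ supplies the metacheck relations (automatically $\partial_1\partial_2=0$), so the same tensor product simultaneously reproduces the single-shot metacode structure.
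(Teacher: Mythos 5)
Your proposal is correct and takes essentially the same route as the paper's own proof: both realize the check side of $\calC$ as $C_1 \xrightarrow{H} C_0$, take the same square lower-bidiagonal variant of the repetition matrix (the paper's $R$ is exactly your $L$, cf.\ Eq.~\eqref{eq:4-rep-code}), form the tensor product complex, and identify $\bdry_1 = \left( \Id_r \otimes H \mid L \otimes \Id_m \right)$ blockwise with $H_{\rm diag}$ and $\Id_{\rm sdiag}$ via the Kronecker structure. The only cosmetic difference is that the paper phrases the basis-matching step in the language of cellular chain complexes and Cartesian products of graphs to tie $\bdry_1$ to the multi-round Tanner graph, whereas you handle the same bookkeeping as an explicit basis-ordering permutation.
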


The proof is elementary and follows from the chain complex tensor product; we refer the reader to \appref{ssec:proofs} for more details.
Note that this highlights that the construction is equivalent to ``stacking'' the ATG constructed from $H$ and connecting the virtual nodes between pairs of checks.
This gives a straightforward correspondence between ATGs and phenomenological check-matrices and allows us to directly apply ATD to $\tilde{H}$.

The overall procedure for multi-round decoding with analog information can be summarized as follows.
First, from the parity-check matrix $H$ build the $r$-multi-round check matrix $\tilde{H}$ (corresponding to the multi-round analog Tanner graph).
Second, use the virtual nodes (corresponding to time-like data nodes in standard models) to incorporate analog syndrome information.
And finally, apply the ATD decoder on $\tilde{H}$.

Note that recently and independently, BP+OSD has been used to decode (single-shot) LDPC codes under a circuit-level noise model~\cite{bravyi_high-threshold_2023, xu_constant-overhead_2023}.
The proposed overlapping window approach we outline in this work differs in that it also considers analog information in the decoding. Additionally, our methods apply to QLDPC codes in general and do not rely upon codes with a special code structure.
Furthermore, recently and independently in Ref.~\cite{kuo_correcting_2023}, the authors considered decoding of LDPC codes under discrete phenomenological noise using a check-matrix construction equivalent to our definition of the multi-round parity-check matrix. 
However, they do not employ overlapping window decoding.

\subsection{Quasi single-shot decoding}\label{sec:qss-decoding}

In the previous section, we discussed how ATD can be used together with multi-round (analog) parity-check matrices to decode under phenomenological noise with analog information. 
In this section, we propose a novel protocol based on these techniques that lowers the overhead induced by repeated measurements. 

In the $w$-QSS protocol, we assume a QLDPC code where one side is single-shot and the other side is not---inducing the need for at least $d$ repeated measurements (or a number of repeated measurements proportional to the distance of the code) in the presence of noisy syndromes, where $d$ is the distance of the non-single-shot side. This applies, for instance, to three-dimensional surface codes, which we use as representatives in the following. 
The single-shot side of the code can be decoded using analog single-stage decoding discussed in~\secref{sec:atd-ss}.
Complementarily, a slight generalization of the analog \emph{overlapping window decoding}  (OWD) as described in~\secref{sec:atd-ow} is used to decode the non-single-shot side of the code.
The overall protocol is straightforward:

\begin{enumerate}
    \item Choose a $w \ll d$. 
    Intuitively, $w$ controls the number of noisy syndrome measurements to conduct for the non-single-shot side. 
    i.e.,  $w$ is the window size for overlapping window decoding.
    \item On the single-shot side of the code, we apply the usual analog single-shot decoding procedure, where in each time step we do a syndrome measurement and infer a recovery operation.
    \item For the non-single-shot side, we do $w$ time steps of repeated measurements and then decode (using analog OWD). 
\end{enumerate}

Without analog information, by restricting the number of syndrome measurements to $w$ the \emph{effective distance} of the code is lowered to $w$. 
For instance, consider a code that has $d_X=4, d_Z=16$ where the \qgate{X}-side is single-shot. 
Then, setting $w=4$ effectively reduces $d_Z$ to $4$ because, in the time dimension, logical errors can be of weight $4$ only. 
This is apparent when considering the multi-round Tanner graph used for decoding over time as a tensor product with a repetition code. 
If we have $d_Z=16$ and do $16$ rounds of noisy syndrome measurements, the repetition code protecting the system from ``timelike'' errors has distance $16$, but when restricting to $w$ repetitions, we essentially cut the repetition code in time to a shorter version, thereby lowering the distance along the time dimension.

To investigate the performance of the proposed protocol numerically, we simulate three-dimensional surface codes under phenomenological (cat) noise for lattice sizes $L = 5$ to $L = 11$ and compare different choices of $w$ versus the standard approach of taking a number of repeated measurements proportional to the distance. 
Note that, to conduct numerical simulations for repeated measurements we need to conduct multiple rounds of decoding to avoid overestimation of the threshold~\cite{pattison_improved_2021} (cf.~\secref{sec:atd-ow}). 
Since the non-single-shot side of this code can be decoded with matching-based algorithms, we use
PyMatching~\cite{higgott_pymatching_2022, higgott_sparse_2023} for decoding. 
The threshold behavior and the sub-threshold scaling for 32 decoding rounds are shown in~\figref{fig:qss-3dtoric}.
We discuss the decay of logical fidelity later on in \figref{fig:qss_logical_decay}. 
The main findings are as follows.
\begin{itemize}
    \item For $w=2$ we observe an increase in logical error rate for $L=11$ for error rates around the threshold, however, the sub-threshold suppression for lower error rates performs as for the standard time-domain decoding as shown in~\sfigref{fig:qss-3dtoric}{b}.
    \item $w=3$ is enough to match the results of standard time-domain decoding (for the investigated code sizes and error rates).
    \item For $w=1$ the protocol does not work.
\end{itemize}

\begin{figure*}[!]
    \centering
    \includegraphics{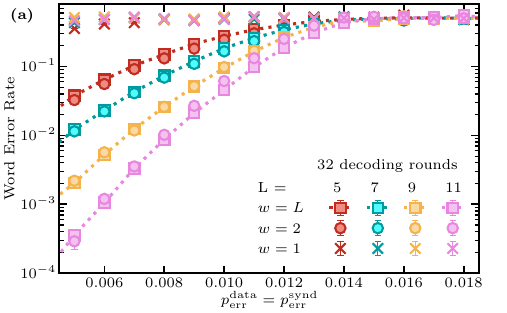}
    \includegraphics{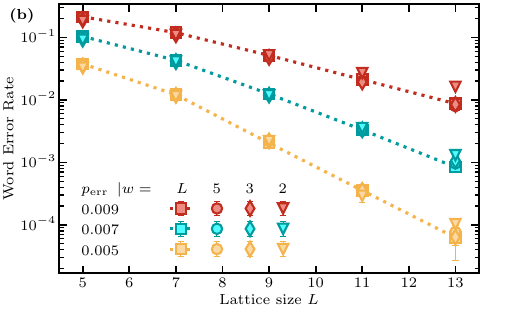}
    \caption{Performance of the $w$-quasi-single-shot protocol using the three-dimensional surface codes under phenomenological bit-flip noise with analog syndrome readout. The non-single-shot side is decoded using minimum-weight perfect matching.
    \textbf{(a)} Word error rate after 32 decoding rounds as a function of the phenomenological error rate for various window size values $w$. 
    \textbf{(b)} Below threshold scaling of the word error rate after 32 decoding rounds for various window size values $w$.
    The results suggest that it is sufficient to repeat the stabilizer measurement a finite number of times \emph{independent} of the code distance $L$ when incorporating analog information into the decoder.}
    \label{fig:qss-3dtoric}
\end{figure*}

As a by-product, we obtain a threshold of the non-single shot side of the 3DSC under phenomenological noise of \mbox{$\approx 1.66\%$}.
In \appref{app:additional_results} we also present 3DSC threshold estimates for the case where only the hard syndrome information is available. 
We find that the threshold is $\approx 1.26\%$.
To the best of our knowledge, these are the first numeric threshold estimates for the non-single shot side of the 3DSC.

Let us make some more detailed remarks on the results. 
First, the results indicate that $w=2$ for the $L=11$ code leads to a worse threshold, indicating a limitation of the protocol in this aspect. 
However, the sub-threshold scaling (which is actually relevant in practice) still shows that the $2$-QSS protocol provides sufficient error suppression while lowering the number of syndrome measurement rounds from $11$ to $2$, inducing a fraction of the time overhead to implement the overall QEC protocol.
Secondly, increasing the QSS window size $w$ by $1$ already significantly improves the achieved logical error rate.
With $w = 5$ we do not find statistically significant differences from $w = L$ for the code sizes considered.
Lastly, we note that even for much smaller error rates, we did not find a threshold for $w = 1$.

\subsection{Discussion}\label{sec:qss_discussion}
For a QLDPC code that requires time-domain decoding, the effective distance is proportional to the number of repeated syndrome measurements.
Thus, taking only a small number $w \ll d$ of syndrome measurements is equivalent to lowering the effective code distance (along the time dimension) to $w$.
However, we show numerically that for reasonable code sizes and choices of the QSS window size $w$, the logical error rate of the overall protocol is equivalent to the standard approach of performing (at least) $d$ repeated measurements due to the additional information acquired from the analog syndrome.

Although our numerical results suggest that $w$ can be chosen as a constant independent of the code size for sufficiently small physical error rates, it is reasonable to assume that for larger code sizes (i.e., in the limit $n\to \infty$), the logical error rate for the QSS protocol diverges from the logical error rate obtainable from standard time-domain decoding (i.e., with a number of rounds proportionally to the code size) and will possibly result in an error floor set by time-like errors.
However, this gap in error suppression between the QSS protocol and the standard protocol quickly diminishes if the physical error rate is sufficiently below the threshold, as can be seen by inspecting~\figref{fig:qss-3dtoric}{b}.
For example, for $p_{\rm err}=0.009$, $w=2$ gives a significantly higher word error rate than the larger choices of $w$, but for smaller error rates, e.g., $p_{\rm err}=0.005$, the discrepancy with standard time domain decoding is reduced. 
Thus, overall the main learning is that we observe that for sufficiently small physical error rates, i.e., error rates sufficiently below threshold, the $w$-QSS protocol can give logical error rates that are equivalent to standard time-domain decoding.

It would be interesting to investigate this aspect analytically for an LDPC code family.
For instance, it is reasonable to assume that depending on the LLRs (weights) used for decoding one can argue that if the weights are $w = O(1)$ using hard information decoding and are increased to $\ell w$ by using analog information, one can reduce the number of repetitions by a factor of $\ell$ without affecting the logical error rate significantly, i.e., obtain an $L/\ell$-QSS protocol.
We leave further analytic investigation of this manner open for future work.
Note that the investigated codes are already well beyond the capabilities of near- to mid-term hardware, and thus we argue that our numerical results are valid for ``practical sizes''.

It is crucial to note that for the QSS protocol, the noise-biased error model is vital.
The main reason is that in this setting, the bulk of the decoding is offloaded onto the single-shot component of the 3DSC, while the QSS protocol carries a much lighter load.
This is also important due to the asymmetric thresholds of the 3STC for pure bit- and phase-flip noise, of approximately $1.5\%$ and $10\%$, respectively, which leads to the fact that the overall threshold of the code $p_{\rm err}^{\rm th}$ is limited by bit-flip errors.
However, according to Eq.~\eqref{eq:bias_eq}, already a small bias of $\eta_Z \approx 10$ will distribute the error correction load equally, and any bias $\eta_Z \gg 10$ will result in an effective bit-flip error rate $p_{\qgate{X}}$ that is well below the threshold of the QSS protocol if the overall error rate is $p_{\rm err} < 10\%$.

Although we are currently limited to simulations with codes of size $L \leq 11$ and physical error rates $p_{\rm err}$ around the threshold (due to the impracticality of conducting numerical experiments with larger codes), we argue that codes of such size are already reasonable for practical relevance due to good error suppression on the single-shot side of the code~\cite{quintavalle_single-shot_2021}.
However, for a more quantitative analysis, circuit-level noise model simulations are required. We expect circuit-level simulations will decrease observed thresholds by a factor of $4-8\times$. This estimate is consistent with the numerical results of Pattison et al., where two-dimensional surface codes were decoded under analog circuit-level noise~\cite{pattison_improved_2021}. 
Pattison \etal also proposed a generalization of the standard circuit-level noise model to include analog measurements to simulate surface code decoding under realistic noise assumptions.
Since the generalization to circuit-level noise encompasses several non-trivial questions such as the design of syndrome extraction circuits and the derivation of exact cat qubit noise models, we leave this task open for future work. 
Nonetheless, we discuss several challenges in more detail in \secref{sec:architectures}.

To verify that the QSS protocol does not lead to a decrease of the logical error rate (and the threshold) for an increasing number of decoding rounds, we conducted sustained threshold simulations, i.e., threshold simulations for an increasing number of decoding rounds.
However, due to the saturation of the logical error rates in the numerical results, we are only able to obtain a lower bound on the threshold, which decreases with the number of decoding rounds. 
Therefore, we provide additional results, shown in~\figref{fig:qss_logical_decay}, which demonstrate that the decay of the decoding success (logical success rate) for a number of decoding rounds scales equivalently for the QSS protocol compared to the standard time-domain decoding, where the number of syndrome extraction rounds is $w=L$ (i.e., proportional to the distance) for physical error rates below the obtained threshold lower bounds. 
Despite the fact that this result constitutes a weaker statement than a sustained threshold estimate, it indicates that the performance of the QSS protocol is equivalent to standard time-domain decoding even for an increasing number of decoding rounds.

\begin{figure}[t]
    \centering    \includegraphics{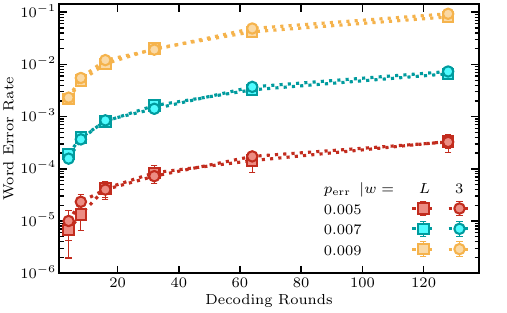}
    \caption{Comparison of the decrease of decoding success of the $3$-QSS decoding ($w=3$) versus the standard time-domain decoding ($w=L$) for an increasing number of decoding rounds and sub-threshold physical error rates using the $L=13$ three-dimensional surface codes under phenomenological noise.}
    \label{fig:qss_logical_decay}
\end{figure}

Moreover, we emphasize that while our simulations indicate that the QSS protocol works in a memory experiment, it is an open question whether this is also the case for a setting in which fault-tolerant logical gates are implemented. Another feature that could potentially affect decoding performance is the presence of so-called fragile boundaries, as discussed in Ref.~\cite{higgott_improved_2023-1}.

\section{Towards three-dimensional concatenated cat codes}\label{sec:architectures}
In this section, we discuss roads towards building a fault-tolerant quantum computer based upon stabilized cat qubits concatenated with the three-dimensional surface code.
To this end, we will elaborate on several open questions and challenges along that road.
We begin by recalling some properties of stabilized cat qubits.

\subsection{Stabilized cat codes}\label{sec:cats}

The cat code encodes logical (qubit) information within a two-dimensional subspace of the infinite-dimensional Hilbert space of a harmonic oscillator with Hilbert space ${\cal L}(\R^2)$.
This qubit subspace is represented by the span $\lbrace \ket{\alpha}, \ket{-\alpha} \rbrace$ of two quasi-orthogonal coherent state vectors $\ket{\pm \alpha}, \alpha \in \mathbb{C}$~\cite{walls_quantum_2008} (in the sense that for large values of $|\alpha|$, they approximate an orthogonal pair of state vectors arbitrarily well).
The non-orthogonality of this basis does not pose a problem for the definition of the orthogonal qubit space, and 
one defines the Hadamard-dual basis codewords $\ket{\pm}_{\qcode{cat}}$ as two-component Schr{\"o}dinger cat state vectors, i.e., 
\begin{equation}
    \label{eq:cat_dual_basis_states}
    \ket{\pm}_{\qcode{cat}} = \mathcal{N}_{\pm} (\ket{\alpha} \pm \ket{-\alpha}),
\end{equation}
which are orthogonal state vectors, and 
\begin{equation}
\mathcal{N}_{\pm}^2 := 1 / {(2 (1 \pm e^{-2\vert \alpha \vert^2}))}
\end{equation}
is the normalization factor\footnote{
Intuitively, one can recognize that these states are orthogonal by noting that $\ket{+}_{\qcode{cat}}$ is invariant under the exchange $\alpha \leftrightarrow - \alpha$ while $\ket{-}_{\qcode{cat}}$ obtains a global phase. This makes them $\pm 1$ eigenstates of the parity operator $\hat{\Pi} = \exp(i \pi \hat{a}^{\dagger} \hat{a})$, respectively, and thus orthogonal.}.
Then, the logical, computational state vectors are obtained as
\begin{align}
    \label{eq:cat_comp_states}
    \ket{0}_{\qcode{cat}} = \frac{1}{\sqrt{2}} (\ket{+}_{\qcode{cat}} + \ket{-}_{\qcode{cat}}) =\ket{+ \alpha} + O(e^{-2\vert \alpha \vert^2}), \\
    \ket{1}_{\qcode{cat}} = \frac{1}{\sqrt{2}} (\ket{+}_{\qcode{cat}} + \ket{-}_{\qcode{cat}}) = \ket{- \alpha} + O(e^{-2\vert \alpha \vert^2}),
\end{align}
and the approximations $\ket{0}_{\qcode{cat}} \approx \ket{\alpha}$ and $\ket{1}_{\qcode{cat}} \approx \ket{-\alpha}$ become arbitrarily 
accurate for $\vert \alpha \vert^2 \rightarrow \infty$.

The cat code space is not stable under noise channels that typically affect the physical realizations of harmonic oscillators---dominated by energy relaxation reflected by losses and dephasing. Thus, any logical information will eventually leak outside of the code space and will be unrecoverable.
However, through engineered interactions, it is possible to stabilize the code space through appropriate confinement schemes. 
While various different confinement schemes, such as 
Kerr stabilization~\cite{puri_engineering_2017}, dissipative stabilization~\cite{mirrahimi_dynamically_2014}, and combined methods~\cite{gautier_combined_2022}, exist, they share similar principles. 
First, to overcome energy relaxation, one actively pumps energy into the 
system through engineered (two-photon) drives.
Then, an actual ``confinement'' term is added that separates the cat qubit manifold from the rest of the energy spectrum.
To ensure a two-fold degenerate ground state of the system, these engineered interactions must be symmetric with respect to the substitution $\hat{a} \mapsto - \hat{a}$, where $\hat{a}$ is the bosonic annihilation operator satisfying the canonical commutation relation $\comm*{\hat{a}}{\hat{a}^{\dagger}} = \Id $~\cite{albert_lindbladians_2018}.

Stabilization through some confinement interaction ensures that if leakage occurs, the state will relax back to the code space.
As a result,  stabilized cat codes allow for arbitrary suppression of bit-flip noise under realistic oscillator noise models\footnote{
It should be emphasized that state-of-the-art experiments with dissipatively stabilized cat qubits cannot arbitrarily suppress bit-flip errors, however. The current understanding is that this is due to additional noise channels caused by auxiliary (few level) qubits, e.g., transmon qubits, used for control and readout~\cite{berdou_one_2023}.}~\cite{puri_bias-preserving_2020, guillaud_repetition_2019}, as we will illustrate in the case of single-photon losses below.
However, the confinement does not protect against logical cat qubit Z errors on the code space, which may occur directly through oscillator decoherence, such as phase-flips caused by energy relaxation, or indirectly if a noise channel leads to temporary leakage out of the code space, e.g., caused by thermal noise.

The time evolution of a single-mode quantum system undergoing single-photon loss is well described by the 
Lindblad master equation~\cite{lindblad_generators_1976, gorini_completely_1976},
\begin{equation}
    \label{eq:single_photon_dissipator}
    \frac{\partial}{\partial t} \hat{\rho} =  \kappa \mathcal{D}[\hat{a}] \hat{\rho} = \frac{\kappa}{2} \left(2 \hat{a} \hat{\rho} \hat{a}^{\dagger} - \hat{a}^{\dagger} \hat{a} \hat{\rho} - \hat{\rho} \hat{a}^{\dagger} \hat{a}  \right),
\end{equation}
where $\kappa>0$ is the single-photon loss rate and $\hat{\rho}$ is the density operator describing the state of the system. 
Here, the first term leads to quantum jumps, whereas the latter two terms generate a non-Hermitian evolution that leads to energy relaxation.
One can calculate the leading-order estimates for the cat qubit phase- and bit-flip error rates through the Knill-Laflamme conditions~\cite{nielsen_quantum_2010, knill_theory_1997} for the oscillator error $\hat{E}_1 \propto \sqrt{\kappa} \hat{a}$.
The task reduces to the following transition matrix elements
\begin{align}
    \kappa \vert \mel{+\alpha}{\hat{a}}{-\alpha} \vert^2 &= \kappa \vert \alpha \vert^2 e^{- 4 \vert \alpha \vert^2}, \\
    \kappa \vert \mel{+}{\hat{a}}{-}_{\qcode{cat}} \vert^2 &= \kappa \vert\alpha\vert^2 \tanh(\vert\alpha\vert^2) \approx \kappa \vert \alpha \vert^2,
\end{align}
which shows that the ratio of bit- to phase-flip errors is exponentially suppressed, i.e.,  $p_x / p_z \sim e^{-4 \vert \alpha \vert^2}$, yielding an effective biased-noise error channel for the outer code.
We note that exponential suppression (in $\vert \alpha\vert^2$) of bit-flip errors comes at the cost of linearly increasing the phase-flip rate.
Although this will limit the extent to which one can increase $\alpha$ before $p_z$ exceeds the threshold of the concatenated code, we emphasize that the recently introduced \emph{stabilized squeezed-cat qubit} allows one to suppress bit-flip errors without increasing the phase-flip errors~\cite{hillmann_quantum_2023}.

To benefit from a biased-noise error channel, it is important that the noise bias can be sustained even during gate operations. 
It has been shown that this is possible for stabilized cat qubits due to the complex-valued displacement amplitude $\alpha$, which contributes additional degrees of freedom and in this way allows the realization of the two-qubit \qgate{CNOT} gate in a bias-preserving way by performing (conditioned) rotations that exchange $\ket{\alpha}$ and $\ket{-\alpha}$. 
During this rotation, the bias is preserved, and a topological phase is added to the dual-basis codewords, i.e.,  $\ket{+}_{\qcode{cat}} \mapsto \ket{+}_{\qcode{cat}}$ and $\ket{-}_{\qcode{cat}} \mapsto \ket{-}_{\qcode{cat}}$~\cite{puri_bias-preserving_2020, guillaud_repetition_2019}.
In a circuit-level noise model the ratio of single photon losses and confinement rate is an important parameter that relates effective encoded cat qubit Pauli error rates to physical noise parameters, see, e.g., Refs.~\cite{guillaud_repetition_2019, chamberland_building_2022, darmawan_practical_2021} for more details.

Finally, performing a logical \qgate{Z} measurement can be done, for example, by performing a non-demolition cat quadrature readout~\cite{grimm_stabilization_2020}, which distinguishes the two coherent state vectors $\ket{+ \alpha}$ and $\ket{- \alpha}$, see also~\sfigref{fig:fig1}{b}.
Due to the finite variance of coherent states, such a measurement will be inherently imprecise because of their continuous 
distribution in quantum phase space.
However, one can incorporate this analog information into the decoding stages of the outer code, for 
example, by assigning higher error likelihoods to states that have measurement outcome $x_m \approx 0$.
Importantly, the \emph{resolvability} of the cat qubit computational state vectors $\ket{0 / 1}_{\qcode{cat}}$ is given by the overlap of the two states that scales as $\sim e^{-2 \vert \alpha \vert^2}$ and thus assignment errors become exponentially suppressed with the size of the stabilized cat qubit.

\subsection{Open questions}

We highlight that an immediate open question, independent from any experimental realization, is the verification of our decoding protocols in a more realistic noise model, i.e., in the presence of circuit-level noise and the determination of thresholds in these cases.
One might expect a reduction in threshold (roughly) proportional to the stabilizer weight, due to additional fault locations that occur in the syndrome extraction circuit, impacting the non-single shot side of the code more strongly than the single shot side, which have stabilizers of weights 6 and 4, respectively. 
However, this will not cause a fundamental issue, as the bias of the stabilized cat qubits can be tuned such that the code performance is effectively limited by the threshold of the single-shot code.
Regarding syndrome extraction, very recent work suggests that the ordering of operations in the syndrome extraction circuit does not affect the effective distance of the code, see Ref.~\cite{manes_distance-preserving_2023}.

Syndrome extraction based on cat qubits requires bias-preserving \qgate{CNOT} gates, which have not been demonstrated in experiments for stabilized cat qubits so far.
Therefore, currently, our estimates for achievable error rates with such gates rely upon theoretical models as proposed, for instance, in Ref.~\cite{darmawan_practical_2021} for Kerr-cats and Ref.~\cite{chamberland_building_2022} for dissipative cats.
These references also detail the implementation of all other required Clifford operations and Pauli measurements required for the two-dimensional surface code.
As there is no fundamental difference in the type of gates required for syndrome extraction in the three-dimensional case, we refer the interested reader to the aforementioned articles.

\subsection{Conceptional architecture}

One could imagine a possible hardware implementation in superconducting circuits as an extension of the proposals in Refs.~\cite{darmawan_practical_2021, chamberland_building_2022}, stacking the proposed two-dimensional layouts in a vertical direction in 
an alternating ABAB pattern as illustrated in~ \figref{fig:architecture_sketch}.
Vertical coupling between chips can be achieved through small form factor superconducting \emph{through-silicon-vias} (TSVs)~\cite{yost_solid-state_2020, mallek_fabrication_2021, grigoras_qubit-compatible_2022}.
Although state-of-the-art fabrication techniques currently do not achieve stacking of more than a few layers, the use of TSVs in superconducting circuits is a recent development that will likely mature rapidly in the future~\cite{hazard_characterization_2023}.
Conceptually, even only a few layers can yield a useful three-dimensional surface code when the noise bias is large enough. 
The reason is that for the rectangular cubic lattice of spatial extend $L_x, L_y,$ and $L_z$, the effective code distances $d_X$ and $d_Z$ are given by~\cite{quintavalle_single-shot_2021}
\begin{align}
    d_X &= \min \{L_x, L_y, L_z\}, \\
    d_Z &= \min \{L_x L_y, L_y L_z, L_z L_x\}.
\end{align}

\begin{figure}[t]
    \centering
    \includegraphics{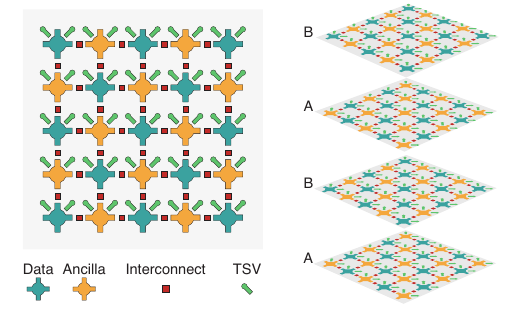}
    \caption{Sketch of the possible three-dimensional surface codes architecture. The grid on the left shows a single two-dimensional layer of data and auxiliary cat qubits arranged with nearest-neighbor connectivity achieved through in-plane interconnects that activate the interaction between the data and auxiliary qubits.
    \emph{Through-silicon-vias} (TSVs) connect multiple such layers together, one connecting to the layer above, the other to the layer below. The layers are stacked in an ABAB pattern (right), where the difference between A and B is that the placement of data and auxiliary qubits is interchanged.}
    \label{fig:architecture_sketch}
\end{figure}

\pagebreak

\section{Conclusion}\label{sec:conclusion}
Recently, there has been progress in the realization of bosonic codes that increase the lifetime of encoded quantum information. 
Additionally, the discovery of \emph{good} QLDPC codes~\cite{hastings_fiber_2021, breuckmann_balanced_2021, panteleev_asymptotically_2022, dinur_good_2022, leverrier_quantum_2022} motivates the development of decoding protocols for concatenated bosonic-LDPC codes.
These protocols should consider the analog information inherent to the measurement of continuous-variable quantum states.

In this article, we contribute to this task by presenting methods that feed the analog information obtained during bosonic syndrome measurements into belief propagation and matching decoders. 
In particular, we show how to decode analog syndromes for single-shot codes that are obtained from higher-dimensional hypergraph product constructions. We also consider codes that are not single-shot and thus require repeated stabilizer measurements over time in general.
We introduce \emph{analog Tanner graph decoding} as a way of naturally incorporating analog syndrome information directly into the decoding graph.

To support and numerically assess our decoding methods, we consider the three-dimensional surface code as a test case. Our simulations are performed using a phenomenological noise model inspired by bosonic cat code qubits.
We find that our analog Tanner graph decoding methods lead to a significantly enhanced sustainable single-shot threshold for the three-dimensional surface code.
Furthermore, we show that accounting for analog information from bosonic syndrome measurements can reduce the number of repetitions required for time-domain decoding. 
We demonstrate this explicitly by incorporating analog Tanner graph methods into an overlapping window decoder for the non-single-shot component of the three-dimensional surface code. 
For the case of the $L=13$ three-dimensional surface code, we show that it suffices to decode with a window size of $w=3$. 
This is a considerable reduction in the time overhead compared to the case of discrete syndrome decoding where the window size must be equal to the code distance, i.e., $w=13$. 
We argue that this renders the three-dimensional surface code \emph{w-quasi-single-shot}.

To further boost the development of concatenated bosonic-LDPC codes, we provide open-source software tools for all proposed techniques.
With these tools, we hope to emphasize the importance of open-source software and to inspire further research interest into concatenated bosonic codes.

We note that our numerical experiments are performed using a phenomenological noise model.
A natural follow-up to this work will be to further verify the potential of analog Tanner graph decoding under a more realistic circuit-level noise model and to investigate the QSS protocol and the use of analog information in general analytically using appropriate metrics~\cite{danjou_generalized_2021}.
Moreover, it is an interesting question as to whether other decoders for (3D) QLDPC codes, such as the recently introduced ``p-flip decoder''~\cite{scruby_local_2023} or the three-dimensional tensor network decoder~\cite{piveteau_tensor_2023}, can be modified to incorporate analog information into the decoding process. 
Finally, it would also be interesting to investigate the performance of analog Tanner graph decoding for other codes, such as three-dimensional subsystem codes~\cite{kubica_single-shot_2022, bridgeman_lifting_2023}.

This paper has focused on quantum memories. A remaining open problem concerns the questions as to whether analog information can be used to improve decoding performance during the implementation of fault tolerance logical gates, e.g., during lattice surgery. 
Such investigations will require a detailed analysis of the physical architecture used to realize the bosonic qubits. For instance, a necessary requirement will be that the qubits support bias-preserving two-qubit gates~\cite{puri_bias-preserving_2020}.

\begin{acknowledgments}
The authors thank Oscar Higgott, Nithin Raveendran, and Fernando Quijandría for valuable comments on the first draft of this manuscript and Armanda O. Quintavalle for fruitful discussions and comments.
We thank anonymous referees for numerous valuable comments on the first version of this manuscript. 
L.B. and R.W. acknowledge funding from the European Research Council (ERC) under the European Union’s Horizon 2020 research and innovation program (grant agreement No. 101001318) and MILLENION, grant agreement No. 101114305). This work was part of the Munich Quantum Valley, which is supported by the Bavarian state government with funds from the Hightech Agenda Bayern Plus, and has been supported by the BMWK on the basis of a decision by the German Bundestag through project QuaST, as well as by the BMK, BMDW, and the State of Upper Austria in the frame of the COMET program (managed by the FFG).
T.H. acknowledges the financial support from the Chalmers Excellence Initiative Nano and the Knut and Alice Wallenberg Foundation through the Wallenberg Centre for Quantum Technology (WACQT). 
J.R. and J.E.~are funded by BMBF (RealistiQ, QSolid), 
the DFG (CRC 183), 
the Quantum Flagship (Millenion, PasQuans2),
the Einstein Foundation (Einstein Research Unit on Quantum Devices), and
the Munich Quantum Valley (K-8). 
J.E.~is also funded by the European Research Council (ERC) within the project DebuQC.
L.B.~and T.H.~thank IBM for hosting the 2022 QEC summer school, where initial ideas for this project have been developed.

\end{acknowledgments}

\appendix

\section{$\Ft$-homology}\label{sec:app-homology}
CSS codes are equivalent to 3-term chain complexes of binary vector spaces. 
A chain complex of vector spaces $(C_{\sbullet}, \bdry_{\sbullet})$ is a sequence of vector spaces and linear maps
\begin{equation}\label{eq:chain-complex}
	(C_{\sbullet}, \bdry_{\sbullet}) = \dots C_{i+1} \xrightarrow{\bdry_{i+1}} C_i \xrightarrow{\bdry_{i}} C_{i-1} \xrightarrow{\bdry_{i-1}} \dots,
\end{equation}
with the property
\begin{equation}\label{eq:cc-condition}
	\bdry_{i}\bdry_{i+1} = 0, \forall i.
\end{equation}
The linear maps $\bdry_i$ are called \emph{boundary maps} or \emph{boundary operators}.
It is standard to define the spaces of \emph{cycles} $Z_i$ and \emph{boundaries} $B_i$ as
\begin{align}
	Z_i &:= \text{ker}\ \bdry_{i} \subseteq C_i ,\\
	B_i &:= \text{im}\ \bdry_{i+1} \subseteq C_{i}.	
\end{align}
Since Eq.~\eqref{eq:cc-condition} implies that $Z_i \subseteq B_i$, we can define the quotient
\begin{equation}
	\calH_i(C_{\sbullet}) := Z_i/B_i,
\end{equation}
which is called the $i$-th \emph{homology group} of the chain complex. 

By inverting the arrows in Eq.~\eqref{eq:chain-complex}, i.e., transposing the corresponding linear maps, we obtain the dual notion called \emph{co-chain complex}
\begin{equation}\label{eq:co-chain-complex}
C^{\sbullet} := \dots C_{i+1} \xleftarrow{\bdry^{\top}_{i}} C_i \xleftarrow{\bdry^\top_{i-1}} C_{i-1} \dots,
\end{equation}
and completely analogous definitions for \emph{co-boundaries} $B^i := \text{im}\ \bdry_{i-1}^{\top}$, \emph{co-cycles} $Z^i := \text{ker}\ \bdry^{\top}_i$, and the \emph{co-homology group} $\calH^i(C^{\sbullet}) = Z^i/B^i$.

By using a three-term subcomplex of a chain complex
\begin{equation}
	C_{i+1} \xrightarrow{\bdry_{i+1}} C_i \xrightarrow{\bdry_i} C_{i-1}, 
\end{equation}
a CSS code can be obtained by setting
\begin{align}
	H_Z^T &= \bdry_{i+1}, \\
	H_X &= \bdry_{i},
\end{align}
whereby the CSS condition from Eq.~\eqref{eq:css-orthogonality} is fulfilled by definition.
We can now reason about a code in the language of chain complexes and their homology.

The group generated by the \qgate{Z}-type stabilizers $S_Z$ correspond to the boundaries $B_i$ and the \qgate{Z}-type Pauli operators that commute with all \qgate{X}-type stabilizers correspond to the cycles $Z_i$. 
Analogously, $S_X = B^i$ and the \qgate{X}-type Paulis commuting with the \qgate{Z}-type stabilizers correspond to the co-cycles $Z^i$.
The \qgate{Z}-type logical operators correspond to elements of the homology group $\calH_i$ and the \qgate{X}-type logical operators to the cohomology group $\calH^i$.

Note that a linear classical code is a two-term chain complex where the boundary operators map between the space of checks and the code space.

Using the language of homology, codes can be constructed by taking the product of two chain complexes~\cite{tillich_quantum_2014, hastings_fiber_2021, breuckmann_balanced_2021, panteleev_asymptotically_2022}. 
The tensor product\footnote{Formally this is denoted as the total complex of the tensor product double complex~\cite{breuckmann_balanced_2021}.} of two $2$-term chain complexes $C_1 \xrightarrow{\bdry^C_1} C_0$ and $D_1 \xrightarrow{\bdry^D_1} C_0$, each corresponding to a classical code, gives a three-term chain complex $C\otimes D$ defined as
\begin{align}\label{eq:cc-tensor-prod}
	C_1 \oplus D_1 \xrightarrow{\bdry_2} 
	C_1 \oplus D_0 \otimes C_0 \oplus D_1 \xrightarrow{\bdry_{1}}
	C_0 \otimes D_0,
\end{align}
where the boundary maps are defined as
\begin{align}
	\bdry_{2} &= \left(
	\begin{matrix}
		\bdry_{1}^C \otimes \Id \\
		  \Id \otimes \bdry_{1}^D
	\end{matrix}
	\right),\\
	\bdry_{1} &= \left(
	\bdry_{1}^C \otimes \Id \mid \Id \otimes \bdry_{1}^D
	\right).
\end{align}
Applying the tensor product to higher-dimensional chain complexes gives a quantum code with higher-dimensional elements. 
 
For example, the repetition (ring) code can be seen as a collection of vertices connected by edges in pairs.
\begin{figure}[!b]
    \centering
    \includegraphics{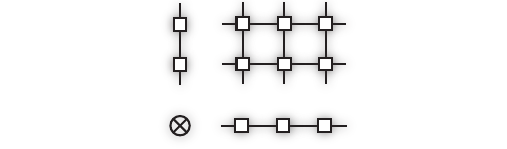}
    \caption{The two-dimensional surface code obtained from the tensor product complex of two repetition codes.}
    \label{fig:2dsc-cc-tensor-product}
\end{figure}
The tensor product of two repetition codes then describes a two-dimensional object with faces, edges, and vertices that correspond to the two-dimensional surface code, as illustrated in~\figref{fig:2dsc-cc-tensor-product}.
Analogously, the three-dimensional surface codes~\cite{vasmer_three-dimensional_2019} can be obtained as a tensor product of a two-dimensional surface code with a repetition code corresponding to a 4-term chain complex (cf.~Ref.~\cite{strikis_quantum_2023}) as
\begin{equation}\label{eq:3d-chain-complex}
	C_3 \xrightarrow{\bdry_{3}} C_2 \xrightarrow{\bdry_{2}} C_1 \xrightarrow{\bdry_{1}} C_0.
\end{equation}
\begin{example}[Three dimensional surface code from repetition codes]
    Consider the $3$-repetition code $R: C_1 \to C_0$ with check matrix 
        \begin{equation}\label{eq:rep-code}
            H_{\rm rep} = \left(
                    \begin{matrix}
                        1 & 1 & 0 \\
                        0 & 1 & 1 \\
                        1 & 0 & 1
                    \end{matrix}
                \right).
        \end{equation}
        A two-dimensional surface code can be obtained by taking $S = R \otimes R$. 
        By taking the tensor product with $R$ again, we obtain a three-dimensional surface code, i.e., a three-dimensional lattice $S_{3D} = S \otimes R$, as sketched in~\figref{fig:3dsc}. 
\end{example}

Depending on whether we choose period boundary conditions---i.e., a ring code or a repetition code as ``seed code''---or not, we obtain the following code parameters of the three-dimensional surface codes (3DSC). Note that the three-dimensional surface codes with periodic boundaries is also called \emph{three-dimensional toric code} (3DTC),
\begin{itemize}
	\item 3DSC: $[[2L(L-1)^2+L^3, 1, d_X=L^2, d_Z=L]]$,
	\item 3DTC: $[[3L^3, 3, d_X=L^2,d_Z=L]]$.
\end{itemize}
Note that, instead of placing \qgate{X}-checks on faces and \qgate{Z}-checks on vertices, some works consider an assignment with swapped checks. 

\section{Proof of~\Cref{prop:informal}}\label{ssec:proofs}
Here, we argue that the construction of the multi-round parity-check matrix $\tilde{H}$ (cf.~\Cref{def:mr-pcm}) for $r$ rounds of syndrome measurement can be described as the tensor product of the chain complex of the code $\calC$ and the chain complex of a (slight variant of the) $r$-repetition code $\calR$.

The statement is quite straightforward given existing results on product code constructions and hence the result follows from basic notions from graph theory and homological algebra.  
However, technically, it is a priori not clear that this matches our multi-round Tanner graph construction, thus to make these correspondences concrete, we present the result formally in the following. 

Let us introduce some additional notation.
We consider a two-dimensional \emph{space} $X$ as the generalization of a graph (a one-dimensional space) with the $i$-cells, $X_i$, denoting sets of the $i$-dimensional elements, i.e., $0$-cells are vertices, $1$-cells are edges, $2$-cells are faces, and $3$-cells are volumes. 
Analogously to graphs, the \emph{incidence matrices} $\bdry_i$ of $X$ are defined as
\begin{equation}
	\bdry^X_{i} \in \Ft^{X_i}, (\bdry^X_i)_{v,w}=1 \iff v \sim w,
\end{equation}
where $v\sim w$ denotes that $v$ is incident to $w$.

Given a two-dimensional space $X$, the \emph{cellular chain complex} $C_{\sbullet}(X)$ is defined as the chain complex whose vector spaces $C_i$ have the $i$-cells, $X_i$, as basis and boundary maps $\bdry_i$ that map an $i$-cell to the formal sum of $(i-1)$-cells at its boundary, e.g., an (edge) to the vertices at its boundary
\begin{equation}
	C_{\sbullet} = C_{2}(X) \to C_{1}(X) \to C_{0}(X)
\end{equation}
where we may identify $C_{i}(X) = \Ft^{X_{i}}$, i.e., $C_{i}(X)$ is the vector space spanned by $i$-cells in $X$, and the boundary operators correspond exactly to the incidence matrices of the $i$-space.
Given two graphs (one-dimensional spaces) $X,Y$, their Cartesian product $X\times Y$ is a $2$-dimensional space $Z$ whose elements are
\begin{align}
	Z_0 &= X_0 \times Y_0, \\
	Z_1 &= X_0 \times Y_1 \coprod X_1 \times Y_0,  \\
	Z_2 &= X_1 \times Z_1,
\end{align}
where the coproduct $\coprod$ is the disjoint sum of sets.
The incidence matrices of $Z$ are then given as
\begin{align}
	\bdry_1^Z &= (\Id_{X_0} \otimes \bdry^Y_1 \mid \bdry_1^X \otimes \Id_{Y_{0}}) \in \Ft^{Z_1}, \\
	\bdry_2^Z &= \left(
			\begin{matrix}
				\bdry^X_1 \otimes \Id_{Y_{1}} \\
				\Id_{X_{1}} \otimes \bdry^Y_{1}
			\end{matrix}
		\right) \in \Ft^{Z_2}.
\end{align}
Note that this is equivalent to considering the cellular chain complexes $C_{\sbullet}(X)$ and $D_{\sbullet}(Y)$ and constructing the tensor product complex 
\begin{equation}
	C_{\sbullet}(X)\otimes D_{\sbullet}(Y).
\end{equation}
Since $C_{\sbullet}(X), D_{\sbullet}(Y)$ come with bases $X_0, X_1$ and $Y_0, Y_1$, the bases of $C_{\sbullet}(X)\otimes D_{\sbullet}(Y)$ correspond exactly to the spaces obtained by the Cartesian product $X\times Y$ and by ordering the Cartesian products, the matrices of the boundary operators $\bdry_i^Z$ are exactly the Kronecker products of the corresponding matrices $\bdry_i^X, \bdry_i^Y$, hence the boundary map of the tensor product complex are exactly the incidence matrices of the Cartesian product and we can identify 
$C_{\sbullet}(X)\otimes D_{\sbullet}(Y) = C_{\sbullet}(X\times Y)$.
Having the notation in place we can formulate the statement:

\begin{proposition}[Formal version of Proposition~\ref{prop:informal}]\label{prop:mr-tg-is-cc-prod}
    Let $C_{\sbullet}$ be a three-term chain complex corresponding to a CSS QLDPC code $\calC$ and let $R_{\sbullet} = R_1 \xrightarrow[]{R} R_0$, be a chain complex whose boundary map $R$ corresponds to the $r\times r$ matrix
    \[R = \left(\begin{matrix}
        1 & 0 & \dots &0 \\
        &  & R'  &
    \end{matrix}\right),\]
    where $R'$ is the $(r-1)\times r$ check matrix of the $r$-repetition code.
    Then, the $r$-multi-round parity check matrix $\tilde{H}$ of $\calC$ is equivalent to the boundary map of the tensor product complex $R_{\sbullet} \otimes C_{\sbullet}$.
\end{proposition}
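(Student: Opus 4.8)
The plan is to prove \Cref{prop:mr-tg-is-cc-prod} by directly computing the lowest boundary map of the product complex $R_{\sbullet} \otimes C_{\sbullet}$ and matching it, block by block, to the multi-round matrix $\tilde{H}$ of \Cref{def:mr-pcm}. The guiding observation is that $\tilde{H}$ is itself a \emph{single} boundary operator: it sends a pair (data errors in each round, time-like syndrome errors) to the collection of $r$ syndrome records, and should therefore be identified with the degree-one boundary of the product. Since degree-one elements of a tensor product are $R_0\otimes C_1 \oplus R_1 \otimes C_0$ and degree-zero elements are $R_0 \otimes C_0$, this boundary map only sees the lowest boundary map $H := \bdry_1^C$ of the code complex and the single map $R = \bdry_1^R$; the higher boundary $\bdry_2^C$ of the three-term complex $C_{\sbullet}$ plays no role here, consistent with the fact that plain multi-round decoding carries no meta-checks.

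Concretely, we would invoke the tensor-product incidence-matrix formula established in \appref{sec:app-homology} to write
\begin{equation}
    \bdry_1^{R \otimes C} = (\Id_{R_0} \otimes \bdry_1^C \mid \bdry_1^R \otimes \Id_{C_0}) = (\Id_r \otimes H \mid R \otimes \Id_m),
\end{equation}
where $R_0 \cong \Ft^r$ indexes the syndrome rounds, $C_0 \cong \Ft^m$ the $m$ checks of the code, and $C_1 \cong \Ft^n$ its $n$ qubits. The first block $\Id_r \otimes H$ is block-diagonal with $r$ copies of $H$, hence equals the matrix $H_{\rm diag}$ of \Cref{def:mr-pcm}, placing one independent copy of the code in each round. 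For the second block we would use the explicit shape of $R$: with first row $(1,0,\dots,0)$ and the rows of the repetition matrix $R'$ beneath it, $R$ is the $r\times r$ lower-bidiagonal matrix with ones on the main diagonal and first subdiagonal. Its Kronecker product $R \otimes \Id_m$ is thus block lower-bidiagonal with $\Id_m$ on the block-diagonal and block-subdiagonal and zeros elsewhere, which is exactly the step-diagonal identity pattern $\Id_{\rm sdiag}$. Consequently $\bdry_1^{R \otimes C} = (H_{\rm diag} \mid \Id_{\rm sdiag}) = \tilde{H}$.

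It remains to interpret the word \emph{equivalent}: the identity holds up to a relabeling of the bases of $R_0\otimes C_1$, $R_1\otimes C_0$, and $R_0\otimes C_0$, i.e., a simultaneous row/column permutation that merely reorders qubits and checks and therefore defines the same code. We would also spell out why the \emph{variant} repetition code is required, since this is the one genuinely content-bearing point. The number of syndrome rounds equals $\dim R_0$, i.e., the number of rows of the boundary map; the standard $(r-1)\times r$ matrix $R'$ would give only $r-1$ rounds and would fail to be full rank, whereas prepending the row $(1,0,\dots,0)$ makes $R$ a full-rank $r\times r$ map, copying $C_0$ exactly $r$ times and accounting correctly for the first (boundary) round.

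The main obstacle is not conceptual but bookkeeping: one must fix the ordering conventions on $R_0\otimes C_1$ and $R_1\otimes C_0$ so that the Kronecker structure lines up cell-for-cell with the informal block layout of Eq.~\eqref{eq:multi-round-pcm}, and confirm on a small instance (e.g.\ $r=2$, where $R = \left(\begin{smallmatrix} 1 & 0 \\ 1 & 1\end{smallmatrix}\right)$ and $R\otimes\Id_m = \left(\begin{smallmatrix} \Id_m & 0 \\ \Id_m & \Id_m\end{smallmatrix}\right)$) that the modified first row reproduces the expected boundary time-like-error columns rather than introducing spurious ones. Everything else follows routinely from the tensor-product construction and the chain-complex condition $\bdry_i\bdry_{i+1}=0$.
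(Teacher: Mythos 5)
Your proposal is correct and follows essentially the same route as the paper's own proof: both identify $\tilde{H}$ with the degree-one boundary $\bdry_1 = (\Id_r \otimes H \mid R \otimes \Id_m)$ of the tensor product complex, match $\Id_r \otimes H$ to $H_{\rm diag}$ and $R \otimes \Id_m$ to $\Id_{\rm sdiag}$ via the Kronecker structure, and treat ``equivalent'' as an identification of bases up to reordering. Your explanation of why the modified first row $(1,0,\dots,0)$ of $R$ is needed also mirrors the paper's closing remark that the first layer of checks connects to only a single layer of time-like bit nodes.
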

\begin{proof}
Since the code is CSS we focus on a single check side (i.e., the underlying graph of the space) in the following. 
Let $H \in \Ft^{m \times n}$ be the parity check matrix of one side of the code $\calC$, i.e., 
\[C_1 \xrightarrow[]{H} C_0.\]
The tensor product chain complex is
\begin{equation}
    R_1 \otimes C_1 \xrightarrow[]{\bdry_2} R_0 \otimes C_1 \oplus R_1 \otimes C_0 \xrightarrow[]{\bdry_1} R_0 \otimes C_0,
\end{equation}
where the boundary maps are given by 
\begin{align}
    \bdry_2 &= \left( 
        \begin{matrix}
            \Id_R \otimes H \\
            R \otimes\Id_H
        \end{matrix}
    \right), \\
    \bdry_1 &= \left( 
         \Id_R\otimes H \mid R \otimes \Id_H
    \right).
\end{align}

Since qubits are placed on $1$-cells, the check matrix given by $\bdry_1$ is the one that is relevant. 

Viewing $C_{\sbullet}, R_{\sbullet}$ as cellular chain complexes, it is clear that the basis elements of the $1$-cells correspond exactly to $Y_0 \times X_1 \coprod Y_0 \times X_1$, where $Y_i, X_i$ are the bases of the $i$-cells of $R$ and $C$, respectively, and $\bdry_1$ is exactly the incidence matrix of the underlying $1$-complex. 

Hence, by the definition of the Kronecker product, the resulting check matrix has the form
\begin{equation}
    \bdry_1 = \left(
        \begin{matrix}
            H &         &    & & \Id_H &  &  \\
                   &  H &  &  &  \Id_H & \Id_H &  & \\
                     &     &\ddots &            &  & \ddots  & \\
                     &     &  &  H & & &\Id_H & \Id_H
        \end{matrix}
    \right).
\end{equation}
Thus, $\bdry_1 = \tilde{H}(\calC)$. 
Since the edge-vertex incidences are given by $\bdry_1$ in the corresponding graph whose edges can be identified with the bases $R_0\times C_1 \coprod R_1\times C_0$, and whose vertices can be identified with $R_0\times C_0$, the product graph obtained is equivalent to the multi-round Tanner graph $\tilde{\calT}$.
\qedhere

Note that to match~\Cref{def:mr-pcm} exactly, we consider a slightly altered check matrix $R$ compared to the standard repetition code. 
For example, the check matrix of the $4$-repetition code is 
\begin{equation}\label{eq:4rep-code}
    R_{4-{\rm rep}} = \left( 
        \begin{matrix}
            1 & 1 & 0 & 0\\
            0 & 1 & 1 & 0\\
            0 & 0 & 1 & 1\\
        \end{matrix}
    \right),
\end{equation}
the version we consider is 
\begin{equation}\label{eq:4-rep-code}
    R = \left(
        \begin{matrix}
            1 & 0 & 0 & 0 \\
            1 & 1 & 0 & 0 \\
            0 & 1 & 1 & 0 \\
            0 & 0 & 1 & 1
        \end{matrix}
    \right),
\end{equation}
as this accounts for the fact that the first layer of checks is only connected to a single layer of time-like bit nodes.
Note that the code spaces of the matrices defined above are not equivalent, since for the repetition code from Eq.~\eqref{eq:4rep-code} the all-ones vector is the only non-trivial codeword $(1,1,1,1) \in ker(R_{4-{\rm rep}})$, but for the considered variant defined in Eq.~\eqref{eq:4-rep-code} we have $(0,1,1,1) \in \mathit{ker}(R)$. 
One could equivalently consider the standard repetition code matrix and then project the final boundary map s.t.\ the respective identity block entry is mapped to 0.
\end{proof}

\section{Implementation details}\label{sec:app_impl-details}
In this section, we present details concerning the code used to conduct the numerical experiments presented in this manuscript. In ~\appref{sec:app_code_constructions} we review the QLDPC code family used in~\secref{sec:atd}. \appref{sec:s-noise-model-conversion} reviews the conversion between analog syndrome noise and bit-wise syndrome noise channels used in~\secref{sec:atd}. In~\appref{sec:app_bp} to \appref{sec:app_atd-details} we review details on belief-propagation decoding and the proposed implementations of ATD and SSMSA.

\subsection{Non-topological code constructions}\label{sec:app_code_constructions}
In this section, we give details on the construction of the codes used for numerical evaluations.

\subsubsection{Lifted product codes}\label{sec:app-lp-codes}
For the simulations presented in~\secref{sec:decoding-main}, we use a family of \emph{lifted product} (LP) codes~\cite{panteleev_degenerate_2021, panteleev_asymptotically_2022}. The construction of lifted product codes is described below.
Algebraically, an $[[n,k,d]]$ LP code can be obtained from the tensor product of a \emph{base} matrix $B$ that corresponds to a classical quasi-cyclic LDPC code~\cite{fossorier_quasicyclic_2004} with its conjugate transpose $B^*$. 
The concrete instances of the family used are constructed from the base matrices $B_d$ for distance $d$ from Appendix A in Ref.~\cite{raveendran_finite_2022} and are given as 
\begin{align}\label{eq:lp-base-mat}
	B_{12} &= \left(
	\begin{matrix}
		0 &0 &0 &0& 0 \\
		0 &2 &4 &7& 11\\
		0& 3 &10 &14& 15 \\
	\end{matrix}
	\right), \\
	B_{16} &= \left(
	\begin{matrix}
		0 &0 &0 &0& 0 \\
		0 &4& 5 &7 &17 \\
		0 &14 &18 &12& 11
	\end{matrix}
	\right), \\
	B_{20} &= \left(
	\begin{matrix}
		0 &0 &0 &0& 0 \\
		0 &2 &14 &24& 25 \\
		0 &16 &11 &14& 13
	\end{matrix}
	\right),
\end{align}
to obtain the code instances with parameters
\begin{itemize}
	\item $[[544,80, d\leq 12]]$,
	\item  $[[714,100, d\leq 16]]$,
	\item $[[1020, 136, d\leq 20]]$.
\end{itemize}
To construct the code instances in software, we use the LDPC library by Roffe~\cite{roffe_ldpc_2022}. 
The parity-check matrices are provided in the GitHub repository~\url{github.com/cda-tum/mqt-qecc}.

\subsection{Syndrome noise model conversion}\label{sec:s-noise-model-conversion}
To compare decoding approaches that consider analog or hard syndrome errors, the syndrome noise model under consideration needs to be compatible. This means we need to be able to compare (and convert) the strength of the analog syndrome noise to the hard syndrome noise and vice versa. 
The analog information decoder considers Gaussian syndrome noise $e_i \sim \calN(0, \sigma^2)$. 
When dealing with syndrome bits $s_i \in \set{-1,+1}$, we want to convert this into an error channel for hard syndrome noise that is equivalent, i.e., 
\begin{equation}\label{eq:gauss-noise-to-channel}
    e_i \text{ flips the syndrome }  \iff \begin{cases}
        e_i > 1 & \text{if } s_i = -1, \\
        e_i < 1 & \text{if } s_i = +1.
    \end{cases}
\end{equation}
To satisfy the conditions in Eq.~\eqref{eq:gauss-noise-to-channel} define the syndrome error rate $p_{\rm syndr}$ as
\begin{equation}\label{eq:p-syndr-from-gauss-noise-def}
    p_{\rm syndr} = \begin{cases}
            \frac{1}{ \sqrt{2 \pi \sigma^2}} \int_{-\infty}^{-1} e^{-x^2/2\sigma^2} \dd{x} & \text{if } s_i=+1,\\[10pt]
            \frac{1}{ \sqrt{2 \pi \sigma^2}} \int_{1}^{\infty} e^{-x^2/2\sigma^2} \dd{x} & \text{if } s_i=-1.
    \end{cases}
\end{equation}
By symmetry of the Gaussian distribution, this gives equivalent error probabilities for both cases, which can be derived readily by substituting $x\mapsto -x$, 
\begin{align}
     \frac{1}{\sqrt{2 \pi \sigma^2}} \int_{-\infty}^{-1} e^{-x^2/2\sigma^2} \dd{x} 
     = \frac{1}{2} \text{Erfc}\left( \frac{1}{\sqrt{2 \sigma^2}} \right),
\end{align}
where $x\mapsto \text{Erfc}(x)$ is the complementary error function.
For given $p_{\rm syndr}$ the solution is 
\begin{align}\label{eq:app-perr-to-sigma}
    \sigma = \frac{\sqrt{2}^{-1}}{\text{Erfc}^{-1}(2 p_{\text{sydr}})} = \frac{\sqrt{2}^{-1}}{\text{Erf}^{-1}(1 - 2 p_{\text{sydr}})},
\end{align}
where $x\mapsto \text{Erfc}^{-1}(x)$ is the inverse of the complementary error function and $\text{Erf}^{-1}(x)$ the inverse of the error function.
This allows us to relate the discrete qubit and (analog) cat qubit error models in a one-to-one correspondence.

\subsection{Belief-propagation}\label{sec:app_bp}
Both the SSMSA decoder proposed in Ref.~\cite{raveendran_soft_2022} and our ATD method are based on \emph{belief propagation} (BP), which is a decoding algorithm that has been adapted from classical (LDPC) codes to quantum codes~\cite{fossorier_soft-decision_1995, fossorier_iterative_2001, panteleev_degenerate_2021}. 
In this section, we briefly review the main aspects of BP using minimum-sum update rules relevant to our methods. 
We refer the reader to the literature in the field for a more in-depth discussion, for instance Refs.~\cite{roffe_decoding_2020, panteleev_degenerate_2021}.
Since we focus on CSS codes that can be seen as a combination of two classical linear codes, we focus on a single check side in the following. 

Given a syndrome $s = H\cdot e$, the objective of the decoder is to find the most likely error $e$. In practice, this amounts to finding a minimum (Hamming) weight estimate $\varepsilon$ for the error, i.e., 
\[ \varepsilon = \text{argmax}_e \mathbf{Pr}(e|s).\]
In an i.i.d.~noise model, $\varepsilon$ can be computed bit-wise by computing the marginal probabilities 
\begin{equation}
\label{eq:bp_marginals}
\mathbf{Pr}(e_i) = \sum_i^n \mathbf{Pr}(e_1, e_2, \dots, \hat{e_i}=1, e_{i+1}, \dots, e_n|s),
\end{equation}
where the hat $\hat{e}_i$ indicates that the variable is left out, i.e., summation over all variables except $e_i$.

The goal of BP is to compute these probabilities in an iterative way by using the natural factorization given by the Tanner graph of the code (also called the \emph{factor graph} in this context). 
The marginals $\mathbf{Pr}(e_i)$ are then used to infer an estimate $\varepsilon$ by setting 
\begin{equation}\label{eq:bp_hard-decisions}
\varepsilon =
    \begin{cases}
      1 & \text{if}\ \mathbf{Pr}(e_i) \geq 0.5, \\
      0 & \text{otherwise}.
    \end{cases}
\end{equation}
Belief propagation computes marginals using an iterative message-passing procedure, where in each iteration, a message is sent from each node to its neighbors.
The messages constitute sets of ``beliefs'' on the probabilities to be computed.
The value of the messages depends on the syndrome and the bit error channel. 
In this work, we use a serial schedule to compute BP marginals. Our implementation is outlined in~\Cref{alg:bp} and is described below.

\begin{algorithm}[t]
\SetAlgoLined
\caption{Hard syndrome belief-propagation (MSA) with serial scheduling \label{alg:bp}}
	$s$: Syndrome\;
 	$H$: Parity-check matrix\;
	$\mathcal{N}(c)$: Bits in the neighborhood of check $c$\;
	$\mathcal{M}(b)$: Checks in the neighborhood of bit $b$\;
	$\mu_{c,b}$: Check-to-bit update from check $c$ to bit $b$\;
	$\nu_{c,b}$: Bit-to-check update from bit $b$ to check $c$\;
	$\lambda_{b}=\log((1-p)/p)$: LLR for bit $b$\;
	$p$: Channel probability\;
	bit-count: The number of bits\;
	max-iter: The maximum no. BP iterations to run\;
	\KwResult {estimate $\varepsilon$}
	\For(\tcp*[h]{Initialization}){all $c,b$ where $H_{c,b}\neq 0$}{ 
		$\nu_{b,c} := \log((1-p)/p)$ \label{alg:bp-initialization} 
	}
	\For(\tcp*[h]{Main iteration loop}){iter to max-iter}{ 
			\For(\tcp*[h]{Serial bit loop}){$b \in \{1 ,\dots , \text{bit-count} \}$}{ 
						$\lambda_b = \log((1-p)/p)$\tcp*{Initialise LLRs}
			\For(\tcp*[h]{Loop over neighbors}){$c \in \mathcal{M}(b)$}{ 
				$|\mu_{c,b}| = \min\limits_{b' \in \mathcal{N}(c) \setminus \{b\}}{\left|\nu_{b',c}\right|}$\;	
				$\text{sgn}(\mu_{c,b}) = \text{sgn}(\lambda_b) \cdot \prod\limits_{b' \in \mathcal{N}(c) \setminus \{b\}}{\text{sgn}(\nu_{c,b'}})$\;
				$\mu_{c,b} = \alpha \cdot \text{sgn}(\mu_{c,b})\cdot \left|\mu_{c,b}\right|$\tcp*{Check to bit}
				$\lambda_b = \lambda_b + \mu_{c,b}$\tcp*{Update LLRs}
			}
			\For{$c \in \mathcal{M}(b)$}{
				$\nu_{b,c} = \lambda_b - \mu_{c,b}$\tcp*{Bit to check}
			}            
			\eIf(\tcp*[h]{Hard decision on bit $b$}){$\lambda_b\leq 0$}{ 
				$\varepsilon_b = 1$\;
			}{ 
				$\varepsilon_b = 0$\;
			}
		}
		\If{$s = H\cdot \varepsilon$}{
			return $\varepsilon$\tcp*{Converged, return estimate.}
		} 
	}
	return $\varepsilon$\tcp*{Return estimate after maximum number of iterations reached.}
\end{algorithm}
In the first step of BP decoding, the values of the bit nodes $b_j$ are initialized to the \emph{log-likelihood ratios} 
(LLRs) $\lambda_i$ of the error channel 
\begin{equation}\label{eq:bp_llrs}
\lambda_i := \log\left(\frac{1-p}{p}\right).    
\end{equation}
In every iteration, each check node $i$ sends messages to neighboring bit nodes $j \in \calN(i)$, denoted as $\mu_{i,j}$. 
The value of the check-to-bit message $\mu_{i,j}$ is computed as a function of the syndrome $\gamma_{i}$ and the incoming bit-to-check messages $v_{i,j'}$ as
\begin{equation}\label{eq:bp_ctb-messages}
\alpha\cdot \sgn(\gamma_i) \cdot \prod\limits_{j' \in \mathcal{N}(i) \setminus \{j\}}{\text{sgn}(\nu_{i,j'}}) \cdot \min\limits_{j' \in \mathcal{N}(i) \setminus \{j\}}{\abs{\nu_{i,j'}}}\rm.
\end{equation}
The factor $\alpha \in \R$ is called the \emph{scaling factor}~\cite{emran_simplified_2014}.
The bit-to check node messages $\nu_{j,i}$ are computed as
\[
\mu_{j,i} := \lambda_{j} + \sum_{i'\in \calN(j) \setminus i} \mu_{i',j}.
\]
It is well-known that BP computes only the marginals $\mathbf{Pr}(e_i)$ exactly on factor graphs that are trees (in a single step). In the more general setting, where the graph contains loops, the computed marginals are approximate~\cite{kschischang_factor_2001}. To check for termination of the iterative procedure, we use the marginals and Eq.~\eqref{eq:bp_hard-decisions} to infer an estimate $\varepsilon$ from the currently computed marginals and check if $\varepsilon$ is valid for the given syndrome. 
i.e.,  if $s = H \cdot \varepsilon$ the solution $\varepsilon$ is valid. If $\varepsilon$ is valid, the BP algorithm terminates and $\varepsilon$ is returned as the decoding estimate.

We use the BP+OSD implementation provided in the LDPC2 package by Roffe~\etal{}available on Github~\url{https://github.com/quantumgizmos/ldpc/tree/ldpc_v2}.

\subsection{Soft-syndrome MSA}\label{sec:app_si-decoder}
The SSMSA algorithm is essentially equivalent to BP as sketched in~\Cref{alg:bp} with some alterations. 
Instead of the hard syndrome vector $s$, the input is an analog syndrome vector $\tilde{s}$, whose corresponding LLR vector is denoted $\gamma$ 
(cf.~Eq.~\eqref{eq:llr_analog_information}). 
The initialization and bit-to-check messages are computed equivalently. 
For computing the check-to-bit messages, the analog syndrome $\gamma$ is taken into account.
If the syndrome value is below a pre-defined cutoff value $\Gamma$ that models the reliability of the syndrome information, the syndrome information is treated as unreliable and the messages $\mu_{c,b}$ are instead computed as
\begin{equation}
\mu_{c,b} := \begin{cases}
    \min\limits_{b' \in \mathcal{N}(c)\setminus b}({\abs{\nu_{b',c}}}) & \text{if}\ |\gamma_c|>\Gamma , \\
    \min\limits_{b' \in \mathcal{N}(c)\setminus b}({\abs{\nu_{b',c}},\abs{\gamma_c}}) & \text{otherwise}.
\end{cases}
\end{equation}
Note that there is a case in which Algorithm~\ref{alg:ssmsa} erases the analog syndrome information. This occurs when the absolute value of the analog syndrome is smaller than the value of all incoming messages of the check node $c$, and the sign of the incoming messages matches the sign of the syndrome. From Line~\ref{line:ssmsa-overwrite} of Algorithm~\ref{alg:ssmsa}, we see that in this case the analog syndrome value is overwritten and thus lost.

\begin{algorithm}[t]
\SetAlgoLined
\caption{Soft-Syndrome MSA \label{alg:ssmsa}}
    $\gamma_i$: Analog syndrome LLR\;
    $H$: Parity-check matrix\;
    $\mathcal{N}(c)$: Bits in the neighborhood of check $c$\;
    $\mathcal{M}(b)$: Checks in the neighborhood of bit $b$\;
    $\mu_{c,b}$: Check-to-bit update from check $b$ to bit $b$\;
    $\nu_{c,b}$: Bit-to-check update from bit $b$ to check $c$\;
    $\lambda_{b}$: LLR for bit $b$\;
    $\Gamma$: Cutoff for soft-info decoding\;
    $p$: Channel probability\;
    bit-count: the number of bits\;
    max-iter: The maximum no. BP iterations\;
    \KwResult{estimate $\varepsilon$}
    \For(\tcp*[h]{Initialization}){all $c,b$ where $H_{c,b}\neq 0$}{ 
    	$\nu_{b,c} := \log((1-p)/p)$\; \label{alg:ssmsa-initialization}    
     }
    \For(\tcp*[h]{Main iteration loop}){iter to max-iter}{
    	\For(\tcp*[h]{Serial bit loop}){$b \in \{1 .. \text{bit-count} \}$}{     
        $\lambda_b = \log((1-p)/p)$\tcp*{Initialise LLRs}
	\For(\tcp*[h]{Loop over check bits}){$c \in \mathcal{M}(b)$}{
            \If(\tcp*[h]{Virtual check update}){$\abs{\gamma_c}\leq\Gamma$}{ 
    		  $|\mu_{c,b}| = \min\limits_{b' \in \mathcal{N}(c)}({\abs{\nu_{b',c}},\abs{\gamma_c}})$\;
    		\If{$\abs{\gamma_c} < \min\limits_{b' \in \mathcal{N}(c)}({\abs{\nu_{b',c}}})$}{
    		      \eIf{$\text{sgn}(\gamma_c) == \prod\limits_{b' \in \mathcal{N}(c)}{\text{sgn}(\nu_{b',c}})$}{
    		          $\gamma_c = \text{sgn}({\gamma_c}) \cdot \min\limits_{b' \in \mathcal{N}(c)}({\abs{\nu_{b',c}}})$\;\label{line:ssmsa-overwrite}
    		      }{ 
    		          $\gamma_c=-1\times \gamma_c$\;
    		      }	
    		}	
		}\Else(\tcp*[h]{Default to MSA if above cutoff}){
		$|\mu_{c,b}| = \min\limits_{b' \in \mathcal{N}(c) \setminus \{b\}}{\abs{\nu_{b',c}}}$\;     
            }
	
        $\text{sgn}(\mu_{c,b}) = \text{sgn}(\gamma_c) \cdot \prod\limits_{b' \in \mathcal{N}(c) \setminus \{b\}}{\text{sgn}(\nu_{b',c}})$\;
        
        $\mu_{c,b} = \text{sgn}(|\mu_{c,b}|)\cdot \abs{\mu_{c,b}}$\tcp*{Check to bit}
        
        $\lambda_b = \lambda_b + \mu_{c,b}$\tcp*{Update LLRs}
        }
    
    \For{$c \in \mathcal{M}(b)$}{
        $\nu_{b,c} = \lambda_b - \mu_{c,b}$\tcp*{Bit to check}
    }
    \eIf(\tcp*[h]{Hard decision on bit $b$}){$\lambda_b\leq 0$}{ 
        $\varepsilon_b = 1$\;
    }{ 
        $\varepsilon_b = 0$\;
    }
    
    }
    $s_c = \text{sgn}{(\gamma_c)}$\tcp*{Hard syndrome}
    \If{$s = H\cdot \varepsilon$}{
        return $\varepsilon$\tcp*{Converged, return estimate.}
    }
 }
return $\varepsilon$\tcp*{Return estimate after maximum number of iterations reached.}
\end{algorithm}

Our implementation of the SSMSA decoder is made publicly available in the LDPC2 package~\url{https://github.com/quantumgizmos/ldpc/tree/ldpc_v2}.

\subsection{Analog Tanner graph 
decoder}\label{sec:app_atd-details}
In \emph{analog Tanner graph decoding} (ATD), we use the Tanner graph of the code $\calT$ to construct the \emph{analog Tanner graph} (ATG). This allows us to directly incorporate the analog syndrome information in \emph{virtual nodes} in the Factor graph. 

In the initialization phase of the ATD decoder, BP sets the values $\lambda_i, i\in [n]$ of all bit nodes in the factor graph to 
\begin{equation}\label{eq:bp_bit-llrs}
 \lambda_i = \log \left(\frac{1-p}{p}\right), i\in[n].
\end{equation}
To ensure that we initialize the values of the analog nodes $\lambda_{n+j}, j\in[m]$ with the value of the analog syndrome $\gamma_j$, we derive the error channel probabilities $p'_i$ from Eq.~\eqref{eq:bp_bit-llrs}
\begin{align}
    p'_i & = \frac{1}{e^{\gamma_i}+1}.
\end{align}
Thus we set 
\begin{equation}
    p'_{n+j} = \frac{1}{e^{\gamma_j}+1}, j\in[m]  
\end{equation}
for the analog nodes to ensure that after the initialization phase of BP the bit nodes are initialized with the LLRs, and the virtual nodes with the analog syndrome (i.e., the LLR) values.

\section{Obtaining analog information for data qubits and concatenated GKP codes}\label{sec:bias-pres-syndromes-cats}

One might raise the question as to whether it is possible to include more analog information in the decoding graph, e.g., by considering analog values associated with qubits (data nodes in the factor graph used for decoding) as well.
The answer to this question is positive under the assumption that one performs active error correction on the bosonic qubit. 
In the considered phenomenological noise model that is inspired by stabilized cat qubits, this is not the case as the cat qubit is autonomously protected by the engineered stabilization mechanism as discussed in~\secref{sec:cats}.
We emphasize that in a gate-based scheme, it is not possible to perform active error correction on cat qubits, i.e., measure their stabilizers, as cat codes cannot be described as a stabilizer code in the conventional sense.
Fundamentally, this is due to the non-existence of a phase operator in the harmonic oscillator Hilbert space.

Although in some cases, for example, dissipative stabilized cat qubits, information about certain types of errors can also be obtained by continuously monitoring the buffer mode that is used to implement the dissipation mechanism~\cite{gautier_designing_2023}, doing so with sufficiently high reliability seems to be a task of similar complexity as implementing a QEC protocol.
Another way of incorporating qubit analog information into the decoding graph requires a changing the computing paradigm from a gate-based scheme to the measurement-based scheme in which qubits are explicitly measured.
The measured value will be real valued and its magnitude can be used to assign LLR to that specific qubit.

However, if the bosonic code is actively corrected in some way, one can use this information in the decoder as well. A straightforward example 
is the single mode GKP code as it is a stabilizer code with generators given by the displacements
in phase space $\mathbb{R}^2$ 
\begin{equation}
\langle S_X := e^{2 i \sqrt{\pi} \hat{p}}, S_Z := e^{-2 i \sqrt{\pi} \hat{x}} \rangle, 
\end{equation}
where $\hat{p}$ and $\hat{x}$ are the momentum and position operators of the harmonic oscillators.
Here, 
\begin{equation}
	e^{2 i \sqrt{\pi} \hat{p}} = D(- \sqrt{2}(1,0)^T), \,  e^{-2 i \sqrt{\pi} \hat{x}} = D(\sqrt{2}(0,1)^T)
\end{equation}
with 
\begin{equation}
D(\xi) := \exp( - i \sqrt{2\pi} \xi^T (\hat{p},-\hat{x})^T) 
\end{equation}
being the single mode shift operator, generating shifts in phase space by $\xi\in \mathbb{R}^2$.
One can measure the stabilizer generators using Steane-type error correction which requires auxiliary GKP qubits~\cite{noh_fault-tolerant_2020}. 
Assuming the availability of noiseless auxiliary qubits, the Steane-type error correction determines 
the shift error the data qubit has undergone modulo %
This means that shift errors up to a magnitude of at most $\sqrt{\pi}/2$ can be corrected, while shift errors that have a larger magnitude typically lead to logical errors in the GKP qubit subspace.
Suppose that the measurement yields an outcome $x_m$ for the $x$-quadrature shifts. 
Then, if 
$\abs{x_m \mod{\sqrt{2\pi}} }\approx 0$
it is unlikely that this qubit has undergone a logical GKP error if we assume 
as above 
that shift errors follow a Gaussian distribution $x_m \sim \mathcal{N}(0, \sigma^2)$ with mean zero and variance $\sigma^2$.
However, if 
$\abs{x_m \mod{\sqrt{2\pi}} }\approx \sqrt{\pi} / 2$,
a logical GKP qubit error is significantly more likely. 
Thus, it is possible to quantify this likelihood and use it to also initialize the data nodes of the decoding graph [black circles in \sfigref{fig:fig1}{d-f}] with analog information and then apply ATD to decode.
See also \figref{fig:gkp_analog_information} for a visualization of the likelihood function in analogy to \sfigref{fig:fig1}{c}.

\begin{figure}[t]
    \centering
    \includegraphics{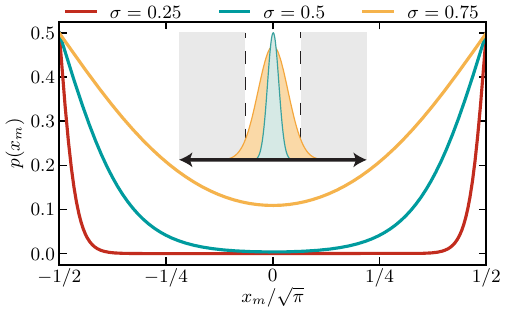}
    \caption{Obtaining analog information from GKP data qubits. The main panel shows the GKP qubit error probability conditioned on the measurement outcome $x_m$ for different values of noise strength $\sigma$. Detecting a value $x_m$ close to zero corresponds to a small error probability for any value of $\sigma$, while the error probability for measurement outcomes closer to the ``decision boundaries'' $\pm \sqrt{\pi} / 2$ are highly dependent on the assumed noise channel and its variance $\sigma^2$. The inset shows a sketch of the distribution of measurement outcomes $x_m$ for two different values of $\sigma$ and the dashed lines indicate the decision boundaries.}
    \label{fig:gkp_analog_information}
\end{figure}

\section{Concatenated multi-mode GKP
and rotation symmetric bosonic
codes}\label{sec:GKP}

In a measurement-based setting, analog information can be extracted through teleportation-based Knill-type error correction~\cite{knill_fault-tolerant_2004}. This works also for \emph{multi-mode instances of the GKP code}~\cite{noh_low-overhead_2022, larsen_fault-tolerant_2021, conrad_gottesman-kitaev-preskill_2022}. Generally, one can think of multi-mode encodings
beyond single-mode encodings in terms of displacement operators
\begin{equation}
	D(\xi) := \exp{ - i \sqrt{2\pi} \xi^T J (\hat{p}_1,\dots,\hat{p}_m ,-\hat{x}_1,\dots,-\hat{x}_m)^T}
\end{equation}
in phase space $\R^{2m\times 2m}$ of $m$ 
bosonic modes, with
\begin{equation}
	J := \left[
	\begin{array}{cc}
	0 & \Id_m\\
	- \Id_m & 0
	\end{array}
	\right]
\end{equation}
Then one can generally define a stabilizer group of a GKP code in terms of displacements
\begin{equation}
	\langle D(\xi_1),\dots, D({\xi_{2m}})
    \rangle,
\end{equation}
where $\xi_1,\dots, \xi_{2m}\in \R^{2m\times 2m}$ are linearly independent and we have that
$\xi_i^T J \xi_j \in \Z$ for all $i,j$. This 
defines a stabilizer group isomorphic to a lattice
\cite{conrad_gottesman-kitaev-preskill_2022}. In this way, one obtains multi-mode information in the syndrome
measurements. For this reason, the methods introduced here also contribute to the question of how to decode GKP codes.
In a similar way, one can consider large classes of
alternative bosonic codes known as \emph{rotation symmetric bosonic codes}~\cite{grimsmo_quantum_2020, hillmann_performance_2022}. 
Both 
our decoding techniques and software tools are easily adapted to incorporate analog information for the data qubits, as explained above.

\section{Additional results} \label{app:additional_results}
In this section, we present additional simulation results for various parameter settings of the considered decoder implementations. We also present results for the phenomenological threshold of the three-dimensional toric code (i.e., the 3DSC with period boundaries).

\subsection{Phenomenological noise threshold of the three-dimensional toric code}
\begin{figure}[!b]
    \centering
    \includegraphics{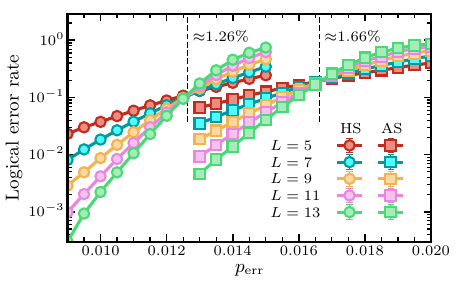}
    \caption{Phenomenological noise threshold of the three-dimensional toric code. The logical error rates are obtained by simulating a pure bit-flip noise model with $p_{\rm err}^{\rm data} = p_{\rm err}^{\rm synd}$.
    The syndrome is collected in $2 L$ measurement rounds of which the last one is noiseless. We decode using minimum weight perfect matching that yields a threshold at $\approx 1.26 \%$ and $\approx 1.66 \%$ using the \emph{hard syndrome} (HS) and the \emph{analog syndrome} (AS), respectively.
    }
    \label{fig:3dtc-th-nonss-side}
\end{figure}

As a by-product of our QSS simulations, we obtain a threshold of the non-single-shot side of the 3DTC under phenomenological noise of $\approx 1.26\%$  and $\approx 1.66 \%$ using \emph{hard syndromes} (HS) and \emph{analog syndromes} (AS), respectively. The corresponding threshold plots are shown in~\figref{fig:3dtc-th-nonss-side}.
This generalizes recent code capacity results presented in Ref.~\cite{huang_tailoring_2022}.
Here, we simulated a pure bit-flip noise model  with $p_{\rm err}^{\rm data} = p_{\rm err}^{\rm synd}$. Measurement results are collected over $2L$ rounds and then decoded using minimum-weight perfect-matching.
For the case of hard syndromes, the noise from the syndrome is added as discussed in~\secref{sec:syndr-noise-model}. This ensures the effective error channel for hard syndromes is the same as that for analog syndromes. 
We use the PyMatching implementation of minimum-weight-perfect matching  ~\cite{higgott_pymatching_2022, higgott_sparse_2023} as the decoder for these simulations.

\subsection{BP parameter optimization}\label{sec:app_bp-paremeter-opt}
Since there are various parameters that allow one to fine-tune the BP+OSD implementation, we conducted a set of numerical experiments to determine which parameter setting performs the best in the scenario considered. 
However, note that for this work we focus on techniques and methods rather than low-level optimizations in general. 
Hence, the presented numerical results are generally implementation-dependent and most likely prone to further optimization and fine-tuning.

In summary, the conducted numerical experiments
demonstrate that there can be quite significant differences in decoding performance (achieved logical error rate/threshold and
number of BP iterations/elapsed time) depending on the BP+OSD parameters, most notably the chosen BP scaling factor $\alpha$ (cf.~\secref{sec:app_bp}) and the OSD method. 

The main findings are that OSD-cs with a scaling factor $\alpha=0.5$ to $\alpha=0.6$ perform best.
Furthermore, concerning the SSMSA implementation, we find that a cutoff of $\Gamma=5$ performs the best for the considered lifted product code family. 
For readers interested in more detailed results, we refer to the GitHub repository~\url{github.com/cda-tum/mqt-qecc}, where we present detailed simulation results.

\section{Numerical simulation details}\label{sec:numeric-sim-details}

In this section, we discuss the implementation details of numerical simulations and the main techniques used.
Since we focus on CSS codes using depolarizing (biased) noise without correlations, we decode \qgate{X} and \qgate{Z} errors  separately (that implies that a \qgate{Y} error on a qubit $q$ is treated as \qgate{X} and \qgate{Z} error on $q$). 
To determine the \emph{logical error rate} of an $[[n,k,d]]$ single-shot code, decoder combination in the considered phenomenological (cat qubit) noise model, we use the following procedure: 
\begin{enumerate}
	\item Sample an error vector $e \in \Ft^n$ (bit-wise and dependent on the error channel.
	\item Compute the syndrome $s = H\cdot e$.
	\item Sample and apply a syndrome error $e_s \in \Ft^m$ to obtain the noisy syndrome $\hat{s}$. 
	\item Apply (analog) single-stage decoding to obtain an estimate $\varepsilon$.
	\item Check if $r = \varepsilon + e$ is a logical.
\end{enumerate}
The sample run is successful if $r$ is not a logical operator (i.e., the correction induced a stabilizer, which does not alter the encoded logical information). 

We apply a similar procedure to estimate the logical error rate for a code and decoder combination when applying the (analog) overlapping window method to decode over time with multiple syndrome measurements. 
To simulate decoding in $R$ rounds of noisy syndrome measurement, we first compute $R$ noisy syndromes and then apply overlapping window decoding as described in the main text to obtain a single sample. 
We repeat this procedure for a maximum number of samples $N$. 

The logical \qgate{X}/\qgate{Z} error rate $p_\ell^{X/Z}$ is the fraction of failed runs $n_{\rm fail}$ for $N$ total samples
\begin{equation}\label{eq:ler}
    p_\ell^{X/Z} := \frac{n_{\rm fail}}{N} .
\end{equation}
The error bars $\mathit{e_{X/Z}}$ are computed by
\begin{equation}
	e_{X/Z}^2 = {(1-p_\ell) \frac{p_\ell}{N}}.
\end{equation}
The simulations are terminated if the maximum number of samples $N$ is reached, or if the errors fall below a certain precision cutoff determined by the ratio of the error bar value and the logical error rate, which we set to $10^{-1}$.

To give a better comparison of logical error rates for codes that encode a different number of logical qubits $k$, we use the \emph{word error rate} (WER) for codes with $k>1$, which intuitively can be understood as logical error rate per logical qubit. 
The WER $p_{w}$ is computed as
\begin{equation}
	p_w = 1- (1-p_\ell)^{1/k-1}.
\end{equation}
We can use the methods described above to obtain a \emph{threshold} $p_{\rm th}$ estimation by computing the logical error rates for code instances of a code family with increasing distance and increasing physical error rate and then estimating where the graphs cross.
We use a standard approach based on a finite-size scaling regression analysis~\cite{wang_confinement-higgs_2003, huang_tailoring_2022, chubb_statistical_2021}.

To this end, we perform a quadratic fit on the logical error rate data obtained by numerical experiments.
Let $p_{\rm th}$ be the threshold physical error rate we want to determine, 
$\mu>0$ a parameter called the \emph{critical exponent}, and define the \emph{rescaled physical error rate}
\begin{equation}
    x := (p-p_{\rm th})\cdot d^{1/\mu},
\end{equation}
which is $x=0$ at $p=p_{\rm th}$.
We use $x$ to fit the simulation data to the following quadratic ansatz $\Phi(p,d)$
\begin{equation}
    \Phi(p, d) := ax^2 + bx + c,
\end{equation}
where $a, b$, and $c$ are coefficients of the quadratic ansatz and are parameters to be determined by fitting the data. 
Note that $\Phi(p,d)$ is only a valid approximation near $p=p_{\rm th}$, therefore only data points close to this region were used to compute the fit.
Given this ansatz, we use the logical error rates $p_\ell$ (cf. Eq.~\eqref{eq:ler}) to obtain the free parameters $(p_{\rm th}, \mu, a,b,c)$ by computing a fit using the minimized mean square error.

Note that, since we focus on CSS codes, where the \qgate{X} and \qgate{Z} decoding is done separately (assuming non-correlated errors), we need to compute the \emph{combined} logical error rates from the separate experimental data. 
Given a \qgate{X} and \qgate{Z} logical error rates $p_{\ell}^\qgate{X}, p_{\ell}^\qgate{Z}$ from numerical experiments, we compute the combined logical error rate $p_\ell$ as 
\begin{equation}
    p_{\ell} := p_{\ell}^\qgate{X} \cdot (1-p_{\ell}^\qgate{Z}) + p_{\ell}^\qgate{Z} \cdot (1-p_{\ell}^\qgate{X}) + p_{\ell}^\qgate{X} \cdot p_{\ell}^\qgate{Z}.
\end{equation}
Moreover, the corresponding errors are computed using standard methods for \emph{propagation of uncertainty}.
i.e.,  we use the variance formula~\cite{ku_notes_1966}
\begin{equation}
    s_f^2:= {\left(\frac{\bdry f}{\bdry x}\right)^2 s^2_x + \left(\frac{\bdry f}{\bdry y}\right)^2 s^2_y + \left(\frac{\bdry f}{\bdry z}\right)^2 s^2_z+ \cdots},
\end{equation}
where $s_f$ is the standard deviation of $f$, and $s_k$ the standard deviation of $k\in\set{x,y,z,\dots }$.
To be concrete, we apply the formula above to the separately computed errors $e_{x}, e_{z}$, to obtain the overall error $e$ with
\begin{equation}
    e^2 := {e_x^2 \cdot (1-e_z)^2 + e_z^2 \cdot (1-e_x)^2}.
\end{equation}
\medskip

\section{Open-source software}\label{sec:sw-framework}
All our techniques are available in the form of open-source software available on GitHub~\url{https://github.com/cda-tum/mqt-qecc} as part of the \emph{Munich Quantum Toolkit} (MQT) and partly in the LDPC2 package~\url{https://github.com/quantumgizmos/ldpc/tree/ldpc_v2}.
With the proposed software, we hope to provide a set of useful tools for analog information decoding and to emphasize the need for open-source implementations to foster public review, reproducibility, and extendability.

\bibliography{references}

\begin{thebibliography}{117}%
\makeatletter
\providecommand \@ifxundefined [1]{%
 \@ifx{#1\undefined}
}%
\providecommand \@ifnum [1]{%
 \ifnum #1\expandafter \@firstoftwo
 \else \expandafter \@secondoftwo
 \fi
}%
\providecommand \@ifx [1]{%
 \ifx #1\expandafter \@firstoftwo
 \else \expandafter \@secondoftwo
 \fi
}%
\providecommand \natexlab [1]{#1}%
\providecommand \enquote  [1]{``#1''}%
\providecommand \bibnamefont  [1]{#1}%
\providecommand \bibfnamefont [1]{#1}%
\providecommand \citenamefont [1]{#1}%
\providecommand \href@noop [0]{\@secondoftwo}%
\providecommand \href [0]{\begingroup \@sanitize@url \@href}%
\providecommand \@href[1]{\@@startlink{#1}\@@href}%
\providecommand \@@href[1]{\endgroup#1\@@endlink}%
\providecommand \@sanitize@url [0]{\catcode `\\12\catcode `\$12\catcode
  `\&12\catcode `\#12\catcode `\^12\catcode `\_12\catcode `\%12\relax}%
\providecommand \@@startlink[1]{}%
\providecommand \@@endlink[0]{}%
\providecommand \url  [0]{\begingroup\@sanitize@url \@url }%
\providecommand \@url [1]{\endgroup\@href {#1}{\urlprefix }}%
\providecommand \urlprefix  [0]{URL }%
\providecommand \Eprint [0]{\href }%
\providecommand \doibase [0]{https://doi.org/}%
\providecommand \selectlanguage [0]{\@gobble}%
\providecommand \bibinfo  [0]{\@secondoftwo}%
\providecommand \bibfield  [0]{\@secondoftwo}%
\providecommand \translation [1]{[#1]}%
\providecommand \BibitemOpen [0]{}%
\providecommand \bibitemStop [0]{}%
\providecommand \bibitemNoStop [0]{.\EOS\space}%
\providecommand \EOS [0]{\spacefactor3000\relax}%
\providecommand \BibitemShut  [1]{\csname bibitem#1\endcsname}%
\let\auto@bib@innerbib\@empty
\bibitem [{\citenamefont {Bruzewicz}\ \emph {et~al.}(2019)\citenamefont
  {Bruzewicz}, \citenamefont {Chiaverini}, \citenamefont {McConnell},\ and\
  \citenamefont {Sage}}]{bruzewicz_trapped-ion_2019}%
  \BibitemOpen
  \bibfield  {author} {\bibinfo {author} {\bibfnamefont {C.~D.}\ \bibnamefont
  {Bruzewicz}}, \bibinfo {author} {\bibfnamefont {J.}~\bibnamefont
  {Chiaverini}}, \bibinfo {author} {\bibfnamefont {R.}~\bibnamefont
  {McConnell}},\ and\ \bibinfo {author} {\bibfnamefont {J.~M.}\ \bibnamefont
  {Sage}},\ }\bibfield  {title} {\bibinfo {title} {Trapped-ion quantum
  computing: Progress and challenges},\ }\href
  {https://doi.org/10.1063/1.5088164} {\bibfield  {journal} {\bibinfo
  {journal} {Appl. Phys. Rev}\ }\textbf {\bibinfo {volume} {6}},\ \bibinfo
  {pages} {021314} (\bibinfo {year} {2019})}\BibitemShut {NoStop}%
\bibitem [{\citenamefont {Wineland}(2013)}]{wineland_nobel_2013}%
  \BibitemOpen
  \bibfield  {author} {\bibinfo {author} {\bibfnamefont {D.~J.}\ \bibnamefont
  {Wineland}},\ }\bibfield  {title} {\bibinfo {title} {{Nobel lecture:
  Superposition, entanglement, and raising Schr{\"o}dinger's cat}},\ }\href
  {https://doi.org/10.1103/RevModPhys.85.1103} {\bibfield  {journal} {\bibinfo
  {journal} {Rev. Mod. Phys.}\ }\textbf {\bibinfo {volume} {85}},\ \bibinfo
  {pages} {1103} (\bibinfo {year} {2013})}\BibitemShut {NoStop}%
\bibitem [{\citenamefont {Devoret}\ and\ \citenamefont
  {Schoelkopf}(2013)}]{devoret_superconducting_2013}%
  \BibitemOpen
  \bibfield  {author} {\bibinfo {author} {\bibfnamefont {M.~H.}\ \bibnamefont
  {Devoret}}\ and\ \bibinfo {author} {\bibfnamefont {R.~J.}\ \bibnamefont
  {Schoelkopf}},\ }\bibfield  {title} {\bibinfo {title} {Superconducting
  circuits for quantum information: An outlook},\ }\href
  {https://doi.org/10.1126/science.1231930} {\bibfield  {journal} {\bibinfo
  {journal} {Science}\ }\textbf {\bibinfo {volume} {339}},\ \bibinfo {pages}
  {1169} (\bibinfo {year} {2013})}\BibitemShut {NoStop}%
\bibitem [{\citenamefont {Burkard}\ \emph {et~al.}(2023)\citenamefont
  {Burkard}, \citenamefont {Ladd}, \citenamefont {Pan}, \citenamefont
  {Nichol},\ and\ \citenamefont {Petta}}]{burkard_semiconductor_2023}%
  \BibitemOpen
  \bibfield  {author} {\bibinfo {author} {\bibfnamefont {G.}~\bibnamefont
  {Burkard}}, \bibinfo {author} {\bibfnamefont {T.~D.}\ \bibnamefont {Ladd}},
  \bibinfo {author} {\bibfnamefont {A.}~\bibnamefont {Pan}}, \bibinfo {author}
  {\bibfnamefont {J.~M.}\ \bibnamefont {Nichol}},\ and\ \bibinfo {author}
  {\bibfnamefont {J.~R.}\ \bibnamefont {Petta}},\ }\bibfield  {title} {\bibinfo
  {title} {Semiconductor spin qubits},\ }\href
  {https://doi.org/10.1103/RevModPhys.95.025003} {\bibfield  {journal}
  {\bibinfo  {journal} {Rev. Mod. Phys.}\ }\textbf {\bibinfo {volume} {95}},\
  \bibinfo {pages} {025003} (\bibinfo {year} {2023})}\BibitemShut {NoStop}%
\bibitem [{\citenamefont {Henriet}\ \emph {et~al.}(2020)\citenamefont
  {Henriet}, \citenamefont {Beguin}, \citenamefont {Signoles}, \citenamefont
  {Lahaye}, \citenamefont {Browaeys}, \citenamefont {Reymond},\ and\
  \citenamefont {Jurczak}}]{henriet_quantum_2020}%
  \BibitemOpen
  \bibfield  {author} {\bibinfo {author} {\bibfnamefont {L.}~\bibnamefont
  {Henriet}}, \bibinfo {author} {\bibfnamefont {L.}~\bibnamefont {Beguin}},
  \bibinfo {author} {\bibfnamefont {A.}~\bibnamefont {Signoles}}, \bibinfo
  {author} {\bibfnamefont {T.}~\bibnamefont {Lahaye}}, \bibinfo {author}
  {\bibfnamefont {A.}~\bibnamefont {Browaeys}}, \bibinfo {author}
  {\bibfnamefont {G.-O.}\ \bibnamefont {Reymond}},\ and\ \bibinfo {author}
  {\bibfnamefont {C.}~\bibnamefont {Jurczak}},\ }\bibfield  {title} {\bibinfo
  {title} {Quantum computing with neutral atoms},\ }\href
  {https://doi.org/10.22331/q-2020-09-21-327} {\bibfield  {journal} {\bibinfo
  {journal} {Quantum}\ }\textbf {\bibinfo {volume} {4}},\ \bibinfo {pages}
  {327} (\bibinfo {year} {2020})}\BibitemShut {NoStop}%
\bibitem [{\citenamefont {Deshpande}\ \emph {et~al.}(2022)\citenamefont
  {Deshpande}, \citenamefont {Niroula}, \citenamefont {Shtanko}, \citenamefont
  {Gorshkov}, \citenamefont {Fefferman},\ and\ \citenamefont
  {Gullans}}]{deshpande_tight_2022}%
  \BibitemOpen
  \bibfield  {author} {\bibinfo {author} {\bibfnamefont {A.}~\bibnamefont
  {Deshpande}}, \bibinfo {author} {\bibfnamefont {P.}~\bibnamefont {Niroula}},
  \bibinfo {author} {\bibfnamefont {O.}~\bibnamefont {Shtanko}}, \bibinfo
  {author} {\bibfnamefont {A.~V.}\ \bibnamefont {Gorshkov}}, \bibinfo {author}
  {\bibfnamefont {B.}~\bibnamefont {Fefferman}},\ and\ \bibinfo {author}
  {\bibfnamefont {M.~J.}\ \bibnamefont {Gullans}},\ }\bibfield  {title}
  {\bibinfo {title} {Tight bounds on the convergence of noisy random circuits
  to the uniform distribution},\ }\href
  {https://doi.org/10.1103/PRXQuantum.3.040329} {\bibfield  {journal} {\bibinfo
   {journal} {PRX Quantum}\ }\textbf {\bibinfo {volume} {3}},\ \bibinfo {pages}
  {040329} (\bibinfo {year} {2022})}\BibitemShut {NoStop}%
\bibitem [{\citenamefont {Stilck~Fran{\c c}a}\ and\ \citenamefont
  {{Garc{\'i}a-Patr{\'o}n}}(2021)}]{stilck_franca_limitations_2021}%
  \BibitemOpen
  \bibfield  {author} {\bibinfo {author} {\bibfnamefont {D.}~\bibnamefont
  {Stilck~Fran{\c c}a}}\ and\ \bibinfo {author} {\bibfnamefont
  {R.}~\bibnamefont {{Garc{\'i}a-Patr{\'o}n}}},\ }\bibfield  {title} {\bibinfo
  {title} {Limitations of optimization algorithms on noisy quantum devices},\
  }\href {https://doi.org/10.1038/s41567-021-01356-3} {\bibfield  {journal}
  {\bibinfo  {journal} {Nat. Phys.}\ }\textbf {\bibinfo {volume} {17}},\
  \bibinfo {pages} {1221} (\bibinfo {year} {2021})}\BibitemShut {NoStop}%
\bibitem [{\citenamefont {Cochrane}\ \emph {et~al.}(1999)\citenamefont
  {Cochrane}, \citenamefont {Milburn},\ and\ \citenamefont
  {Munro}}]{cochrane_macroscopically_1999}%
  \BibitemOpen
  \bibfield  {author} {\bibinfo {author} {\bibfnamefont {P.~T.}\ \bibnamefont
  {Cochrane}}, \bibinfo {author} {\bibfnamefont {G.~J.}\ \bibnamefont
  {Milburn}},\ and\ \bibinfo {author} {\bibfnamefont {W.~J.}\ \bibnamefont
  {Munro}},\ }\bibfield  {title} {\bibinfo {title} {Macroscopically distinct
  quantum-superposition states as a bosonic code for amplitude damping},\
  }\href {https://doi.org/10.1103/PhysRevA.59.2631} {\bibfield  {journal}
  {\bibinfo  {journal} {Phys. Rev. A}\ }\textbf {\bibinfo {volume} {59}},\
  \bibinfo {pages} {2631} (\bibinfo {year} {1999})}\BibitemShut {NoStop}%
\bibitem [{\citenamefont {Gottesman}\ \emph {et~al.}(2001)\citenamefont
  {Gottesman}, \citenamefont {Kitaev},\ and\ \citenamefont
  {Preskill}}]{gottesman_encoding_2001}%
  \BibitemOpen
  \bibfield  {author} {\bibinfo {author} {\bibfnamefont {D.}~\bibnamefont
  {Gottesman}}, \bibinfo {author} {\bibfnamefont {A.}~\bibnamefont {Kitaev}},\
  and\ \bibinfo {author} {\bibfnamefont {J.}~\bibnamefont {Preskill}},\
  }\bibfield  {title} {\bibinfo {title} {Encoding a qubit in an oscillator},\
  }\href {https://doi.org/10.1103/PhysRevA.64.012310} {\bibfield  {journal}
  {\bibinfo  {journal} {Phys. Rev. A}\ }\textbf {\bibinfo {volume} {64}},\
  \bibinfo {pages} {012310} (\bibinfo {year} {2001})}\BibitemShut {NoStop}%
\bibitem [{\citenamefont {Ofek}\ \emph {et~al.}(2016)\citenamefont {Ofek},
  \citenamefont {Petrenko}, \citenamefont {Heeres}, \citenamefont {Reinhold},
  \citenamefont {Leghtas}, \citenamefont {Vlastakis}, \citenamefont {Liu},
  \citenamefont {Frunzio}, \citenamefont {Girvin}, \citenamefont {Jiang},
  \citenamefont {Mirrahimi}, \citenamefont {Devoret},\ and\ \citenamefont
  {Schoelkopf}}]{ofek_extending_2016}%
  \BibitemOpen
  \bibfield  {author} {\bibinfo {author} {\bibfnamefont {N.}~\bibnamefont
  {Ofek}}, \bibinfo {author} {\bibfnamefont {A.}~\bibnamefont {Petrenko}},
  \bibinfo {author} {\bibfnamefont {R.}~\bibnamefont {Heeres}}, \bibinfo
  {author} {\bibfnamefont {P.}~\bibnamefont {Reinhold}}, \bibinfo {author}
  {\bibfnamefont {Z.}~\bibnamefont {Leghtas}}, \bibinfo {author} {\bibfnamefont
  {B.}~\bibnamefont {Vlastakis}}, \bibinfo {author} {\bibfnamefont
  {Y.}~\bibnamefont {Liu}}, \bibinfo {author} {\bibfnamefont {L.}~\bibnamefont
  {Frunzio}}, \bibinfo {author} {\bibfnamefont {S.~M.}\ \bibnamefont {Girvin}},
  \bibinfo {author} {\bibfnamefont {L.}~\bibnamefont {Jiang}}, \bibinfo
  {author} {\bibfnamefont {M.}~\bibnamefont {Mirrahimi}}, \bibinfo {author}
  {\bibfnamefont {M.~H.}\ \bibnamefont {Devoret}},\ and\ \bibinfo {author}
  {\bibfnamefont {R.~J.}\ \bibnamefont {Schoelkopf}},\ }\bibfield  {title}
  {\bibinfo {title} {Extending the lifetime of a quantum bit with error
  correction in superconducting circuits},\ }\href
  {https://doi.org/10.1038/nature18949} {\bibfield  {journal} {\bibinfo
  {journal} {Nature}\ }\textbf {\bibinfo {volume} {536}},\ \bibinfo {pages}
  {441} (\bibinfo {year} {2016})}\BibitemShut {NoStop}%
\bibitem [{\citenamefont {Sivak}\ \emph {et~al.}(2023)\citenamefont {Sivak},
  \citenamefont {Eickbusch}, \citenamefont {Royer}, \citenamefont {Singh},
  \citenamefont {Tsioutsios}, \citenamefont {Ganjam}, \citenamefont {Miano},
  \citenamefont {Brock}, \citenamefont {Ding}, \citenamefont {Frunzio},
  \citenamefont {Girvin}, \citenamefont {Schoelkopf},\ and\ \citenamefont
  {Devoret}}]{sivak_real-time_2023}%
  \BibitemOpen
  \bibfield  {author} {\bibinfo {author} {\bibfnamefont {V.~V.}\ \bibnamefont
  {Sivak}}, \bibinfo {author} {\bibfnamefont {A.}~\bibnamefont {Eickbusch}},
  \bibinfo {author} {\bibfnamefont {B.}~\bibnamefont {Royer}}, \bibinfo
  {author} {\bibfnamefont {S.}~\bibnamefont {Singh}}, \bibinfo {author}
  {\bibfnamefont {I.}~\bibnamefont {Tsioutsios}}, \bibinfo {author}
  {\bibfnamefont {S.}~\bibnamefont {Ganjam}}, \bibinfo {author} {\bibfnamefont
  {A.}~\bibnamefont {Miano}}, \bibinfo {author} {\bibfnamefont {B.~L.}\
  \bibnamefont {Brock}}, \bibinfo {author} {\bibfnamefont {A.~Z.}\ \bibnamefont
  {Ding}}, \bibinfo {author} {\bibfnamefont {L.}~\bibnamefont {Frunzio}},
  \bibinfo {author} {\bibfnamefont {S.~M.}\ \bibnamefont {Girvin}}, \bibinfo
  {author} {\bibfnamefont {R.~J.}\ \bibnamefont {Schoelkopf}},\ and\ \bibinfo
  {author} {\bibfnamefont {M.~H.}\ \bibnamefont {Devoret}},\ }\bibfield
  {title} {\bibinfo {title} {Real-time quantum error correction beyond
  break-even},\ }\href {https://doi.org/10.1038/s41586-023-05782-6} {\bibfield
  {journal} {\bibinfo  {journal} {Nature}\ }\textbf {\bibinfo {volume} {616}},\
  \bibinfo {pages} {50} (\bibinfo {year} {2023})},\ \Eprint
  {https://arxiv.org/abs/2211.09116} {arxiv:2211.09116 [quant-ph]} \BibitemShut
  {NoStop}%
\bibitem [{\citenamefont {Grimm}\ \emph {et~al.}(2020)\citenamefont {Grimm},
  \citenamefont {Frattini}, \citenamefont {Puri}, \citenamefont {Mundhada},
  \citenamefont {Touzard}, \citenamefont {Mirrahimi}, \citenamefont {Girvin},
  \citenamefont {Shankar},\ and\ \citenamefont
  {Devoret}}]{grimm_stabilization_2020}%
  \BibitemOpen
  \bibfield  {author} {\bibinfo {author} {\bibfnamefont {A.}~\bibnamefont
  {Grimm}}, \bibinfo {author} {\bibfnamefont {N.~E.}\ \bibnamefont {Frattini}},
  \bibinfo {author} {\bibfnamefont {S.}~\bibnamefont {Puri}}, \bibinfo {author}
  {\bibfnamefont {S.~O.}\ \bibnamefont {Mundhada}}, \bibinfo {author}
  {\bibfnamefont {S.}~\bibnamefont {Touzard}}, \bibinfo {author} {\bibfnamefont
  {M.}~\bibnamefont {Mirrahimi}}, \bibinfo {author} {\bibfnamefont {S.~M.}\
  \bibnamefont {Girvin}}, \bibinfo {author} {\bibfnamefont {S.}~\bibnamefont
  {Shankar}},\ and\ \bibinfo {author} {\bibfnamefont {M.~H.}\ \bibnamefont
  {Devoret}},\ }\bibfield  {title} {\bibinfo {title} {{Stabilization and
  operation of a Kerr-cat qubit}},\ }\href
  {https://doi.org/10.1038/s41586-020-2587-z} {\bibfield  {journal} {\bibinfo
  {journal} {Nature}\ }\textbf {\bibinfo {volume} {584}},\ \bibinfo {pages}
  {205} (\bibinfo {year} {2020})}\BibitemShut {NoStop}%
\bibitem [{\citenamefont {Lescanne}\ \emph {et~al.}(2020)\citenamefont
  {Lescanne}, \citenamefont {Villiers}, \citenamefont {Peronnin}, \citenamefont
  {Sarlette}, \citenamefont {Delbecq}, \citenamefont {Huard}, \citenamefont
  {Kontos}, \citenamefont {Mirrahimi},\ and\ \citenamefont
  {Leghtas}}]{lescanne_exponential_2020}%
  \BibitemOpen
  \bibfield  {author} {\bibinfo {author} {\bibfnamefont {R.}~\bibnamefont
  {Lescanne}}, \bibinfo {author} {\bibfnamefont {M.}~\bibnamefont {Villiers}},
  \bibinfo {author} {\bibfnamefont {T.}~\bibnamefont {Peronnin}}, \bibinfo
  {author} {\bibfnamefont {A.}~\bibnamefont {Sarlette}}, \bibinfo {author}
  {\bibfnamefont {M.}~\bibnamefont {Delbecq}}, \bibinfo {author} {\bibfnamefont
  {B.}~\bibnamefont {Huard}}, \bibinfo {author} {\bibfnamefont
  {T.}~\bibnamefont {Kontos}}, \bibinfo {author} {\bibfnamefont
  {M.}~\bibnamefont {Mirrahimi}},\ and\ \bibinfo {author} {\bibfnamefont
  {Z.}~\bibnamefont {Leghtas}},\ }\bibfield  {title} {\bibinfo {title}
  {Exponential suppression of bit-flips in a qubit encoded in an oscillator},\
  }\href {https://doi.org/10.1038/s41567-020-0824-x} {\bibfield  {journal}
  {\bibinfo  {journal} {Nat. Phys.}\ }\textbf {\bibinfo {volume} {16}},\
  \bibinfo {pages} {509} (\bibinfo {year} {2020})}\BibitemShut {NoStop}%
\bibitem [{\citenamefont {Darmawan}\ \emph {et~al.}(2021)\citenamefont
  {Darmawan}, \citenamefont {Brown}, \citenamefont {Grimsmo}, \citenamefont
  {Tuckett},\ and\ \citenamefont {Puri}}]{darmawan_practical_2021}%
  \BibitemOpen
  \bibfield  {author} {\bibinfo {author} {\bibfnamefont {A.~S.}\ \bibnamefont
  {Darmawan}}, \bibinfo {author} {\bibfnamefont {B.~J.}\ \bibnamefont {Brown}},
  \bibinfo {author} {\bibfnamefont {A.~L.}\ \bibnamefont {Grimsmo}}, \bibinfo
  {author} {\bibfnamefont {D.~K.}\ \bibnamefont {Tuckett}},\ and\ \bibinfo
  {author} {\bibfnamefont {S.}~\bibnamefont {Puri}},\ }\bibfield  {title}
  {\bibinfo {title} {{Practical quantum error correction with the XZZX code and
  Kerr-cat qubits}},\ }\href {https://doi.org/10.1103/PRXQuantum.2.030345}
  {\bibfield  {journal} {\bibinfo  {journal} {PRX Quantum}\ }\textbf {\bibinfo
  {volume} {2}},\ \bibinfo {pages} {030345} (\bibinfo {year} {2021})},\ \Eprint
  {https://arxiv.org/abs/2104.09539} {arxiv:2104.09539} \BibitemShut {NoStop}%
\bibitem [{\citenamefont {Bonilla~Ataides}\ \emph {et~al.}(2021)\citenamefont
  {Bonilla~Ataides}, \citenamefont {Tuckett}, \citenamefont {Bartlett},
  \citenamefont {Flammia},\ and\ \citenamefont
  {Brown}}]{bonilla_ataides_xzzx_2021}%
  \BibitemOpen
  \bibfield  {author} {\bibinfo {author} {\bibfnamefont {J.~P.}\ \bibnamefont
  {Bonilla~Ataides}}, \bibinfo {author} {\bibfnamefont {D.~K.}\ \bibnamefont
  {Tuckett}}, \bibinfo {author} {\bibfnamefont {S.~D.}\ \bibnamefont
  {Bartlett}}, \bibinfo {author} {\bibfnamefont {S.~T.}\ \bibnamefont
  {Flammia}},\ and\ \bibinfo {author} {\bibfnamefont {B.~J.}\ \bibnamefont
  {Brown}},\ }\bibfield  {title} {\bibinfo {title} {{The XZZX surface code}},\
  }\href {https://doi.org/10.1038/s41467-021-22274-1} {\bibfield  {journal}
  {\bibinfo  {journal} {Nat. Commun.}\ }\textbf {\bibinfo {volume} {12}},\
  \bibinfo {pages} {2172} (\bibinfo {year} {2021})},\ \Eprint
  {https://arxiv.org/abs/2009.07851} {arxiv:2009.07851} \BibitemShut {NoStop}%
\bibitem [{\citenamefont {Tiurev}\ \emph {et~al.}(2023)\citenamefont {Tiurev},
  \citenamefont {Derks}, \citenamefont {Roffe}, \citenamefont {Eisert},\ and\
  \citenamefont {Reiner}}]{tiurev_correcting_2023}%
  \BibitemOpen
  \bibfield  {author} {\bibinfo {author} {\bibfnamefont {K.}~\bibnamefont
  {Tiurev}}, \bibinfo {author} {\bibfnamefont {P.-J. H.~S.}\ \bibnamefont
  {Derks}}, \bibinfo {author} {\bibfnamefont {J.}~\bibnamefont {Roffe}},
  \bibinfo {author} {\bibfnamefont {J.}~\bibnamefont {Eisert}},\ and\ \bibinfo
  {author} {\bibfnamefont {J.-M.}\ \bibnamefont {Reiner}},\ }\bibfield  {title}
  {\bibinfo {title} {Correcting non-independent and non-identically distributed
  errors with surface codes},\ }\href
  {https://doi.org/10.22331/q-2023-09-26-1123} {\bibfield  {journal} {\bibinfo
  {journal} {Quantum}\ }\textbf {\bibinfo {volume} {7}},\ \bibinfo {pages}
  {1123} (\bibinfo {year} {2023})}\BibitemShut {NoStop}%
\bibitem [{\citenamefont {Guillaud}\ and\ \citenamefont
  {Mirrahimi}(2019)}]{guillaud_repetition_2019}%
  \BibitemOpen
  \bibfield  {author} {\bibinfo {author} {\bibfnamefont {J.}~\bibnamefont
  {Guillaud}}\ and\ \bibinfo {author} {\bibfnamefont {M.}~\bibnamefont
  {Mirrahimi}},\ }\bibfield  {title} {\bibinfo {title} {Repetition cat qubits
  for fault-tolerant quantum computation},\ }\href
  {https://doi.org/10.1103/PhysRevX.9.041053} {\bibfield  {journal} {\bibinfo
  {journal} {Phys. Rev. X}\ }\textbf {\bibinfo {volume} {9}},\ \bibinfo {pages}
  {041053} (\bibinfo {year} {2019})}\BibitemShut {NoStop}%
\bibitem [{\citenamefont {Guillaud}\ and\ \citenamefont
  {Mirrahimi}(2021)}]{guillaud_error_2021}%
  \BibitemOpen
  \bibfield  {author} {\bibinfo {author} {\bibfnamefont {J.}~\bibnamefont
  {Guillaud}}\ and\ \bibinfo {author} {\bibfnamefont {M.}~\bibnamefont
  {Mirrahimi}},\ }\bibfield  {title} {\bibinfo {title} {Error rates and
  resource overheads of repetition cat qubits},\ }\href
  {https://doi.org/10.1103/PhysRevA.103.042413} {\bibfield  {journal} {\bibinfo
   {journal} {Phys. Rev. A}\ }\textbf {\bibinfo {volume} {103}},\ \bibinfo
  {pages} {042413} (\bibinfo {year} {2021})},\ \Eprint
  {https://arxiv.org/abs/2009.10756} {arxiv:2009.10756} \BibitemShut {NoStop}%
\bibitem [{\citenamefont {Chamberland}\ \emph {et~al.}(2022)\citenamefont
  {Chamberland}, \citenamefont {Noh}, \citenamefont {{Arrangoiz-Arriola}},
  \citenamefont {Campbell}, \citenamefont {Hann}, \citenamefont {Iverson},
  \citenamefont {Putterman}, \citenamefont {Bohdanowicz}, \citenamefont
  {Flammia}, \citenamefont {Keller}, \citenamefont {Refael}, \citenamefont
  {Preskill}, \citenamefont {Jiang}, \citenamefont {{Safavi-Naeini}},
  \citenamefont {Painter},\ and\ \citenamefont
  {Brand{\~a}o}}]{chamberland_building_2022}%
  \BibitemOpen
  \bibfield  {author} {\bibinfo {author} {\bibfnamefont {C.}~\bibnamefont
  {Chamberland}}, \bibinfo {author} {\bibfnamefont {K.}~\bibnamefont {Noh}},
  \bibinfo {author} {\bibfnamefont {P.}~\bibnamefont {{Arrangoiz-Arriola}}},
  \bibinfo {author} {\bibfnamefont {E.~T.}\ \bibnamefont {Campbell}}, \bibinfo
  {author} {\bibfnamefont {C.~T.}\ \bibnamefont {Hann}}, \bibinfo {author}
  {\bibfnamefont {J.}~\bibnamefont {Iverson}}, \bibinfo {author} {\bibfnamefont
  {H.}~\bibnamefont {Putterman}}, \bibinfo {author} {\bibfnamefont {T.~C.}\
  \bibnamefont {Bohdanowicz}}, \bibinfo {author} {\bibfnamefont {S.~T.}\
  \bibnamefont {Flammia}}, \bibinfo {author} {\bibfnamefont {A.}~\bibnamefont
  {Keller}}, \bibinfo {author} {\bibfnamefont {G.}~\bibnamefont {Refael}},
  \bibinfo {author} {\bibfnamefont {J.}~\bibnamefont {Preskill}}, \bibinfo
  {author} {\bibfnamefont {L.}~\bibnamefont {Jiang}}, \bibinfo {author}
  {\bibfnamefont {A.~H.}\ \bibnamefont {{Safavi-Naeini}}}, \bibinfo {author}
  {\bibfnamefont {O.}~\bibnamefont {Painter}},\ and\ \bibinfo {author}
  {\bibfnamefont {F.~G.}\ \bibnamefont {Brand{\~a}o}},\ }\bibfield  {title}
  {\bibinfo {title} {Building a fault-tolerant quantum computer using
  concatenated cat codes},\ }\href
  {https://doi.org/10.1103/PRXQuantum.3.010329} {\bibfield  {journal} {\bibinfo
   {journal} {PRX Quantum}\ }\textbf {\bibinfo {volume} {3}},\ \bibinfo {pages}
  {010329} (\bibinfo {year} {2022})},\ \Eprint
  {https://arxiv.org/abs/2012.04108} {arxiv:2012.04108} \BibitemShut {NoStop}%
\bibitem [{\citenamefont {R{\'e}gent}\ \emph {et~al.}(2023)\citenamefont
  {R{\'e}gent}, \citenamefont {Berdou}, \citenamefont {Leghtas}, \citenamefont
  {Guillaud},\ and\ \citenamefont {Mirrahimi}}]{regent_high-performance_2023}%
  \BibitemOpen
  \bibfield  {author} {\bibinfo {author} {\bibfnamefont {F.-M.~L.}\
  \bibnamefont {R{\'e}gent}}, \bibinfo {author} {\bibfnamefont
  {C.}~\bibnamefont {Berdou}}, \bibinfo {author} {\bibfnamefont
  {Z.}~\bibnamefont {Leghtas}}, \bibinfo {author} {\bibfnamefont
  {J.}~\bibnamefont {Guillaud}},\ and\ \bibinfo {author} {\bibfnamefont
  {M.}~\bibnamefont {Mirrahimi}},\ }\bibfield  {title} {\bibinfo {title}
  {High-performance repetition cat code using fast noisy operations},\ }\href
  {https://doi.org/10.22331/q-2023-12-06-1198} {\bibfield  {journal} {\bibinfo
  {journal} {Quantum}\ }\textbf {\bibinfo {volume} {7}},\ \bibinfo {pages}
  {1198} (\bibinfo {year} {2023})}\BibitemShut {NoStop}%
\bibitem [{\citenamefont {Stafford}\ and\ \citenamefont
  {Menicucci}(2023)}]{stafford_biased_2023}%
  \BibitemOpen
  \bibfield  {author} {\bibinfo {author} {\bibfnamefont {M.~P.}\ \bibnamefont
  {Stafford}}\ and\ \bibinfo {author} {\bibfnamefont {N.~C.}\ \bibnamefont
  {Menicucci}},\ }\bibfield  {title} {\bibinfo {title} {Biased
  gottesman-kitaev-preskill repetition code},\ }\href
  {https://doi.org/10.1103/PhysRevA.108.052428} {\bibfield  {journal} {\bibinfo
   {journal} {Phys. Rev. A}\ }\textbf {\bibinfo {volume} {108}},\ \bibinfo
  {pages} {052428} (\bibinfo {year} {2023})}\BibitemShut {NoStop}%
\bibitem [{\citenamefont {Vuillot}\ \emph {et~al.}(2019)\citenamefont
  {Vuillot}, \citenamefont {Asasi}, \citenamefont {Wang}, \citenamefont
  {Pryadko},\ and\ \citenamefont {Terhal}}]{vuillot_quantum_2019}%
  \BibitemOpen
  \bibfield  {author} {\bibinfo {author} {\bibfnamefont {C.}~\bibnamefont
  {Vuillot}}, \bibinfo {author} {\bibfnamefont {H.}~\bibnamefont {Asasi}},
  \bibinfo {author} {\bibfnamefont {Y.}~\bibnamefont {Wang}}, \bibinfo {author}
  {\bibfnamefont {L.~P.}\ \bibnamefont {Pryadko}},\ and\ \bibinfo {author}
  {\bibfnamefont {B.~M.}\ \bibnamefont {Terhal}},\ }\bibfield  {title}
  {\bibinfo {title} {Quantum error correction with the toric
  gottesman-kitaev-preskill code},\ }\href
  {https://doi.org/10.1103/PhysRevA.99.032344} {\bibfield  {journal} {\bibinfo
  {journal} {Phys. Rev. A}\ }\textbf {\bibinfo {volume} {99}},\ \bibinfo
  {pages} {032344} (\bibinfo {year} {2019})}\BibitemShut {NoStop}%
\bibitem [{\citenamefont {Noh}\ and\ \citenamefont
  {Chamberland}(2020)}]{noh_fault-tolerant_2020}%
  \BibitemOpen
  \bibfield  {author} {\bibinfo {author} {\bibfnamefont {K.}~\bibnamefont
  {Noh}}\ and\ \bibinfo {author} {\bibfnamefont {C.}~\bibnamefont
  {Chamberland}},\ }\bibfield  {title} {\bibinfo {title} {Fault-tolerant
  bosonic quantum error correction with the surface--gottesman-kitaev-preskill
  code},\ }\href {https://doi.org/10.1103/PhysRevA.101.012316} {\bibfield
  {journal} {\bibinfo  {journal} {Phys. Rev. A}\ }\textbf {\bibinfo {volume}
  {101}},\ \bibinfo {pages} {012316} (\bibinfo {year} {2020})}\BibitemShut
  {NoStop}%
\bibitem [{\citenamefont {Raveendran}\ \emph
  {et~al.}(2022{\natexlab{a}})\citenamefont {Raveendran}, \citenamefont
  {Rengaswamy}, \citenamefont {Rozp{\k e}dek}, \citenamefont {Raina},
  \citenamefont {Jiang},\ and\ \citenamefont
  {Vasi{\'c}}}]{raveendran_finite_2022}%
  \BibitemOpen
  \bibfield  {author} {\bibinfo {author} {\bibfnamefont {N.}~\bibnamefont
  {Raveendran}}, \bibinfo {author} {\bibfnamefont {N.}~\bibnamefont
  {Rengaswamy}}, \bibinfo {author} {\bibfnamefont {F.}~\bibnamefont {Rozp{\k
  e}dek}}, \bibinfo {author} {\bibfnamefont {A.}~\bibnamefont {Raina}},
  \bibinfo {author} {\bibfnamefont {L.}~\bibnamefont {Jiang}},\ and\ \bibinfo
  {author} {\bibfnamefont {B.}~\bibnamefont {Vasi{\'c}}},\ }\bibfield  {title}
  {\bibinfo {title} {Finite rate qldpc-gkp coding scheme that surpasses the css
  hamming bound},\ }\href {https://doi.org/10.22331/q-2022-07-20-767}
  {\bibfield  {journal} {\bibinfo  {journal} {Quantum}\ }\textbf {\bibinfo
  {volume} {6}},\ \bibinfo {pages} {767} (\bibinfo {year}
  {2022}{\natexlab{a}})}\BibitemShut {NoStop}%
\bibitem [{\citenamefont {Hastings}\ \emph {et~al.}(2021)\citenamefont
  {Hastings}, \citenamefont {Haah},\ and\ \citenamefont
  {O'Donnell}}]{hastings_fiber_2021}%
  \BibitemOpen
  \bibfield  {author} {\bibinfo {author} {\bibfnamefont {M.~B.}\ \bibnamefont
  {Hastings}}, \bibinfo {author} {\bibfnamefont {J.}~\bibnamefont {Haah}},\
  and\ \bibinfo {author} {\bibfnamefont {R.}~\bibnamefont {O'Donnell}},\
  }\bibfield  {title} {\bibinfo {title} {Fiber bundle codes: breaking the n1/2
  polylog(n) barrier for quantum ldpc codes},\ }in\ \href
  {https://doi.org/10.1145/3406325.3451005} {\emph {\bibinfo {booktitle} {Proc.
  53rd Annu. ACM SIGACT Symp. Theory Comput.}}},\ \bibinfo {series and number}
  {STOC 2021}\ (\bibinfo  {publisher} {Association for Computing Machinery},\
  \bibinfo {address} {New York, NY, USA},\ \bibinfo {year} {2021})\ pp.\
  \bibinfo {pages} {1276--1288}\BibitemShut {NoStop}%
\bibitem [{\citenamefont {Breuckmann}\ and\ \citenamefont
  {Eberhardt}(2021)}]{breuckmann_balanced_2021}%
  \BibitemOpen
  \bibfield  {author} {\bibinfo {author} {\bibfnamefont {N.~P.}\ \bibnamefont
  {Breuckmann}}\ and\ \bibinfo {author} {\bibfnamefont {J.~N.}\ \bibnamefont
  {Eberhardt}},\ }\bibfield  {title} {\bibinfo {title} {Balanced product
  quantum codes},\ }\href {https://doi.org/10.1109/TIT.2021.3097347} {\bibfield
   {journal} {\bibinfo  {journal} {IEEE Trans. Inf. Theory}\ }\textbf {\bibinfo
  {volume} {67}},\ \bibinfo {pages} {6653} (\bibinfo {year}
  {2021})}\BibitemShut {NoStop}%
\bibitem [{\citenamefont {Panteleev}\ and\ \citenamefont
  {Kalachev}(2022)}]{panteleev_asymptotically_2022}%
  \BibitemOpen
  \bibfield  {author} {\bibinfo {author} {\bibfnamefont {P.}~\bibnamefont
  {Panteleev}}\ and\ \bibinfo {author} {\bibfnamefont {G.}~\bibnamefont
  {Kalachev}},\ }\bibfield  {title} {\bibinfo {title} {{Asymptotically good
  quantum and locally testable classical LDPC codes}},\ }in\ \href
  {https://doi.org/10.1145/3519935.3520017} {\emph {\bibinfo {booktitle} {Proc.
  54th Annu. ACM SIGACT Symp. Theory Comput.}}},\ \bibinfo {series and number}
  {STOC 2022}\ (\bibinfo  {publisher} {Association for Computing Machinery},\
  \bibinfo {address} {New York, NY, USA},\ \bibinfo {year} {2022})\ pp.\
  \bibinfo {pages} {375--388}\BibitemShut {NoStop}%
\bibitem [{\citenamefont {Dinur}\ \emph {et~al.}(2022)\citenamefont {Dinur},
  \citenamefont {Hsieh}, \citenamefont {Lin},\ and\ \citenamefont
  {Vidick}}]{dinur_good_2022}%
  \BibitemOpen
  \bibfield  {author} {\bibinfo {author} {\bibfnamefont {I.}~\bibnamefont
  {Dinur}}, \bibinfo {author} {\bibfnamefont {M.-H.}\ \bibnamefont {Hsieh}},
  \bibinfo {author} {\bibfnamefont {T.-C.}\ \bibnamefont {Lin}},\ and\ \bibinfo
  {author} {\bibfnamefont {T.}~\bibnamefont {Vidick}},\ }\href
  {https://doi.org/10.48550/arXiv.2206.07750} {\bibinfo {title} {Good quantum
  ldpc codes with linear time decoders}} (\bibinfo {year} {2022}),\ \Eprint
  {https://arxiv.org/abs/2206.07750} {arxiv:2206.07750 [quant-ph]} \BibitemShut
  {NoStop}%
\bibitem [{\citenamefont {Leverrier}\ and\ \citenamefont
  {Z{\'e}mor}(2022)}]{leverrier_quantum_2022}%
  \BibitemOpen
  \bibfield  {author} {\bibinfo {author} {\bibfnamefont {A.}~\bibnamefont
  {Leverrier}}\ and\ \bibinfo {author} {\bibfnamefont {G.}~\bibnamefont
  {Z{\'e}mor}},\ }\href {https://doi.org/10.48550/arXiv.2202.13641} {\bibinfo
  {title} {Quantum tanner codes}} (\bibinfo {year} {2022}),\ \Eprint
  {https://arxiv.org/abs/2202.13641} {arxiv:2202.13641 [quant-ph]} \BibitemShut
  {NoStop}%
\bibitem [{\citenamefont {Bravyi}\ and\ \citenamefont
  {Terhal}(2009)}]{bravyi_no-go_2009}%
  \BibitemOpen
  \bibfield  {author} {\bibinfo {author} {\bibfnamefont {S.}~\bibnamefont
  {Bravyi}}\ and\ \bibinfo {author} {\bibfnamefont {B.}~\bibnamefont
  {Terhal}},\ }\bibfield  {title} {\bibinfo {title} {A no-go theorem for a
  two-dimensional self-correcting quantum memory based on stabilizer codes},\
  }\href {https://doi.org/10.1088/1367-2630/11/4/043029} {\bibfield  {journal}
  {\bibinfo  {journal} {New J. Phys.}\ }\textbf {\bibinfo {volume} {11}},\
  \bibinfo {pages} {043029} (\bibinfo {year} {2009})}\BibitemShut {NoStop}%
\bibitem [{\citenamefont {Baspin}\ and\ \citenamefont
  {Krishna}(2022{\natexlab{a}})}]{baspin_connectivity_2022}%
  \BibitemOpen
  \bibfield  {author} {\bibinfo {author} {\bibfnamefont {N.}~\bibnamefont
  {Baspin}}\ and\ \bibinfo {author} {\bibfnamefont {A.}~\bibnamefont
  {Krishna}},\ }\bibfield  {title} {\bibinfo {title} {Connectivity constrains
  quantum codes},\ }\href {https://doi.org/10.22331/q-2022-05-13-711}
  {\bibfield  {journal} {\bibinfo  {journal} {Quantum}\ }\textbf {\bibinfo
  {volume} {6}},\ \bibinfo {pages} {711} (\bibinfo {year}
  {2022}{\natexlab{a}})}\BibitemShut {NoStop}%
\bibitem [{\citenamefont {Baspin}\ and\ \citenamefont
  {Krishna}(2022{\natexlab{b}})}]{baspin_quantifying_2022}%
  \BibitemOpen
  \bibfield  {author} {\bibinfo {author} {\bibfnamefont {N.}~\bibnamefont
  {Baspin}}\ and\ \bibinfo {author} {\bibfnamefont {A.}~\bibnamefont
  {Krishna}},\ }\bibfield  {title} {\bibinfo {title} {Quantifying nonlocality:
  How outperforming local quantum codes is expensive},\ }\href
  {https://doi.org/10.1103/PhysRevLett.129.050505} {\bibfield  {journal}
  {\bibinfo  {journal} {Phys. Rev. Lett.}\ }\textbf {\bibinfo {volume} {129}},\
  \bibinfo {pages} {050505} (\bibinfo {year} {2022}{\natexlab{b}})}\BibitemShut
  {NoStop}%
\bibitem [{\citenamefont {Baspin}\ \emph
  {et~al.}(2023{\natexlab{a}})\citenamefont {Baspin}, \citenamefont
  {Guruswami}, \citenamefont {Krishna},\ and\ \citenamefont
  {Li}}]{baspin_improved_2023}%
  \BibitemOpen
  \bibfield  {author} {\bibinfo {author} {\bibfnamefont {N.}~\bibnamefont
  {Baspin}}, \bibinfo {author} {\bibfnamefont {V.}~\bibnamefont {Guruswami}},
  \bibinfo {author} {\bibfnamefont {A.}~\bibnamefont {Krishna}},\ and\ \bibinfo
  {author} {\bibfnamefont {R.}~\bibnamefont {Li}},\ }\href
  {https://doi.org/10.48550/arXiv.2307.03283} {\bibinfo {title} {Improved
  rate-distance trade-offs for quantum codes with restricted connectivity}}
  (\bibinfo {year} {2023}{\natexlab{a}}),\ \Eprint
  {https://arxiv.org/abs/2307.03283} {arxiv:2307.03283 [quant-ph]} \BibitemShut
  {NoStop}%
\bibitem [{\citenamefont {Baspin}\ \emph
  {et~al.}(2023{\natexlab{b}})\citenamefont {Baspin}, \citenamefont {Fawzi},\
  and\ \citenamefont {Shayeghi}}]{baspin_lower_2023}%
  \BibitemOpen
  \bibfield  {author} {\bibinfo {author} {\bibfnamefont {N.}~\bibnamefont
  {Baspin}}, \bibinfo {author} {\bibfnamefont {O.}~\bibnamefont {Fawzi}},\ and\
  \bibinfo {author} {\bibfnamefont {A.}~\bibnamefont {Shayeghi}},\ }\href
  {https://doi.org/10.48550/arXiv.2302.04317} {\bibinfo {title} {A lower bound
  on the overhead of quantum error correction in low dimensions}} (\bibinfo
  {year} {2023}{\natexlab{b}}),\ \Eprint {https://arxiv.org/abs/2302.04317}
  {arxiv:2302.04317 [quant-ph]} \BibitemShut {NoStop}%
\bibitem [{\citenamefont {Xu}\ \emph {et~al.}(2023)\citenamefont {Xu},
  \citenamefont {Ataides}, \citenamefont {Pattison}, \citenamefont
  {Raveendran}, \citenamefont {Bluvstein}, \citenamefont {Wurtz}, \citenamefont
  {Vasic}, \citenamefont {Lukin}, \citenamefont {Jiang},\ and\ \citenamefont
  {Zhou}}]{xu_constant-overhead_2023}%
  \BibitemOpen
  \bibfield  {author} {\bibinfo {author} {\bibfnamefont {Q.}~\bibnamefont
  {Xu}}, \bibinfo {author} {\bibfnamefont {J.~P.~B.}\ \bibnamefont {Ataides}},
  \bibinfo {author} {\bibfnamefont {C.~A.}\ \bibnamefont {Pattison}}, \bibinfo
  {author} {\bibfnamefont {N.}~\bibnamefont {Raveendran}}, \bibinfo {author}
  {\bibfnamefont {D.}~\bibnamefont {Bluvstein}}, \bibinfo {author}
  {\bibfnamefont {J.}~\bibnamefont {Wurtz}}, \bibinfo {author} {\bibfnamefont
  {B.}~\bibnamefont {Vasic}}, \bibinfo {author} {\bibfnamefont {M.~D.}\
  \bibnamefont {Lukin}}, \bibinfo {author} {\bibfnamefont {L.}~\bibnamefont
  {Jiang}},\ and\ \bibinfo {author} {\bibfnamefont {H.}~\bibnamefont {Zhou}},\
  }\href {https://doi.org/10.48550/arXiv.2308.08648} {\bibinfo {title}
  {Constant-overhead fault-tolerant quantum computation with reconfigurable
  atom arrays}} (\bibinfo {year} {2023}),\ \Eprint
  {https://arxiv.org/abs/2308.08648} {arxiv:2308.08648 [quant-ph]} \BibitemShut
  {NoStop}%
\bibitem [{\citenamefont {Ramette}\ \emph {et~al.}(2022)\citenamefont
  {Ramette}, \citenamefont {Sinclair}, \citenamefont {Vendeiro}, \citenamefont
  {Rudelis}, \citenamefont {Cetina},\ and\ \citenamefont
  {Vuleti{\'c}}}]{ramette_any--any_2022}%
  \BibitemOpen
  \bibfield  {author} {\bibinfo {author} {\bibfnamefont {J.}~\bibnamefont
  {Ramette}}, \bibinfo {author} {\bibfnamefont {J.}~\bibnamefont {Sinclair}},
  \bibinfo {author} {\bibfnamefont {Z.}~\bibnamefont {Vendeiro}}, \bibinfo
  {author} {\bibfnamefont {A.}~\bibnamefont {Rudelis}}, \bibinfo {author}
  {\bibfnamefont {M.}~\bibnamefont {Cetina}},\ and\ \bibinfo {author}
  {\bibfnamefont {V.}~\bibnamefont {Vuleti{\'c}}},\ }\bibfield  {title}
  {\bibinfo {title} {{Any-to-any connected cavity-mediated architecture for
  quantum computing with trapped ions or Rydberg arrays}},\ }\href
  {https://doi.org/10.1103/PRXQuantum.3.010344} {\bibfield  {journal} {\bibinfo
   {journal} {PRX Quantum}\ }\textbf {\bibinfo {volume} {3}},\ \bibinfo {pages}
  {010344} (\bibinfo {year} {2022})}\BibitemShut {NoStop}%
\bibitem [{\citenamefont {Barredo}\ \emph {et~al.}(2018)\citenamefont
  {Barredo}, \citenamefont {Lienhard}, \citenamefont {{de L{\'e}s{\'e}leuc}},
  \citenamefont {Lahaye},\ and\ \citenamefont
  {Browaeys}}]{barredo_synthetic_2018}%
  \BibitemOpen
  \bibfield  {author} {\bibinfo {author} {\bibfnamefont {D.}~\bibnamefont
  {Barredo}}, \bibinfo {author} {\bibfnamefont {V.}~\bibnamefont {Lienhard}},
  \bibinfo {author} {\bibfnamefont {S.}~\bibnamefont {{de L{\'e}s{\'e}leuc}}},
  \bibinfo {author} {\bibfnamefont {T.}~\bibnamefont {Lahaye}},\ and\ \bibinfo
  {author} {\bibfnamefont {A.}~\bibnamefont {Browaeys}},\ }\bibfield  {title}
  {\bibinfo {title} {Synthetic three-dimensional atomic structures assembled
  atom by atom},\ }\href {https://doi.org/10.1038/s41586-018-0450-2} {\bibfield
   {journal} {\bibinfo  {journal} {Nature}\ }\textbf {\bibinfo {volume}
  {561}},\ \bibinfo {pages} {79} (\bibinfo {year} {2018})}\BibitemShut
  {NoStop}%
\bibitem [{\citenamefont {Omran}\ \emph {et~al.}(2019)\citenamefont {Omran},
  \citenamefont {Levine}, \citenamefont {Keesling}, \citenamefont {Semeghini},
  \citenamefont {Wang}, \citenamefont {Ebadi}, \citenamefont {Bernien},
  \citenamefont {Zibrov}, \citenamefont {Pichler}, \citenamefont {Choi},
  \citenamefont {Cui}, \citenamefont {Rossignolo}, \citenamefont {Rembold},
  \citenamefont {Montangero}, \citenamefont {Calarco}, \citenamefont {Endres},
  \citenamefont {Greiner}, \citenamefont {Vuleti{\'c}},\ and\ \citenamefont
  {Lukin}}]{omran_generation_2019}%
  \BibitemOpen
  \bibfield  {author} {\bibinfo {author} {\bibfnamefont {A.}~\bibnamefont
  {Omran}}, \bibinfo {author} {\bibfnamefont {H.}~\bibnamefont {Levine}},
  \bibinfo {author} {\bibfnamefont {A.}~\bibnamefont {Keesling}}, \bibinfo
  {author} {\bibfnamefont {G.}~\bibnamefont {Semeghini}}, \bibinfo {author}
  {\bibfnamefont {T.~T.}\ \bibnamefont {Wang}}, \bibinfo {author}
  {\bibfnamefont {S.}~\bibnamefont {Ebadi}}, \bibinfo {author} {\bibfnamefont
  {H.}~\bibnamefont {Bernien}}, \bibinfo {author} {\bibfnamefont {A.~S.}\
  \bibnamefont {Zibrov}}, \bibinfo {author} {\bibfnamefont {H.}~\bibnamefont
  {Pichler}}, \bibinfo {author} {\bibfnamefont {S.}~\bibnamefont {Choi}},
  \bibinfo {author} {\bibfnamefont {J.}~\bibnamefont {Cui}}, \bibinfo {author}
  {\bibfnamefont {M.}~\bibnamefont {Rossignolo}}, \bibinfo {author}
  {\bibfnamefont {P.}~\bibnamefont {Rembold}}, \bibinfo {author} {\bibfnamefont
  {S.}~\bibnamefont {Montangero}}, \bibinfo {author} {\bibfnamefont
  {T.}~\bibnamefont {Calarco}}, \bibinfo {author} {\bibfnamefont
  {M.}~\bibnamefont {Endres}}, \bibinfo {author} {\bibfnamefont
  {M.}~\bibnamefont {Greiner}}, \bibinfo {author} {\bibfnamefont
  {V.}~\bibnamefont {Vuleti{\'c}}},\ and\ \bibinfo {author} {\bibfnamefont
  {M.~D.}\ \bibnamefont {Lukin}},\ }\bibfield  {title} {\bibinfo {title}
  {{Generation and manipulation of Schr{\"o}dinger cat states in Rydberg atom
  arrays}},\ }\href {https://doi.org/10.1126/science.aax9743} {\bibfield
  {journal} {\bibinfo  {journal} {Science}\ }\textbf {\bibinfo {volume}
  {365}},\ \bibinfo {pages} {570} (\bibinfo {year} {2019})}\BibitemShut
  {NoStop}%
\bibitem [{\citenamefont {Strikis}\ and\ \citenamefont
  {Berent}(2023)}]{strikis_quantum_2023}%
  \BibitemOpen
  \bibfield  {author} {\bibinfo {author} {\bibfnamefont {A.}~\bibnamefont
  {Strikis}}\ and\ \bibinfo {author} {\bibfnamefont {L.}~\bibnamefont
  {Berent}},\ }\bibfield  {title} {\bibinfo {title} {Quantum low-density
  parity-check codes for modular architectures},\ }\href
  {https://doi.org/10.1103/PRXQuantum.4.020321} {\bibfield  {journal} {\bibinfo
   {journal} {PRX Quantum}\ }\textbf {\bibinfo {volume} {4}},\ \bibinfo {pages}
  {020321} (\bibinfo {year} {2023})}\BibitemShut {NoStop}%
\bibitem [{\citenamefont {Bravyi}\ \emph {et~al.}(2023)\citenamefont {Bravyi},
  \citenamefont {Cross}, \citenamefont {Gambetta}, \citenamefont {Maslov},
  \citenamefont {Rall},\ and\ \citenamefont
  {Yoder}}]{bravyi_high-threshold_2023}%
  \BibitemOpen
  \bibfield  {author} {\bibinfo {author} {\bibfnamefont {S.}~\bibnamefont
  {Bravyi}}, \bibinfo {author} {\bibfnamefont {A.~W.}\ \bibnamefont {Cross}},
  \bibinfo {author} {\bibfnamefont {J.~M.}\ \bibnamefont {Gambetta}}, \bibinfo
  {author} {\bibfnamefont {D.}~\bibnamefont {Maslov}}, \bibinfo {author}
  {\bibfnamefont {P.}~\bibnamefont {Rall}},\ and\ \bibinfo {author}
  {\bibfnamefont {T.~J.}\ \bibnamefont {Yoder}},\ }\href
  {https://doi.org/10.48550/arXiv.2308.07915} {\bibinfo {title} {High-threshold
  and low-overhead fault-tolerant quantum memory}} (\bibinfo {year} {2023}),\
  \Eprint {https://arxiv.org/abs/2308.07915} {arxiv:2308.07915 [quant-ph]}
  \BibitemShut {NoStop}%
\bibitem [{\citenamefont {Panteleev}\ and\ \citenamefont
  {Kalachev}(2021)}]{panteleev_degenerate_2021}%
  \BibitemOpen
  \bibfield  {author} {\bibinfo {author} {\bibfnamefont {P.}~\bibnamefont
  {Panteleev}}\ and\ \bibinfo {author} {\bibfnamefont {G.}~\bibnamefont
  {Kalachev}},\ }\bibfield  {title} {\bibinfo {title} {{Degenerate quantum LDPC
  codes with good finite length performance}},\ }\href
  {https://doi.org/10.22331/q-2021-11-22-585} {\bibfield  {journal} {\bibinfo
  {journal} {Quantum}\ }\textbf {\bibinfo {volume} {5}},\ \bibinfo {pages}
  {585} (\bibinfo {year} {2021})}\BibitemShut {NoStop}%
\bibitem [{\citenamefont {Vasmer}\ and\ \citenamefont
  {Browne}(2019)}]{vasmer_three-dimensional_2019}%
  \BibitemOpen
  \bibfield  {author} {\bibinfo {author} {\bibfnamefont {M.}~\bibnamefont
  {Vasmer}}\ and\ \bibinfo {author} {\bibfnamefont {D.~E.}\ \bibnamefont
  {Browne}},\ }\bibfield  {title} {\bibinfo {title} {Three-dimensional surface
  codes: Transversal gates and fault-tolerant architectures},\ }\href
  {https://doi.org/10.1103/PhysRevA.100.012312} {\bibfield  {journal} {\bibinfo
   {journal} {Phys. Rev. A}\ }\textbf {\bibinfo {volume} {100}},\ \bibinfo
  {pages} {012312} (\bibinfo {year} {2019})}\BibitemShut {NoStop}%
\bibitem [{\citenamefont {Bomb{\'i}n}(2015)}]{bombin_single-shot_2015}%
  \BibitemOpen
  \bibfield  {author} {\bibinfo {author} {\bibfnamefont {H.}~\bibnamefont
  {Bomb{\'i}n}},\ }\bibfield  {title} {\bibinfo {title} {Single-shot
  fault-tolerant quantum error correction},\ }\href
  {https://doi.org/10.1103/PhysRevX.5.031043} {\bibfield  {journal} {\bibinfo
  {journal} {Phys. Rev. X}\ }\textbf {\bibinfo {volume} {5}},\ \bibinfo {pages}
  {031043} (\bibinfo {year} {2015})}\BibitemShut {NoStop}%
\bibitem [{\citenamefont {Quintavalle}\ \emph {et~al.}(2021)\citenamefont
  {Quintavalle}, \citenamefont {Vasmer}, \citenamefont {Roffe},\ and\
  \citenamefont {Campbell}}]{quintavalle_single-shot_2021}%
  \BibitemOpen
  \bibfield  {author} {\bibinfo {author} {\bibfnamefont {A.~O.}\ \bibnamefont
  {Quintavalle}}, \bibinfo {author} {\bibfnamefont {M.}~\bibnamefont {Vasmer}},
  \bibinfo {author} {\bibfnamefont {J.}~\bibnamefont {Roffe}},\ and\ \bibinfo
  {author} {\bibfnamefont {E.~T.}\ \bibnamefont {Campbell}},\ }\bibfield
  {title} {\bibinfo {title} {Single-shot error correction of three-dimensional
  homological product codes},\ }\href
  {https://doi.org/10.1103/PRXQuantum.2.020340} {\bibfield  {journal} {\bibinfo
   {journal} {PRX Quantum}\ }\textbf {\bibinfo {volume} {2}},\ \bibinfo {pages}
  {020340} (\bibinfo {year} {2021})}\BibitemShut {NoStop}%
\bibitem [{\citenamefont {Higgott}\ and\ \citenamefont
  {Breuckmann}(2023)}]{higgott_improved_2023}%
  \BibitemOpen
  \bibfield  {author} {\bibinfo {author} {\bibfnamefont {O.}~\bibnamefont
  {Higgott}}\ and\ \bibinfo {author} {\bibfnamefont {N.~P.}\ \bibnamefont
  {Breuckmann}},\ }\bibfield  {title} {\bibinfo {title} {Improved single-shot
  decoding of higher-dimensional hypergraph-product codes},\ }\href
  {https://doi.org/10.1103/PRXQuantum.4.020332} {\bibfield  {journal} {\bibinfo
   {journal} {PRX Quantum}\ }\textbf {\bibinfo {volume} {4}},\ \bibinfo {pages}
  {020332} (\bibinfo {year} {2023})},\ \Eprint
  {https://arxiv.org/abs/2206.03122} {arxiv:2206.03122 [quant-ph]} \BibitemShut
  {NoStop}%
\bibitem [{\citenamefont {Shor}(1996)}]{shor_fault-tolerant_1996}%
  \BibitemOpen
  \bibfield  {author} {\bibinfo {author} {\bibfnamefont {P.~W.}\ \bibnamefont
  {Shor}},\ }\bibfield  {title} {\bibinfo {title} {Fault-tolerant quantum
  computation},\ }in\ \href {https://doi.org/10.1109/SFCS.1996.548464} {\emph
  {\bibinfo {booktitle} {Proc. 37th Conf. Found. Comput. Sci.}}}\ (\bibinfo
  {year} {1996})\ pp.\ \bibinfo {pages} {56--65}\BibitemShut {NoStop}%
\bibitem [{\citenamefont {Delfosse}\ \emph {et~al.}(2022)\citenamefont
  {Delfosse}, \citenamefont {Reichardt},\ and\ \citenamefont
  {Svore}}]{delfosse_beyond_2022}%
  \BibitemOpen
  \bibfield  {author} {\bibinfo {author} {\bibfnamefont {N.}~\bibnamefont
  {Delfosse}}, \bibinfo {author} {\bibfnamefont {B.~W.}\ \bibnamefont
  {Reichardt}},\ and\ \bibinfo {author} {\bibfnamefont {K.~M.}\ \bibnamefont
  {Svore}},\ }\bibfield  {title} {\bibinfo {title} {Beyond single-shot
  fault-tolerant quantum error correction},\ }\href
  {https://doi.org/10.1109/TIT.2021.3120685} {\bibfield  {journal} {\bibinfo
  {journal} {IEEE Trans. Inf. Theory}\ }\textbf {\bibinfo {volume} {68}},\
  \bibinfo {pages} {287} (\bibinfo {year} {2022})}\BibitemShut {NoStop}%
\bibitem [{\citenamefont {Dennis}\ \emph {et~al.}(2002)\citenamefont {Dennis},
  \citenamefont {Kitaev}, \citenamefont {Landahl},\ and\ \citenamefont
  {Preskill}}]{dennis_topological_2002}%
  \BibitemOpen
  \bibfield  {author} {\bibinfo {author} {\bibfnamefont {E.}~\bibnamefont
  {Dennis}}, \bibinfo {author} {\bibfnamefont {A.}~\bibnamefont {Kitaev}},
  \bibinfo {author} {\bibfnamefont {A.}~\bibnamefont {Landahl}},\ and\ \bibinfo
  {author} {\bibfnamefont {J.}~\bibnamefont {Preskill}},\ }\bibfield  {title}
  {\bibinfo {title} {Topological quantum memory},\ }\href
  {https://doi.org/10.1063/1.1499754} {\bibfield  {journal} {\bibinfo
  {journal} {J. Math. Phys.}\ }\textbf {\bibinfo {volume} {43}},\ \bibinfo
  {pages} {4452} (\bibinfo {year} {2002})}\BibitemShut {NoStop}%
\bibitem [{\citenamefont {Skoric}\ \emph {et~al.}(2022)\citenamefont {Skoric},
  \citenamefont {Browne}, \citenamefont {Barnes}, \citenamefont {Gillespie},\
  and\ \citenamefont {Campbell}}]{skoric_parallel_2022}%
  \BibitemOpen
  \bibfield  {author} {\bibinfo {author} {\bibfnamefont {L.}~\bibnamefont
  {Skoric}}, \bibinfo {author} {\bibfnamefont {D.~E.}\ \bibnamefont {Browne}},
  \bibinfo {author} {\bibfnamefont {K.~M.}\ \bibnamefont {Barnes}}, \bibinfo
  {author} {\bibfnamefont {N.~I.}\ \bibnamefont {Gillespie}},\ and\ \bibinfo
  {author} {\bibfnamefont {E.~T.}\ \bibnamefont {Campbell}},\ }\href
  {http://arxiv.org/abs/2209.08552} {\bibinfo {title} {Parallel window decoding
  enables scalable fault tolerant quantum computation}} (\bibinfo {year}
  {2022}),\ \Eprint {https://arxiv.org/abs/2209.08552} {arxiv:2209.08552
  [quant-ph]} \BibitemShut {NoStop}%
\bibitem [{\citenamefont {Kitaev}(1995)}]{kitaev_quantum_1995}%
  \BibitemOpen
  \bibfield  {author} {\bibinfo {author} {\bibfnamefont {A.~Y.}\ \bibnamefont
  {Kitaev}},\ }\href {https://doi.org/10.48550/arXiv.quant-ph/9511026}
  {\bibinfo {title} {Quantum measurements and the abelian stabilizer problem}}
  (\bibinfo {year} {1995}),\ \Eprint {https://arxiv.org/abs/quant-ph/9511026}
  {arxiv:quant-ph/9511026} \BibitemShut {NoStop}%
\bibitem [{\citenamefont {Joshi}\ \emph {et~al.}(2021)\citenamefont {Joshi},
  \citenamefont {Noh},\ and\ \citenamefont {Gao}}]{joshi_quantum_2021}%
  \BibitemOpen
  \bibfield  {author} {\bibinfo {author} {\bibfnamefont {A.}~\bibnamefont
  {Joshi}}, \bibinfo {author} {\bibfnamefont {K.}~\bibnamefont {Noh}},\ and\
  \bibinfo {author} {\bibfnamefont {Y.~Y.}\ \bibnamefont {Gao}},\ }\bibfield
  {title} {\bibinfo {title} {{Quantum information processing with bosonic
  qubits in circuit QED}},\ }\href {https://doi.org/10.1088/2058-9565/abe989}
  {\bibfield  {journal} {\bibinfo  {journal} {Quantum Sci. Technol.}\ }\textbf
  {\bibinfo {volume} {6}},\ \bibinfo {pages} {033001} (\bibinfo {year}
  {2021})},\ \Eprint {https://arxiv.org/abs/2008.13471} {arxiv:2008.13471}
  \BibitemShut {NoStop}%
\bibitem [{\citenamefont {Cai}\ \emph {et~al.}(2021)\citenamefont {Cai},
  \citenamefont {Ma}, \citenamefont {Wang}, \citenamefont {Zou},\ and\
  \citenamefont {Sun}}]{cai_bosonic_2021}%
  \BibitemOpen
  \bibfield  {author} {\bibinfo {author} {\bibfnamefont {W.}~\bibnamefont
  {Cai}}, \bibinfo {author} {\bibfnamefont {Y.}~\bibnamefont {Ma}}, \bibinfo
  {author} {\bibfnamefont {W.}~\bibnamefont {Wang}}, \bibinfo {author}
  {\bibfnamefont {C.-L.}\ \bibnamefont {Zou}},\ and\ \bibinfo {author}
  {\bibfnamefont {L.}~\bibnamefont {Sun}},\ }\bibfield  {title} {\bibinfo
  {title} {Bosonic quantum error correction codes in superconducting quantum
  circuits},\ }\href {https://doi.org/10.1016/j.fmre.2020.12.006} {\bibfield
  {journal} {\bibinfo  {journal} {Fundamental Research}\ }\textbf {\bibinfo
  {volume} {1}},\ \bibinfo {pages} {50} (\bibinfo {year} {2021})},\ \Eprint
  {https://arxiv.org/abs/2010.08699} {arxiv:2010.08699} \BibitemShut {NoStop}%
\bibitem [{\citenamefont {Guillaud}(2021)}]{guillaud_repetition_2021}%
  \BibitemOpen
  \bibfield  {author} {\bibinfo {author} {\bibfnamefont {J.}~\bibnamefont
  {Guillaud}},\ }\emph {\bibinfo {title} {Repetition Cat Qubits}},\ \href
  {https://hal.archives-ouvertes.fr/tel-03509305} {\bibinfo {type} {Theses}},\
  \bibinfo  {school} {{\'E}cole Normale Sup{\'e}rieure} (\bibinfo {year}
  {2021})\BibitemShut {NoStop}%
\bibitem [{\citenamefont {Tuckett}\ \emph {et~al.}(2018)\citenamefont
  {Tuckett}, \citenamefont {Bartlett},\ and\ \citenamefont
  {Flammia}}]{tuckett_ultrahigh_2018}%
  \BibitemOpen
  \bibfield  {author} {\bibinfo {author} {\bibfnamefont {D.~K.}\ \bibnamefont
  {Tuckett}}, \bibinfo {author} {\bibfnamefont {S.~D.}\ \bibnamefont
  {Bartlett}},\ and\ \bibinfo {author} {\bibfnamefont {S.~T.}\ \bibnamefont
  {Flammia}},\ }\bibfield  {title} {\bibinfo {title} {Ultrahigh error threshold
  for surface codes with biased noise},\ }\href
  {https://doi.org/10.1103/PhysRevLett.120.050505} {\bibfield  {journal}
  {\bibinfo  {journal} {Phys. Rev. Lett.}\ }\textbf {\bibinfo {volume} {120}},\
  \bibinfo {pages} {050505} (\bibinfo {year} {2018})}\BibitemShut {NoStop}%
\bibitem [{\citenamefont {Tuckett}\ \emph {et~al.}(2020)\citenamefont
  {Tuckett}, \citenamefont {Bartlett}, \citenamefont {Flammia},\ and\
  \citenamefont {Brown}}]{tuckett_fault-tolerant_2020}%
  \BibitemOpen
  \bibfield  {author} {\bibinfo {author} {\bibfnamefont {D.~K.}\ \bibnamefont
  {Tuckett}}, \bibinfo {author} {\bibfnamefont {S.~D.}\ \bibnamefont
  {Bartlett}}, \bibinfo {author} {\bibfnamefont {S.~T.}\ \bibnamefont
  {Flammia}},\ and\ \bibinfo {author} {\bibfnamefont {B.~J.}\ \bibnamefont
  {Brown}},\ }\bibfield  {title} {\bibinfo {title} {Fault-tolerant thresholds
  for the surface code in excess of 5\% under biased noise},\ }\href
  {https://doi.org/10.1103/PhysRevLett.124.130501} {\bibfield  {journal}
  {\bibinfo  {journal} {Phys. Rev. Lett.}\ }\textbf {\bibinfo {volume} {124}},\
  \bibinfo {pages} {130501} (\bibinfo {year} {2020})}\BibitemShut {NoStop}%
\bibitem [{\citenamefont {Roffe}\ \emph {et~al.}(2023)\citenamefont {Roffe},
  \citenamefont {Cohen}, \citenamefont {Quintavalle}, \citenamefont {Chandra},\
  and\ \citenamefont {Campbell}}]{roffe_bias-tailored_2023}%
  \BibitemOpen
  \bibfield  {author} {\bibinfo {author} {\bibfnamefont {J.}~\bibnamefont
  {Roffe}}, \bibinfo {author} {\bibfnamefont {L.~Z.}\ \bibnamefont {Cohen}},
  \bibinfo {author} {\bibfnamefont {A.~O.}\ \bibnamefont {Quintavalle}},
  \bibinfo {author} {\bibfnamefont {D.}~\bibnamefont {Chandra}},\ and\ \bibinfo
  {author} {\bibfnamefont {E.~T.}\ \bibnamefont {Campbell}},\ }\bibfield
  {title} {\bibinfo {title} {{Bias-tailored quantum LDPC codes}},\ }\href
  {https://doi.org/10.22331/q-2023-05-15-1005} {\bibfield  {journal} {\bibinfo
  {journal} {Quantum}\ }\textbf {\bibinfo {volume} {7}},\ \bibinfo {pages}
  {1005} (\bibinfo {year} {2023})},\ \Eprint {https://arxiv.org/abs/2202.01702}
  {arxiv:2202.01702 [quant-ph]} \BibitemShut {NoStop}%
\bibitem [{\citenamefont {Huang}\ \emph {et~al.}(2022)\citenamefont {Huang},
  \citenamefont {Pesah}, \citenamefont {Chubb}, \citenamefont {Vasmer},\ and\
  \citenamefont {Dua}}]{huang_tailoring_2022}%
  \BibitemOpen
  \bibfield  {author} {\bibinfo {author} {\bibfnamefont {E.}~\bibnamefont
  {Huang}}, \bibinfo {author} {\bibfnamefont {A.}~\bibnamefont {Pesah}},
  \bibinfo {author} {\bibfnamefont {C.~T.}\ \bibnamefont {Chubb}}, \bibinfo
  {author} {\bibfnamefont {M.}~\bibnamefont {Vasmer}},\ and\ \bibinfo {author}
  {\bibfnamefont {A.}~\bibnamefont {Dua}},\ }\href
  {https://doi.org/10.48550/arXiv.2211.02116} {\bibinfo {title} {Tailoring
  three-dimensional topological codes for biased noise}} (\bibinfo {year}
  {2022}),\ \Eprint {https://arxiv.org/abs/2211.02116} {arxiv:2211.02116
  [quant-ph]} \BibitemShut {NoStop}%
\bibitem [{\citenamefont {Puri}\ \emph {et~al.}(2020)\citenamefont {Puri},
  \citenamefont {{St-Jean}}, \citenamefont {Gross}, \citenamefont {Grimm},
  \citenamefont {Frattini}, \citenamefont {Iyer}, \citenamefont {Krishna},
  \citenamefont {Touzard}, \citenamefont {Jiang}, \citenamefont {Blais},
  \citenamefont {Flammia},\ and\ \citenamefont
  {Girvin}}]{puri_bias-preserving_2020}%
  \BibitemOpen
  \bibfield  {author} {\bibinfo {author} {\bibfnamefont {S.}~\bibnamefont
  {Puri}}, \bibinfo {author} {\bibfnamefont {L.}~\bibnamefont {{St-Jean}}},
  \bibinfo {author} {\bibfnamefont {J.~A.}\ \bibnamefont {Gross}}, \bibinfo
  {author} {\bibfnamefont {A.}~\bibnamefont {Grimm}}, \bibinfo {author}
  {\bibfnamefont {N.~E.}\ \bibnamefont {Frattini}}, \bibinfo {author}
  {\bibfnamefont {P.~S.}\ \bibnamefont {Iyer}}, \bibinfo {author}
  {\bibfnamefont {A.}~\bibnamefont {Krishna}}, \bibinfo {author} {\bibfnamefont
  {S.}~\bibnamefont {Touzard}}, \bibinfo {author} {\bibfnamefont
  {L.}~\bibnamefont {Jiang}}, \bibinfo {author} {\bibfnamefont
  {A.}~\bibnamefont {Blais}}, \bibinfo {author} {\bibfnamefont {S.~T.}\
  \bibnamefont {Flammia}},\ and\ \bibinfo {author} {\bibfnamefont {S.~M.}\
  \bibnamefont {Girvin}},\ }\bibfield  {title} {\bibinfo {title}
  {Bias-preserving gates with stabilized cat qubits},\ }\href
  {https://doi.org/10.1126/sciadv.aay5901} {\bibfield  {journal} {\bibinfo
  {journal} {Sci. Adv.}\ }\textbf {\bibinfo {volume} {6}},\ \bibinfo {pages}
  {eaay5901} (\bibinfo {year} {2020})},\ \Eprint
  {https://arxiv.org/abs/1905.00450} {arxiv:1905.00450} \BibitemShut {NoStop}%
\bibitem [{\citenamefont {Swiadek}\ \emph {et~al.}(2023)\citenamefont
  {Swiadek}, \citenamefont {Shillito}, \citenamefont {Magnard}, \citenamefont
  {Remm}, \citenamefont {Hellings}, \citenamefont {Lacroix}, \citenamefont
  {Ficheux}, \citenamefont {Zanuz}, \citenamefont {Norris}, \citenamefont
  {Blais}, \citenamefont {Krinner},\ and\ \citenamefont
  {Wallraff}}]{swiadek_enhancing_2023}%
  \BibitemOpen
  \bibfield  {author} {\bibinfo {author} {\bibfnamefont {F.}~\bibnamefont
  {Swiadek}}, \bibinfo {author} {\bibfnamefont {R.}~\bibnamefont {Shillito}},
  \bibinfo {author} {\bibfnamefont {P.}~\bibnamefont {Magnard}}, \bibinfo
  {author} {\bibfnamefont {A.}~\bibnamefont {Remm}}, \bibinfo {author}
  {\bibfnamefont {C.}~\bibnamefont {Hellings}}, \bibinfo {author}
  {\bibfnamefont {N.}~\bibnamefont {Lacroix}}, \bibinfo {author} {\bibfnamefont
  {Q.}~\bibnamefont {Ficheux}}, \bibinfo {author} {\bibfnamefont {D.~C.}\
  \bibnamefont {Zanuz}}, \bibinfo {author} {\bibfnamefont {G.~J.}\ \bibnamefont
  {Norris}}, \bibinfo {author} {\bibfnamefont {A.}~\bibnamefont {Blais}},
  \bibinfo {author} {\bibfnamefont {S.}~\bibnamefont {Krinner}},\ and\ \bibinfo
  {author} {\bibfnamefont {A.}~\bibnamefont {Wallraff}},\ }\href
  {https://doi.org/10.48550/arXiv.2307.07765} {\bibinfo {title} {Enhancing
  dispersive readout of superconducting qubits through dynamic control of the
  dispersive shift: Experiment and theory}} (\bibinfo {year} {2023}),\ \Eprint
  {https://arxiv.org/abs/2307.07765} {arxiv:2307.07765 [quant-ph]} \BibitemShut
  {NoStop}%
\bibitem [{\citenamefont {Gottesman}(1997)}]{gottesman_stabilizer_1997}%
  \BibitemOpen
  \bibfield  {author} {\bibinfo {author} {\bibfnamefont {D.}~\bibnamefont
  {Gottesman}},\ }\href {https://doi.org/10.48550/arXiv.quant-ph/9705052}
  {\bibinfo {title} {Stabilizer codes and quantum error correction}} (\bibinfo
  {year} {1997}),\ \Eprint {https://arxiv.org/abs/quant-ph/9705052}
  {arxiv:quant-ph/9705052} \BibitemShut {NoStop}%
\bibitem [{\citenamefont {Bombin}\ and\ \citenamefont
  {{Martin-Delgado}}(2006)}]{bombin_topological_2006}%
  \BibitemOpen
  \bibfield  {author} {\bibinfo {author} {\bibfnamefont {H.}~\bibnamefont
  {Bombin}}\ and\ \bibinfo {author} {\bibfnamefont {M.~A.}\ \bibnamefont
  {{Martin-Delgado}}},\ }\bibfield  {title} {\bibinfo {title} {Topological
  quantum distillation},\ }\href
  {https://doi.org/10.1103/PhysRevLett.97.180501} {\bibfield  {journal}
  {\bibinfo  {journal} {Phys. Rev. Lett.}\ }\textbf {\bibinfo {volume} {97}},\
  \bibinfo {pages} {180501} (\bibinfo {year} {2006})}\BibitemShut {NoStop}%
\bibitem [{\citenamefont {Bravyi}\ and\ \citenamefont
  {Kitaev}(1998)}]{bravyi_quantum_1998}%
  \BibitemOpen
  \bibfield  {author} {\bibinfo {author} {\bibfnamefont {S.~B.}\ \bibnamefont
  {Bravyi}}\ and\ \bibinfo {author} {\bibfnamefont {A.~Y.}\ \bibnamefont
  {Kitaev}},\ }\bibfield  {title} {\bibinfo {title} {Quantum codes on a lattice
  with boundary},\ }\href {http://arxiv.org/abs/quant-ph/9811052} {\bibfield
  {journal} {\bibinfo  {journal} {ArXiv:quant-Ph/9811052}\ } (\bibinfo {year}
  {1998})},\ \Eprint {https://arxiv.org/abs/quant-ph/9811052}
  {arxiv:quant-ph/9811052} \BibitemShut {NoStop}%
\bibitem [{\citenamefont {Kitaev}(2003)}]{kitaev_fault-tolerant_2003}%
  \BibitemOpen
  \bibfield  {author} {\bibinfo {author} {\bibfnamefont {A.~{\relax Yu}.}\
  \bibnamefont {Kitaev}},\ }\bibfield  {title} {\bibinfo {title}
  {Fault-tolerant quantum computation by anyons},\ }\href
  {https://doi.org/10.1016/S0003-4916(02)00018-0} {\bibfield  {journal}
  {\bibinfo  {journal} {Annals of Physics}\ }\textbf {\bibinfo {volume}
  {303}},\ \bibinfo {pages} {2} (\bibinfo {year} {2003})}\BibitemShut {NoStop}%
\bibitem [{\citenamefont {Hamma}\ \emph {et~al.}(2005)\citenamefont {Hamma},
  \citenamefont {Zanardi},\ and\ \citenamefont {Wen}}]{hamma_string_2005}%
  \BibitemOpen
  \bibfield  {author} {\bibinfo {author} {\bibfnamefont {A.}~\bibnamefont
  {Hamma}}, \bibinfo {author} {\bibfnamefont {P.}~\bibnamefont {Zanardi}},\
  and\ \bibinfo {author} {\bibfnamefont {X.-G.}\ \bibnamefont {Wen}},\
  }\bibfield  {title} {\bibinfo {title} {String and membrane condensation on
  three-dimensional lattices},\ }\href
  {https://doi.org/10.1103/PhysRevB.72.035307} {\bibfield  {journal} {\bibinfo
  {journal} {Phys. Rev. B}\ }\textbf {\bibinfo {volume} {72}},\ \bibinfo
  {pages} {035307} (\bibinfo {year} {2005})}\BibitemShut {NoStop}%
\bibitem [{\citenamefont {Kubica}\ and\ \citenamefont
  {Vasmer}(2022)}]{kubica_single-shot_2022}%
  \BibitemOpen
  \bibfield  {author} {\bibinfo {author} {\bibfnamefont {A.}~\bibnamefont
  {Kubica}}\ and\ \bibinfo {author} {\bibfnamefont {M.}~\bibnamefont
  {Vasmer}},\ }\bibfield  {title} {\bibinfo {title} {Single-shot quantum error
  correction with the three-dimensional subsystem toric code},\ }\href
  {https://doi.org/10.1038/s41467-022-33923-4} {\bibfield  {journal} {\bibinfo
  {journal} {Nat. Commun.}\ }\textbf {\bibinfo {volume} {13}},\ \bibinfo
  {pages} {6272} (\bibinfo {year} {2022})}\BibitemShut {NoStop}%
\bibitem [{\citenamefont {Campbell}(2019)}]{campbell_theory_2019}%
  \BibitemOpen
  \bibfield  {author} {\bibinfo {author} {\bibfnamefont {E.~T.}\ \bibnamefont
  {Campbell}},\ }\bibfield  {title} {\bibinfo {title} {A theory of single-shot
  error correction for adversarial noise},\ }\href
  {https://doi.org/10.1088/2058-9565/aafc8f} {\bibfield  {journal} {\bibinfo
  {journal} {Quantum Sci. Technol.}\ }\textbf {\bibinfo {volume} {4}},\
  \bibinfo {pages} {025006} (\bibinfo {year} {2019})}\BibitemShut {NoStop}%
\bibitem [{\citenamefont {Steane}(1996)}]{steane_error_1996}%
  \BibitemOpen
  \bibfield  {author} {\bibinfo {author} {\bibfnamefont {A.~M.}\ \bibnamefont
  {Steane}},\ }\bibfield  {title} {\bibinfo {title} {Error correcting codes in
  quantum theory},\ }\href {https://doi.org/10.1103/PhysRevLett.77.793}
  {\bibfield  {journal} {\bibinfo  {journal} {Phys. Rev. Lett.}\ }\textbf
  {\bibinfo {volume} {77}},\ \bibinfo {pages} {793} (\bibinfo {year}
  {1996})}\BibitemShut {NoStop}%
\bibitem [{\citenamefont {Raveendran}\ \emph
  {et~al.}(2022{\natexlab{b}})\citenamefont {Raveendran}, \citenamefont
  {Rengaswamy}, \citenamefont {Pradhan},\ and\ \citenamefont
  {Vasic}}]{raveendran_soft_2022}%
  \BibitemOpen
  \bibfield  {author} {\bibinfo {author} {\bibfnamefont {N.}~\bibnamefont
  {Raveendran}}, \bibinfo {author} {\bibfnamefont {N.}~\bibnamefont
  {Rengaswamy}}, \bibinfo {author} {\bibfnamefont {A.}~\bibnamefont
  {Pradhan}},\ and\ \bibinfo {author} {\bibfnamefont {B.}~\bibnamefont
  {Vasic}},\ }\bibfield  {title} {\bibinfo {title} {{Soft syndrome decoding of
  quantum LDPC codes for joint correction of data and syndrome errors}},\ }in\
  \href {https://doi.org/10.1109/QCE53715.2022.00047} {\emph {\bibinfo
  {booktitle} {2022 IEEE Int. Conf. Quantum Comput. Eng. QCE}}}\ (\bibinfo
  {publisher} {IEEE Computer Society},\ \bibinfo {address} {Los Alamitos, CA,
  USA},\ \bibinfo {year} {2022})\ pp.\ \bibinfo {pages} {275--281}\BibitemShut
  {NoStop}%
\bibitem [{\citenamefont {Fossorier}\ and\ \citenamefont
  {Lin}(1995)}]{fossorier_soft-decision_1995}%
  \BibitemOpen
  \bibfield  {author} {\bibinfo {author} {\bibfnamefont {M.}~\bibnamefont
  {Fossorier}}\ and\ \bibinfo {author} {\bibfnamefont {S.}~\bibnamefont
  {Lin}},\ }\bibfield  {title} {\bibinfo {title} {Soft-decision decoding of
  linear block codes based on ordered statistics},\ }\href
  {https://doi.org/10.1109/18.412683} {\bibfield  {journal} {\bibinfo
  {journal} {IEEE Trans. Inf. Theory}\ }\textbf {\bibinfo {volume} {41}},\
  \bibinfo {pages} {1379} (\bibinfo {year} {1995})}\BibitemShut {NoStop}%
\bibitem [{\citenamefont {Roffe}\ \emph {et~al.}(2020)\citenamefont {Roffe},
  \citenamefont {White}, \citenamefont {Burton},\ and\ \citenamefont
  {Campbell}}]{roffe_decoding_2020}%
  \BibitemOpen
  \bibfield  {author} {\bibinfo {author} {\bibfnamefont {J.}~\bibnamefont
  {Roffe}}, \bibinfo {author} {\bibfnamefont {D.~R.}\ \bibnamefont {White}},
  \bibinfo {author} {\bibfnamefont {S.}~\bibnamefont {Burton}},\ and\ \bibinfo
  {author} {\bibfnamefont {E.}~\bibnamefont {Campbell}},\ }\bibfield  {title}
  {\bibinfo {title} {Decoding across the quantum low-density parity-check code
  landscape},\ }\href {https://doi.org/10.1103/PhysRevResearch.2.043423}
  {\bibfield  {journal} {\bibinfo  {journal} {Phys. Rev. Res.}\ }\textbf
  {\bibinfo {volume} {2}},\ \bibinfo {pages} {043423} (\bibinfo {year}
  {2020})}\BibitemShut {NoStop}%
\bibitem [{\citenamefont {Kuo}\ \emph {et~al.}(2021)\citenamefont {Kuo},
  \citenamefont {Chern},\ and\ \citenamefont {Lai}}]{kuo_decoding_2021}%
  \BibitemOpen
  \bibfield  {author} {\bibinfo {author} {\bibfnamefont {K.-Y.}\ \bibnamefont
  {Kuo}}, \bibinfo {author} {\bibfnamefont {I.-C.}\ \bibnamefont {Chern}},\
  and\ \bibinfo {author} {\bibfnamefont {C.-Y.}\ \bibnamefont {Lai}},\
  }\bibfield  {title} {\bibinfo {title} {Decoding of quantum data-syndrome
  codes via belief propagation},\ }in\ \href
  {https://doi.org/10.1109/ISIT45174.2021.9518018} {\emph {\bibinfo {booktitle}
  {2021 IEEE Int. Symp. Inf. Theory ISIT}}}\ (\bibinfo {year} {2021})\ pp.\
  \bibinfo {pages} {1552--1557},\ \Eprint {https://arxiv.org/abs/2102.01984}
  {arxiv:2102.01984 [quant-ph]} \BibitemShut {NoStop}%
\bibitem [{\citenamefont {Li}\ and\ \citenamefont
  {Yoder}(2020)}]{li_numerical_2020}%
  \BibitemOpen
  \bibfield  {author} {\bibinfo {author} {\bibfnamefont {M.}~\bibnamefont
  {Li}}\ and\ \bibinfo {author} {\bibfnamefont {T.~J.}\ \bibnamefont {Yoder}},\
  }\bibfield  {title} {\bibinfo {title} {{A numerical Study of
  Bravyi-Bacon-Shor and subsystem hypergraph product codes}},\ }\href
  {https://doi.org/10.1109/QCE49297.2020.00024} {\bibfield  {journal} {\bibinfo
   {journal} {2020 IEEE Int. Conf. Quantum Comput. Eng. QCE}\ ,\ \bibinfo
  {pages} {109}} (\bibinfo {year} {2020})},\ \Eprint
  {https://arxiv.org/abs/2002.06257} {arxiv:2002.06257 [quant-ph]} \BibitemShut
  {NoStop}%
\bibitem [{\citenamefont {Grospellier}\ \emph {et~al.}(2021)\citenamefont
  {Grospellier}, \citenamefont {Grou{\`e}s}, \citenamefont {Krishna},\ and\
  \citenamefont {Leverrier}}]{grospellier_combining_2021}%
  \BibitemOpen
  \bibfield  {author} {\bibinfo {author} {\bibfnamefont {A.}~\bibnamefont
  {Grospellier}}, \bibinfo {author} {\bibfnamefont {L.}~\bibnamefont
  {Grou{\`e}s}}, \bibinfo {author} {\bibfnamefont {A.}~\bibnamefont
  {Krishna}},\ and\ \bibinfo {author} {\bibfnamefont {A.}~\bibnamefont
  {Leverrier}},\ }\bibfield  {title} {\bibinfo {title} {Combining hard and soft
  decoders for hypergraph product codes},\ }\href
  {https://doi.org/10.22331/q-2021-04-15-432} {\bibfield  {journal} {\bibinfo
  {journal} {Quantum}\ }\textbf {\bibinfo {volume} {5}},\ \bibinfo {pages}
  {432} (\bibinfo {year} {2021})}\BibitemShut {NoStop}%
\bibitem [{\citenamefont {Breuckmann}\ and\ \citenamefont
  {Londe}(2022)}]{breuckmann_single-shot_2022}%
  \BibitemOpen
  \bibfield  {author} {\bibinfo {author} {\bibfnamefont {N.~P.}\ \bibnamefont
  {Breuckmann}}\ and\ \bibinfo {author} {\bibfnamefont {V.}~\bibnamefont
  {Londe}},\ }\bibfield  {title} {\bibinfo {title} {{Single-shot decoding of
  linear rate LDPC quantum codes with high performance}},\ }\href
  {https://doi.org/10.1109/TIT.2021.3122352} {\bibfield  {journal} {\bibinfo
  {journal} {IEEE Trans. Inf. Theory}\ }\textbf {\bibinfo {volume} {68}},\
  \bibinfo {pages} {272} (\bibinfo {year} {2022})}\BibitemShut {NoStop}%
\bibitem [{\citenamefont {Ashikhmin}\ \emph {et~al.}(2014)\citenamefont
  {Ashikhmin}, \citenamefont {Lai},\ and\ \citenamefont
  {Brun}}]{ashikhmin_robust_2014}%
  \BibitemOpen
  \bibfield  {author} {\bibinfo {author} {\bibfnamefont {A.}~\bibnamefont
  {Ashikhmin}}, \bibinfo {author} {\bibfnamefont {C.-Y.}\ \bibnamefont {Lai}},\
  and\ \bibinfo {author} {\bibfnamefont {T.~A.}\ \bibnamefont {Brun}},\
  }\bibfield  {title} {\bibinfo {title} {Robust quantum error syndrome
  extraction by classical coding},\ }in\ \href
  {https://doi.org/10.1109/ISIT.2014.6874892} {\emph {\bibinfo {booktitle}
  {2014 IEEE Int. Symp. Inf. Theory}}}\ (\bibinfo {year} {2014})\ pp.\ \bibinfo
  {pages} {546--550}\BibitemShut {NoStop}%
\bibitem [{\citenamefont {Higgott}\ \emph {et~al.}(2023)\citenamefont
  {Higgott}, \citenamefont {Bohdanowicz}, \citenamefont {Kubica}, \citenamefont
  {Flammia},\ and\ \citenamefont {Campbell}}]{higgott_improved_2023-1}%
  \BibitemOpen
  \bibfield  {author} {\bibinfo {author} {\bibfnamefont {O.}~\bibnamefont
  {Higgott}}, \bibinfo {author} {\bibfnamefont {T.~C.}\ \bibnamefont
  {Bohdanowicz}}, \bibinfo {author} {\bibfnamefont {A.}~\bibnamefont {Kubica}},
  \bibinfo {author} {\bibfnamefont {S.~T.}\ \bibnamefont {Flammia}},\ and\
  \bibinfo {author} {\bibfnamefont {E.~T.}\ \bibnamefont {Campbell}},\
  }\bibfield  {title} {\bibinfo {title} {Improved decoding of circuit noise and
  fragile boundaries of tailored surface codes},\ }\href
  {https://doi.org/10.1103/PhysRevX.13.031007} {\bibfield  {journal} {\bibinfo
  {journal} {Phys. Rev. X}\ }\textbf {\bibinfo {volume} {13}},\ \bibinfo
  {pages} {031007} (\bibinfo {year} {2023})}\BibitemShut {NoStop}%
\bibitem [{\citenamefont {Fowler}\ \emph {et~al.}(2009)\citenamefont {Fowler},
  \citenamefont {Stephens},\ and\ \citenamefont
  {Groszkowski}}]{fowler_high-threshold_2009}%
  \BibitemOpen
  \bibfield  {author} {\bibinfo {author} {\bibfnamefont {A.~G.}\ \bibnamefont
  {Fowler}}, \bibinfo {author} {\bibfnamefont {A.~M.}\ \bibnamefont
  {Stephens}},\ and\ \bibinfo {author} {\bibfnamefont {P.}~\bibnamefont
  {Groszkowski}},\ }\bibfield  {title} {\bibinfo {title} {High-threshold
  universal quantum computation on the surface code},\ }\href
  {https://doi.org/10.1103/PhysRevA.80.052312} {\bibfield  {journal} {\bibinfo
  {journal} {Phys. Rev. A}\ }\textbf {\bibinfo {volume} {80}},\ \bibinfo
  {pages} {052312} (\bibinfo {year} {2009})}\BibitemShut {NoStop}%
\bibitem [{\citenamefont {Pattison}\ \emph {et~al.}(2021)\citenamefont
  {Pattison}, \citenamefont {Beverland}, \citenamefont {{da Silva}},\ and\
  \citenamefont {Delfosse}}]{pattison_improved_2021}%
  \BibitemOpen
  \bibfield  {author} {\bibinfo {author} {\bibfnamefont {C.~A.}\ \bibnamefont
  {Pattison}}, \bibinfo {author} {\bibfnamefont {M.~E.}\ \bibnamefont
  {Beverland}}, \bibinfo {author} {\bibfnamefont {M.~P.}\ \bibnamefont {{da
  Silva}}},\ and\ \bibinfo {author} {\bibfnamefont {N.}~\bibnamefont
  {Delfosse}},\ }\bibfield  {title} {\bibinfo {title} {Improved quantum error
  correction using soft information},\ }\href {http://arxiv.org/abs/2107.13589}
  {\bibfield  {journal} {\bibinfo  {journal} {ArXiv210713589 Quant-Ph}\ }
  (\bibinfo {year} {2021})},\ \Eprint {https://arxiv.org/abs/2107.13589}
  {arxiv:2107.13589 [quant-ph]} \BibitemShut {NoStop}%
\bibitem [{\citenamefont {Kuo}\ and\ \citenamefont
  {Lai}(2023)}]{kuo_correcting_2023}%
  \BibitemOpen
  \bibfield  {author} {\bibinfo {author} {\bibfnamefont {K.-Y.}\ \bibnamefont
  {Kuo}}\ and\ \bibinfo {author} {\bibfnamefont {C.-Y.}\ \bibnamefont {Lai}},\
  }\href {https://doi.org/10.48550/arXiv.2310.12682} {\bibinfo {title}
  {Correcting phenomenological quantum noise via belief propagation}} (\bibinfo
  {year} {2023}),\ \Eprint {https://arxiv.org/abs/2310.12682} {arxiv:2310.12682
  [quant-ph]} \BibitemShut {NoStop}%
\bibitem [{\citenamefont {Higgott}(2022)}]{higgott_pymatching_2022}%
  \BibitemOpen
  \bibfield  {author} {\bibinfo {author} {\bibfnamefont {O.}~\bibnamefont
  {Higgott}},\ }\bibfield  {title} {\bibinfo {title} {Pymatching: A python
  package for decoding quantum codes with minimum-weight perfect matching},\
  }\href {https://doi.org/10.1145/3505637} {\bibfield  {journal} {\bibinfo
  {journal} {ACM Trans. Quant. Comp.}\ }\textbf {\bibinfo {volume} {3}},\
  \bibinfo {pages} {16:1} (\bibinfo {year} {2022})}\BibitemShut {NoStop}%
\bibitem [{\citenamefont {Higgott}\ and\ \citenamefont
  {Gidney}(2023)}]{higgott_sparse_2023}%
  \BibitemOpen
  \bibfield  {author} {\bibinfo {author} {\bibfnamefont {O.}~\bibnamefont
  {Higgott}}\ and\ \bibinfo {author} {\bibfnamefont {C.}~\bibnamefont
  {Gidney}},\ }\href {https://doi.org/10.48550/arXiv.2303.15933} {\bibinfo
  {title} {Sparse blossom: correcting a million errors per core second with
  minimum-weight matching}} (\bibinfo {year} {2023}),\ \Eprint
  {https://arxiv.org/abs/2303.15933} {arxiv:2303.15933 [quant-ph]} \BibitemShut
  {NoStop}%
\bibitem [{\citenamefont {Walls}\ and\ \citenamefont
  {Milburn}(2008)}]{walls_quantum_2008}%
  \BibitemOpen
  \bibfield  {author} {\bibinfo {author} {\bibfnamefont {D.~F.}\ \bibnamefont
  {Walls}}\ and\ \bibinfo {author} {\bibfnamefont {G.~J.}\ \bibnamefont
  {Milburn}},\ }\href@noop {} {\emph {\bibinfo {title} {Quantum Optics}}},\
  \bibinfo {edition} {2nd}\ ed.\ (\bibinfo  {publisher} {Springer-Verlag},\
  \bibinfo {address} {Berlin Heidelberg},\ \bibinfo {year} {2008})\BibitemShut
  {NoStop}%
\bibitem [{\citenamefont {Puri}\ \emph {et~al.}(2017)\citenamefont {Puri},
  \citenamefont {Boutin},\ and\ \citenamefont {Blais}}]{puri_engineering_2017}%
  \BibitemOpen
  \bibfield  {author} {\bibinfo {author} {\bibfnamefont {S.}~\bibnamefont
  {Puri}}, \bibinfo {author} {\bibfnamefont {S.}~\bibnamefont {Boutin}},\ and\
  \bibinfo {author} {\bibfnamefont {A.}~\bibnamefont {Blais}},\ }\bibfield
  {title} {\bibinfo {title} {Engineering the quantum states of light in a
  kerr-nonlinear resonator by two-photon driving},\ }\href
  {https://doi.org/10.1038/s41534-017-0019-1} {\bibfield  {journal} {\bibinfo
  {journal} {npj Quant. Inf.}\ }\textbf {\bibinfo {volume} {3}},\ \bibinfo
  {pages} {18} (\bibinfo {year} {2017})}\BibitemShut {NoStop}%
\bibitem [{\citenamefont {Mirrahimi}\ \emph {et~al.}(2014)\citenamefont
  {Mirrahimi}, \citenamefont {Leghtas}, \citenamefont {Albert}, \citenamefont
  {Touzard}, \citenamefont {Schoelkopf}, \citenamefont {Jiang},\ and\
  \citenamefont {Devoret}}]{mirrahimi_dynamically_2014}%
  \BibitemOpen
  \bibfield  {author} {\bibinfo {author} {\bibfnamefont {M.}~\bibnamefont
  {Mirrahimi}}, \bibinfo {author} {\bibfnamefont {Z.}~\bibnamefont {Leghtas}},
  \bibinfo {author} {\bibfnamefont {V.~V.}\ \bibnamefont {Albert}}, \bibinfo
  {author} {\bibfnamefont {S.}~\bibnamefont {Touzard}}, \bibinfo {author}
  {\bibfnamefont {R.~J.}\ \bibnamefont {Schoelkopf}}, \bibinfo {author}
  {\bibfnamefont {L.}~\bibnamefont {Jiang}},\ and\ \bibinfo {author}
  {\bibfnamefont {M.~H.}\ \bibnamefont {Devoret}},\ }\bibfield  {title}
  {\bibinfo {title} {Dynamically protected cat-qubits: a new paradigm for
  universal quantum computation},\ }\href
  {https://doi.org/10.1088/1367-2630/16/4/045014} {\bibfield  {journal}
  {\bibinfo  {journal} {New J. Phys.}\ }\textbf {\bibinfo {volume} {16}},\
  \bibinfo {pages} {045014} (\bibinfo {year} {2014})}\BibitemShut {NoStop}%
\bibitem [{\citenamefont {Gautier}\ \emph {et~al.}(2022)\citenamefont
  {Gautier}, \citenamefont {Sarlette},\ and\ \citenamefont
  {Mirrahimi}}]{gautier_combined_2022}%
  \BibitemOpen
  \bibfield  {author} {\bibinfo {author} {\bibfnamefont {R.}~\bibnamefont
  {Gautier}}, \bibinfo {author} {\bibfnamefont {A.}~\bibnamefont {Sarlette}},\
  and\ \bibinfo {author} {\bibfnamefont {M.}~\bibnamefont {Mirrahimi}},\
  }\bibfield  {title} {\bibinfo {title} {Combined dissipative and hamiltonian
  confinement of cat qubits},\ }\href
  {https://doi.org/10.1103/PRXQuantum.3.020339} {\bibfield  {journal} {\bibinfo
   {journal} {PRX Quantum}\ }\textbf {\bibinfo {volume} {3}},\ \bibinfo {pages}
  {020339} (\bibinfo {year} {2022})},\ \Eprint
  {https://arxiv.org/abs/2112.05545} {arxiv:2112.05545} \BibitemShut {NoStop}%
\bibitem [{\citenamefont {Albert}(2018)}]{albert_lindbladians_2018}%
  \BibitemOpen
  \bibfield  {author} {\bibinfo {author} {\bibfnamefont {V.~V.}\ \bibnamefont
  {Albert}},\ }\emph {\bibinfo {title} {Lindbladians with multiple steady
  states: theory and applications}},\ \href {http://arxiv.org/abs/1802.00010}
  {Ph.D. thesis},\ \bibinfo  {school} {Yale University} (\bibinfo {year}
  {2018}),\ \Eprint {https://arxiv.org/abs/1802.00010} {arxiv:1802.00010}
  \BibitemShut {NoStop}%
\bibitem [{\citenamefont {Berdou}\ \emph {et~al.}(2023)\citenamefont {Berdou},
  \citenamefont {Murani}, \citenamefont {R{\'e}glade}, \citenamefont {Smith},
  \citenamefont {Villiers}, \citenamefont {Palomo}, \citenamefont {Rosticher},
  \citenamefont {Denis}, \citenamefont {Morfin}, \citenamefont {Delbecq},
  \citenamefont {Kontos}, \citenamefont {Pankratova}, \citenamefont
  {Rautschke}, \citenamefont {Peronnin}, \citenamefont {Sellem}, \citenamefont
  {Rouchon}, \citenamefont {Sarlette}, \citenamefont {Mirrahimi}, \citenamefont
  {{Campagne-Ibarcq}}, \citenamefont {Jezouin}, \citenamefont {Lescanne},\ and\
  \citenamefont {Leghtas}}]{berdou_one_2023}%
  \BibitemOpen
  \bibfield  {author} {\bibinfo {author} {\bibfnamefont {C.}~\bibnamefont
  {Berdou}}, \bibinfo {author} {\bibfnamefont {A.}~\bibnamefont {Murani}},
  \bibinfo {author} {\bibfnamefont {U.}~\bibnamefont {R{\'e}glade}}, \bibinfo
  {author} {\bibfnamefont {W.}~\bibnamefont {Smith}}, \bibinfo {author}
  {\bibfnamefont {M.}~\bibnamefont {Villiers}}, \bibinfo {author}
  {\bibfnamefont {J.}~\bibnamefont {Palomo}}, \bibinfo {author} {\bibfnamefont
  {M.}~\bibnamefont {Rosticher}}, \bibinfo {author} {\bibfnamefont
  {A.}~\bibnamefont {Denis}}, \bibinfo {author} {\bibfnamefont
  {P.}~\bibnamefont {Morfin}}, \bibinfo {author} {\bibfnamefont
  {M.}~\bibnamefont {Delbecq}}, \bibinfo {author} {\bibfnamefont
  {T.}~\bibnamefont {Kontos}}, \bibinfo {author} {\bibfnamefont
  {N.}~\bibnamefont {Pankratova}}, \bibinfo {author} {\bibfnamefont
  {F.}~\bibnamefont {Rautschke}}, \bibinfo {author} {\bibfnamefont
  {T.}~\bibnamefont {Peronnin}}, \bibinfo {author} {\bibfnamefont {L.-A.}\
  \bibnamefont {Sellem}}, \bibinfo {author} {\bibfnamefont {P.}~\bibnamefont
  {Rouchon}}, \bibinfo {author} {\bibfnamefont {A.}~\bibnamefont {Sarlette}},
  \bibinfo {author} {\bibfnamefont {M.}~\bibnamefont {Mirrahimi}}, \bibinfo
  {author} {\bibfnamefont {P.}~\bibnamefont {{Campagne-Ibarcq}}}, \bibinfo
  {author} {\bibfnamefont {S.}~\bibnamefont {Jezouin}}, \bibinfo {author}
  {\bibfnamefont {R.}~\bibnamefont {Lescanne}},\ and\ \bibinfo {author}
  {\bibfnamefont {Z.}~\bibnamefont {Leghtas}},\ }\bibfield  {title} {\bibinfo
  {title} {One hundred second bit-flip time in a two-photon dissipative
  oscillator},\ }\href {https://doi.org/10.1103/PRXQuantum.4.020350} {\bibfield
   {journal} {\bibinfo  {journal} {PRX Quantum}\ }\textbf {\bibinfo {volume}
  {4}},\ \bibinfo {pages} {020350} (\bibinfo {year} {2023})}\BibitemShut
  {NoStop}%
\bibitem [{\citenamefont {Lindblad}(1976)}]{lindblad_generators_1976}%
  \BibitemOpen
  \bibfield  {author} {\bibinfo {author} {\bibfnamefont {G.}~\bibnamefont
  {Lindblad}},\ }\bibfield  {title} {\bibinfo {title} {On the generators of
  quantum dynamical semigroups},\ }\href {https://doi.org/10.1007/BF01608499}
  {\bibfield  {journal} {\bibinfo  {journal} {Commun. Math. Phys.}\ }\textbf
  {\bibinfo {volume} {48}},\ \bibinfo {pages} {119} (\bibinfo {year}
  {1976})}\BibitemShut {NoStop}%
\bibitem [{\citenamefont {Gorini}\ \emph {et~al.}(1976)\citenamefont {Gorini},
  \citenamefont {Kossakowski},\ and\ \citenamefont
  {Sudarshan}}]{gorini_completely_1976}%
  \BibitemOpen
  \bibfield  {author} {\bibinfo {author} {\bibfnamefont {V.}~\bibnamefont
  {Gorini}}, \bibinfo {author} {\bibfnamefont {A.}~\bibnamefont
  {Kossakowski}},\ and\ \bibinfo {author} {\bibfnamefont {E.~C.~G.}\
  \bibnamefont {Sudarshan}},\ }\bibfield  {title} {\bibinfo {title} {Completely
  positive dynamical semigroups of n-level systems},\ }\href
  {https://doi.org/10.1063/1.522979} {\bibfield  {journal} {\bibinfo  {journal}
  {J. Math. Phys.}\ }\textbf {\bibinfo {volume} {17}},\ \bibinfo {pages} {821}
  (\bibinfo {year} {1976})}\BibitemShut {NoStop}%
\bibitem [{\citenamefont {Nielsen}\ and\ \citenamefont
  {Chuang}(2010)}]{nielsen_quantum_2010}%
  \BibitemOpen
  \bibfield  {author} {\bibinfo {author} {\bibfnamefont {M.~A.}\ \bibnamefont
  {Nielsen}}\ and\ \bibinfo {author} {\bibfnamefont {I.~L.}\ \bibnamefont
  {Chuang}},\ }\href@noop {} {\emph {\bibinfo {title} {Quantum computation and
  quantum information}}},\ \bibinfo {edition} {10th}\ ed.\ (\bibinfo
  {publisher} {Cambridge University Press},\ \bibinfo {address} {Cambridge ;
  New York},\ \bibinfo {year} {2010})\BibitemShut {NoStop}%
\bibitem [{\citenamefont {Knill}\ and\ \citenamefont
  {Laflamme}(1997)}]{knill_theory_1997}%
  \BibitemOpen
  \bibfield  {author} {\bibinfo {author} {\bibfnamefont {E.}~\bibnamefont
  {Knill}}\ and\ \bibinfo {author} {\bibfnamefont {R.}~\bibnamefont
  {Laflamme}},\ }\bibfield  {title} {\bibinfo {title} {Theory of quantum
  error-correcting codes},\ }\href {https://doi.org/10.1103/PhysRevA.55.900}
  {\bibfield  {journal} {\bibinfo  {journal} {Phys. Rev. A}\ }\textbf {\bibinfo
  {volume} {55}},\ \bibinfo {pages} {900} (\bibinfo {year} {1997})}\BibitemShut
  {NoStop}%
\bibitem [{\citenamefont {Hillmann}\ and\ \citenamefont
  {Quijandr{\'i}a}(2023)}]{hillmann_quantum_2023}%
  \BibitemOpen
  \bibfield  {author} {\bibinfo {author} {\bibfnamefont {T.}~\bibnamefont
  {Hillmann}}\ and\ \bibinfo {author} {\bibfnamefont {F.}~\bibnamefont
  {Quijandr{\'i}a}},\ }\bibfield  {title} {\bibinfo {title} {Quantum error
  correction with dissipatively stabilized squeezed-cat qubits},\ }\href
  {https://doi.org/10.1103/PhysRevA.107.032423} {\bibfield  {journal} {\bibinfo
   {journal} {Phys. Rev. A}\ }\textbf {\bibinfo {volume} {107}},\ \bibinfo
  {pages} {032423} (\bibinfo {year} {2023})}\BibitemShut {NoStop}%
\bibitem [{\citenamefont {Manes}\ and\ \citenamefont
  {Claes}(2023)}]{manes_distance-preserving_2023}%
  \BibitemOpen
  \bibfield  {author} {\bibinfo {author} {\bibfnamefont {A.~G.}\ \bibnamefont
  {Manes}}\ and\ \bibinfo {author} {\bibfnamefont {J.}~\bibnamefont {Claes}},\
  }\href {https://doi.org/10.48550/arXiv.2308.15520} {\bibinfo {title}
  {Distance-preserving stabilizer measurements in hypergraph product codes}}
  (\bibinfo {year} {2023}),\ \Eprint {https://arxiv.org/abs/2308.15520}
  {arxiv:2308.15520 [quant-ph]} \BibitemShut {NoStop}%
\bibitem [{\citenamefont {Yost}\ \emph {et~al.}(2020)\citenamefont {Yost},
  \citenamefont {Schwartz}, \citenamefont {Mallek}, \citenamefont {Rosenberg},
  \citenamefont {Stull}, \citenamefont {Yoder}, \citenamefont {Calusine},
  \citenamefont {Cook}, \citenamefont {Das}, \citenamefont {Day}, \citenamefont
  {Golden}, \citenamefont {Kim}, \citenamefont {Melville}, \citenamefont
  {Niedzielski}, \citenamefont {Woods}, \citenamefont {Kerman},\ and\
  \citenamefont {Oliver}}]{yost_solid-state_2020}%
  \BibitemOpen
  \bibfield  {author} {\bibinfo {author} {\bibfnamefont {D.~R.~W.}\
  \bibnamefont {Yost}}, \bibinfo {author} {\bibfnamefont {M.~E.}\ \bibnamefont
  {Schwartz}}, \bibinfo {author} {\bibfnamefont {J.}~\bibnamefont {Mallek}},
  \bibinfo {author} {\bibfnamefont {D.}~\bibnamefont {Rosenberg}}, \bibinfo
  {author} {\bibfnamefont {C.}~\bibnamefont {Stull}}, \bibinfo {author}
  {\bibfnamefont {J.~L.}\ \bibnamefont {Yoder}}, \bibinfo {author}
  {\bibfnamefont {G.}~\bibnamefont {Calusine}}, \bibinfo {author}
  {\bibfnamefont {M.}~\bibnamefont {Cook}}, \bibinfo {author} {\bibfnamefont
  {R.}~\bibnamefont {Das}}, \bibinfo {author} {\bibfnamefont {A.~L.}\
  \bibnamefont {Day}}, \bibinfo {author} {\bibfnamefont {E.~B.}\ \bibnamefont
  {Golden}}, \bibinfo {author} {\bibfnamefont {D.~K.}\ \bibnamefont {Kim}},
  \bibinfo {author} {\bibfnamefont {A.}~\bibnamefont {Melville}}, \bibinfo
  {author} {\bibfnamefont {B.~M.}\ \bibnamefont {Niedzielski}}, \bibinfo
  {author} {\bibfnamefont {W.}~\bibnamefont {Woods}}, \bibinfo {author}
  {\bibfnamefont {A.~J.}\ \bibnamefont {Kerman}},\ and\ \bibinfo {author}
  {\bibfnamefont {W.~D.}\ \bibnamefont {Oliver}},\ }\bibfield  {title}
  {\bibinfo {title} {Solid-state qubits integrated with superconducting
  through-silicon vias},\ }\href {https://doi.org/10.1038/s41534-020-00289-8}
  {\bibfield  {journal} {\bibinfo  {journal} {npj Quant. Inf.}\ }\textbf
  {\bibinfo {volume} {6}},\ \bibinfo {pages} {1} (\bibinfo {year}
  {2020})}\BibitemShut {NoStop}%
\bibitem [{\citenamefont {Mallek}\ \emph {et~al.}(2021)\citenamefont {Mallek},
  \citenamefont {Yost}, \citenamefont {Rosenberg}, \citenamefont {Yoder},
  \citenamefont {Calusine}, \citenamefont {Cook}, \citenamefont {Das},
  \citenamefont {Day}, \citenamefont {Golden}, \citenamefont {Kim},
  \citenamefont {Knecht}, \citenamefont {Niedzielski}, \citenamefont
  {Schwartz}, \citenamefont {Sevi}, \citenamefont {Stull}, \citenamefont
  {Woods}, \citenamefont {Kerman},\ and\ \citenamefont
  {Oliver}}]{mallek_fabrication_2021}%
  \BibitemOpen
  \bibfield  {author} {\bibinfo {author} {\bibfnamefont {J.~L.}\ \bibnamefont
  {Mallek}}, \bibinfo {author} {\bibfnamefont {D.-R.~W.}\ \bibnamefont {Yost}},
  \bibinfo {author} {\bibfnamefont {D.}~\bibnamefont {Rosenberg}}, \bibinfo
  {author} {\bibfnamefont {J.~L.}\ \bibnamefont {Yoder}}, \bibinfo {author}
  {\bibfnamefont {G.}~\bibnamefont {Calusine}}, \bibinfo {author}
  {\bibfnamefont {M.}~\bibnamefont {Cook}}, \bibinfo {author} {\bibfnamefont
  {R.}~\bibnamefont {Das}}, \bibinfo {author} {\bibfnamefont {A.}~\bibnamefont
  {Day}}, \bibinfo {author} {\bibfnamefont {E.}~\bibnamefont {Golden}},
  \bibinfo {author} {\bibfnamefont {D.~K.}\ \bibnamefont {Kim}}, \bibinfo
  {author} {\bibfnamefont {J.}~\bibnamefont {Knecht}}, \bibinfo {author}
  {\bibfnamefont {B.~M.}\ \bibnamefont {Niedzielski}}, \bibinfo {author}
  {\bibfnamefont {M.}~\bibnamefont {Schwartz}}, \bibinfo {author}
  {\bibfnamefont {A.}~\bibnamefont {Sevi}}, \bibinfo {author} {\bibfnamefont
  {C.}~\bibnamefont {Stull}}, \bibinfo {author} {\bibfnamefont
  {W.}~\bibnamefont {Woods}}, \bibinfo {author} {\bibfnamefont {A.~J.}\
  \bibnamefont {Kerman}},\ and\ \bibinfo {author} {\bibfnamefont {W.~D.}\
  \bibnamefont {Oliver}},\ }\href {https://doi.org/10.48550/arXiv.2103.08536}
  {\bibinfo {title} {Fabrication of superconducting through-silicon vias}}
  (\bibinfo {year} {2021}),\ \Eprint {https://arxiv.org/abs/2103.08536}
  {arxiv:2103.08536 [cond-mat, physics:physics, physics:quant-ph]} \BibitemShut
  {NoStop}%
\bibitem [{\citenamefont {Grigoras}\ \emph {et~al.}(2022)\citenamefont
  {Grigoras}, \citenamefont {Yurttag{\"u}l}, \citenamefont {Kaikkonen},
  \citenamefont {Mannila}, \citenamefont {Eskelinen}, \citenamefont {Lozano},
  \citenamefont {Li}, \citenamefont {Rommel}, \citenamefont {Shiri},
  \citenamefont {Tiencken}, \citenamefont {Simbierowicz}, \citenamefont
  {Ronzani}, \citenamefont {H{\"a}tinen}, \citenamefont {Datta}, \citenamefont
  {Vesterinen}, \citenamefont {Gr{\"o}nberg}, \citenamefont
  {Bizn{\'a}rov{\'a}}, \citenamefont {Roudsari}, \citenamefont {Kosen},
  \citenamefont {Osman}, \citenamefont {Prunnila}, \citenamefont {Hassel},
  \citenamefont {Bylander},\ and\ \citenamefont
  {Govenius}}]{grigoras_qubit-compatible_2022}%
  \BibitemOpen
  \bibfield  {author} {\bibinfo {author} {\bibfnamefont {K.}~\bibnamefont
  {Grigoras}}, \bibinfo {author} {\bibfnamefont {N.}~\bibnamefont
  {Yurttag{\"u}l}}, \bibinfo {author} {\bibfnamefont {J.-P.}\ \bibnamefont
  {Kaikkonen}}, \bibinfo {author} {\bibfnamefont {E.~T.}\ \bibnamefont
  {Mannila}}, \bibinfo {author} {\bibfnamefont {P.}~\bibnamefont {Eskelinen}},
  \bibinfo {author} {\bibfnamefont {D.~P.}\ \bibnamefont {Lozano}}, \bibinfo
  {author} {\bibfnamefont {H.-X.}\ \bibnamefont {Li}}, \bibinfo {author}
  {\bibfnamefont {M.}~\bibnamefont {Rommel}}, \bibinfo {author} {\bibfnamefont
  {D.}~\bibnamefont {Shiri}}, \bibinfo {author} {\bibfnamefont
  {N.}~\bibnamefont {Tiencken}}, \bibinfo {author} {\bibfnamefont
  {S.}~\bibnamefont {Simbierowicz}}, \bibinfo {author} {\bibfnamefont
  {A.}~\bibnamefont {Ronzani}}, \bibinfo {author} {\bibfnamefont
  {J.}~\bibnamefont {H{\"a}tinen}}, \bibinfo {author} {\bibfnamefont
  {D.}~\bibnamefont {Datta}}, \bibinfo {author} {\bibfnamefont
  {V.}~\bibnamefont {Vesterinen}}, \bibinfo {author} {\bibfnamefont
  {L.}~\bibnamefont {Gr{\"o}nberg}}, \bibinfo {author} {\bibfnamefont
  {J.}~\bibnamefont {Bizn{\'a}rov{\'a}}}, \bibinfo {author} {\bibfnamefont
  {A.~F.}\ \bibnamefont {Roudsari}}, \bibinfo {author} {\bibfnamefont
  {S.}~\bibnamefont {Kosen}}, \bibinfo {author} {\bibfnamefont
  {A.}~\bibnamefont {Osman}}, \bibinfo {author} {\bibfnamefont
  {M.}~\bibnamefont {Prunnila}}, \bibinfo {author} {\bibfnamefont
  {J.}~\bibnamefont {Hassel}}, \bibinfo {author} {\bibfnamefont
  {J.}~\bibnamefont {Bylander}},\ and\ \bibinfo {author} {\bibfnamefont
  {J.}~\bibnamefont {Govenius}},\ }\bibfield  {title} {\bibinfo {title}
  {Qubit-compatible substrates with superconducting through-silicon vias},\
  }\href {https://doi.org/10.1109/TQE.2022.3209881} {\bibfield  {journal}
  {\bibinfo  {journal} {IEEE Trans. Quantum Eng.}\ }\textbf {\bibinfo {volume}
  {3}},\ \bibinfo {pages} {1} (\bibinfo {year} {2022})}\BibitemShut {NoStop}%
\bibitem [{\citenamefont {Hazard}\ \emph {et~al.}(2023)\citenamefont {Hazard},
  \citenamefont {Woods}, \citenamefont {Rosenberg}, \citenamefont {Das},
  \citenamefont {Hirjibehedin}, \citenamefont {Kim}, \citenamefont {Knecht},
  \citenamefont {Mallek}, \citenamefont {Melville}, \citenamefont
  {Niedzielski}, \citenamefont {Serniak}, \citenamefont {Sliwa}, \citenamefont
  {{Ruth-Yost}}, \citenamefont {Yoder}, \citenamefont {Oliver},\ and\
  \citenamefont {Schwartz}}]{hazard_characterization_2023}%
  \BibitemOpen
  \bibfield  {author} {\bibinfo {author} {\bibfnamefont {T.~M.}\ \bibnamefont
  {Hazard}}, \bibinfo {author} {\bibfnamefont {W.}~\bibnamefont {Woods}},
  \bibinfo {author} {\bibfnamefont {D.}~\bibnamefont {Rosenberg}}, \bibinfo
  {author} {\bibfnamefont {R.}~\bibnamefont {Das}}, \bibinfo {author}
  {\bibfnamefont {C.~F.}\ \bibnamefont {Hirjibehedin}}, \bibinfo {author}
  {\bibfnamefont {D.~K.}\ \bibnamefont {Kim}}, \bibinfo {author} {\bibfnamefont
  {J.}~\bibnamefont {Knecht}}, \bibinfo {author} {\bibfnamefont
  {J.}~\bibnamefont {Mallek}}, \bibinfo {author} {\bibfnamefont
  {A.}~\bibnamefont {Melville}}, \bibinfo {author} {\bibfnamefont {B.~M.}\
  \bibnamefont {Niedzielski}}, \bibinfo {author} {\bibfnamefont
  {K.}~\bibnamefont {Serniak}}, \bibinfo {author} {\bibfnamefont {K.~M.}\
  \bibnamefont {Sliwa}}, \bibinfo {author} {\bibfnamefont {D.}~\bibnamefont
  {{Ruth-Yost}}}, \bibinfo {author} {\bibfnamefont {J.~L.}\ \bibnamefont
  {Yoder}}, \bibinfo {author} {\bibfnamefont {W.~D.}\ \bibnamefont {Oliver}},\
  and\ \bibinfo {author} {\bibfnamefont {M.~E.}\ \bibnamefont {Schwartz}},\
  }\href {https://doi.org/10.48550/arXiv.2308.00834} {\bibinfo {title}
  {Characterization of superconducting through-silicon vias as capacitive
  elements in quantum circuits}} (\bibinfo {year} {2023}),\ \Eprint
  {https://arxiv.org/abs/2308.00834} {arxiv:2308.00834 [quant-ph]} \BibitemShut
  {NoStop}%
\bibitem [{\citenamefont {D'Anjou}(2021)}]{danjou_generalized_2021}%
  \BibitemOpen
  \bibfield  {author} {\bibinfo {author} {\bibfnamefont {B.}~\bibnamefont
  {D'Anjou}},\ }\bibfield  {title} {\bibinfo {title} {Generalized figure of
  merit for qubit readout},\ }\href
  {https://doi.org/10.1103/PhysRevA.103.042404} {\bibfield  {journal} {\bibinfo
   {journal} {Phys. Rev. A}\ }\textbf {\bibinfo {volume} {103}},\ \bibinfo
  {pages} {042404} (\bibinfo {year} {2021})}\BibitemShut {NoStop}%
\bibitem [{\citenamefont {Scruby}\ and\ \citenamefont
  {Nemoto}(2023)}]{scruby_local_2023}%
  \BibitemOpen
  \bibfield  {author} {\bibinfo {author} {\bibfnamefont {T.~R.}\ \bibnamefont
  {Scruby}}\ and\ \bibinfo {author} {\bibfnamefont {K.}~\bibnamefont
  {Nemoto}},\ }\bibfield  {title} {\bibinfo {title} {Local probabilistic
  decoding of a quantum code},\ }\href
  {https://doi.org/10.22331/q-2023-08-29-1093} {\bibfield  {journal} {\bibinfo
  {journal} {Quantum}\ }\textbf {\bibinfo {volume} {7}},\ \bibinfo {pages}
  {1093} (\bibinfo {year} {2023})},\ \Eprint {https://arxiv.org/abs/2212.06985}
  {arxiv:2212.06985 [quant-ph]} \BibitemShut {NoStop}%
\bibitem [{\citenamefont {Piveteau}\ \emph {et~al.}(2023)\citenamefont
  {Piveteau}, \citenamefont {Chubb},\ and\ \citenamefont
  {Renes}}]{piveteau_tensor_2023}%
  \BibitemOpen
  \bibfield  {author} {\bibinfo {author} {\bibfnamefont {C.}~\bibnamefont
  {Piveteau}}, \bibinfo {author} {\bibfnamefont {C.~T.}\ \bibnamefont
  {Chubb}},\ and\ \bibinfo {author} {\bibfnamefont {J.~M.}\ \bibnamefont
  {Renes}},\ }\href {https://doi.org/10.48550/arXiv.2310.10722} {\bibinfo
  {title} {Tensor network decoding beyond 2d}} (\bibinfo {year} {2023}),\
  \Eprint {https://arxiv.org/abs/2310.10722} {arxiv:2310.10722 [quant-ph]}
  \BibitemShut {NoStop}%
\bibitem [{\citenamefont {Bridgeman}\ \emph {et~al.}(2023)\citenamefont
  {Bridgeman}, \citenamefont {Kubica},\ and\ \citenamefont
  {Vasmer}}]{bridgeman_lifting_2023}%
  \BibitemOpen
  \bibfield  {author} {\bibinfo {author} {\bibfnamefont {J.~C.}\ \bibnamefont
  {Bridgeman}}, \bibinfo {author} {\bibfnamefont {A.}~\bibnamefont {Kubica}},\
  and\ \bibinfo {author} {\bibfnamefont {M.}~\bibnamefont {Vasmer}},\ }\href
  {https://doi.org/10.48550/arXiv.2305.06365} {\bibinfo {title} {Lifting
  topological codes: Three-dimensional subsystem codes from two-dimensional
  anyon models}} (\bibinfo {year} {2023}),\ \Eprint
  {https://arxiv.org/abs/2305.06365} {arxiv:2305.06365 [cond-mat,
  physics:quant-ph]} \BibitemShut {NoStop}%
\bibitem [{\citenamefont {Tillich}\ and\ \citenamefont
  {Z{\'e}mor}(2014)}]{tillich_quantum_2014}%
  \BibitemOpen
  \bibfield  {author} {\bibinfo {author} {\bibfnamefont {J.-P.}\ \bibnamefont
  {Tillich}}\ and\ \bibinfo {author} {\bibfnamefont {G.}~\bibnamefont
  {Z{\'e}mor}},\ }\bibfield  {title} {\bibinfo {title} {{Quantum LDPC codes
  with positive rate and minimum distance proportional to the square root of
  the block length}},\ }\href {https://doi.org/10.1109/TIT.2013.2292061}
  {\bibfield  {journal} {\bibinfo  {journal} {IEEE Trans. Inf. Theory}\
  }\textbf {\bibinfo {volume} {60}},\ \bibinfo {pages} {1193} (\bibinfo {year}
  {2014})}\BibitemShut {NoStop}%
\bibitem [{\citenamefont {Fossorier}(2004)}]{fossorier_quasicyclic_2004}%
  \BibitemOpen
  \bibfield  {author} {\bibinfo {author} {\bibfnamefont {M.}~\bibnamefont
  {Fossorier}},\ }\bibfield  {title} {\bibinfo {title} {Quasicyclic low-density
  parity-check codes from circulant permutation matrices},\ }\href
  {https://doi.org/10.1109/TIT.2004.831841} {\bibfield  {journal} {\bibinfo
  {journal} {IEEE Trans. Inf. Theory}\ }\textbf {\bibinfo {volume} {50}},\
  \bibinfo {pages} {1788} (\bibinfo {year} {2004})}\BibitemShut {NoStop}%
\bibitem [{\citenamefont {Roffe}(2022)}]{roffe_ldpc_2022}%
  \BibitemOpen
  \bibfield  {author} {\bibinfo {author} {\bibfnamefont {J.}~\bibnamefont
  {Roffe}},\ }\href {https://pypi.org/project/ldpc/} {\bibinfo {title} {Ldpc:
  Python tools for low density parity check codes}} (\bibinfo {year}
  {2022})\BibitemShut {NoStop}%
\bibitem [{\citenamefont {Fossorier}(2001)}]{fossorier_iterative_2001}%
  \BibitemOpen
  \bibfield  {author} {\bibinfo {author} {\bibfnamefont {M.}~\bibnamefont
  {Fossorier}},\ }\bibfield  {title} {\bibinfo {title} {Iterative
  reliability-based decoding of low-density parity check codes},\ }\href
  {https://doi.org/10.1109/49.924874} {\bibfield  {journal} {\bibinfo
  {journal} {IEEE J. Sel. Areas Commun.}\ }\textbf {\bibinfo {volume} {19}},\
  \bibinfo {pages} {908} (\bibinfo {year} {2001})}\BibitemShut {NoStop}%
\bibitem [{\citenamefont {Emran}\ and\ \citenamefont
  {Elsabrouty}(2014)}]{emran_simplified_2014}%
  \BibitemOpen
  \bibfield  {author} {\bibinfo {author} {\bibfnamefont {A.~A.}\ \bibnamefont
  {Emran}}\ and\ \bibinfo {author} {\bibfnamefont {M.}~\bibnamefont
  {Elsabrouty}},\ }\bibfield  {title} {\bibinfo {title} {Simplified
  variable-scaled min sum ldpc decoder for irregular ldpc codes},\ }in\ \href
  {https://doi.org/10.1109/CCNC.2014.6940497} {\emph {\bibinfo {booktitle}
  {2014 IEEE 11th Consum. Commun. Netw. Conf. CCNC}}}\ (\bibinfo {year}
  {2014})\ pp.\ \bibinfo {pages} {518--523}\BibitemShut {NoStop}%
\bibitem [{\citenamefont {Kschischang}\ \emph {et~al.}(2001)\citenamefont
  {Kschischang}, \citenamefont {Frey},\ and\ \citenamefont
  {Loeliger}}]{kschischang_factor_2001}%
  \BibitemOpen
  \bibfield  {author} {\bibinfo {author} {\bibfnamefont {F.}~\bibnamefont
  {Kschischang}}, \bibinfo {author} {\bibfnamefont {B.}~\bibnamefont {Frey}},\
  and\ \bibinfo {author} {\bibfnamefont {H.-A.}\ \bibnamefont {Loeliger}},\
  }\bibfield  {title} {\bibinfo {title} {Factor graphs and the sum-product
  algorithm},\ }\href {https://doi.org/10.1109/18.910572} {\bibfield  {journal}
  {\bibinfo  {journal} {IEEE Trans. Inform. Theory}\ }\textbf {\bibinfo
  {volume} {47}},\ \bibinfo {pages} {498} (\bibinfo {year}
  {Feb./2001})}\BibitemShut {NoStop}%
\bibitem [{\citenamefont {Gautier}\ \emph {et~al.}(2023)\citenamefont
  {Gautier}, \citenamefont {Mirrahimi},\ and\ \citenamefont
  {Sarlette}}]{gautier_designing_2023}%
  \BibitemOpen
  \bibfield  {author} {\bibinfo {author} {\bibfnamefont {R.}~\bibnamefont
  {Gautier}}, \bibinfo {author} {\bibfnamefont {M.}~\bibnamefont {Mirrahimi}},\
  and\ \bibinfo {author} {\bibfnamefont {A.}~\bibnamefont {Sarlette}},\ }\href
  {https://doi.org/10.48550/arXiv.2303.00760} {\bibinfo {title} {Designing
  high-fidelity gates for dissipative cat qubits}} (\bibinfo {year} {2023}),\
  \Eprint {https://arxiv.org/abs/2303.00760} {arxiv:2303.00760 [quant-ph]}
  \BibitemShut {NoStop}%
\bibitem [{\citenamefont {Knill}(2004)}]{knill_fault-tolerant_2004}%
  \BibitemOpen
  \bibfield  {author} {\bibinfo {author} {\bibfnamefont {E.}~\bibnamefont
  {Knill}},\ }\bibfield  {title} {\bibinfo {title} {Fault-tolerant postselected
  quantum computation: Schemes},\ }\href
  {http://arxiv.org/abs/quant-ph/0402171} {\bibfield  {journal} {\bibinfo
  {journal} {ArXiv:quant-ph/0402171}\ } (\bibinfo {year} {2004})},\ \Eprint
  {https://arxiv.org/abs/quant-ph/0402171} {arxiv:quant-ph/0402171}
  \BibitemShut {NoStop}%
\bibitem [{\citenamefont {Noh}\ \emph {et~al.}(2022)\citenamefont {Noh},
  \citenamefont {Chamberland},\ and\ \citenamefont
  {Brand{\~a}o}}]{noh_low-overhead_2022}%
  \BibitemOpen
  \bibfield  {author} {\bibinfo {author} {\bibfnamefont {K.}~\bibnamefont
  {Noh}}, \bibinfo {author} {\bibfnamefont {C.}~\bibnamefont {Chamberland}},\
  and\ \bibinfo {author} {\bibfnamefont {F.~G.}\ \bibnamefont {Brand{\~a}o}},\
  }\bibfield  {title} {\bibinfo {title} {Low-overhead fault-tolerant quantum
  error correction with the surface-gkp code},\ }\href
  {https://doi.org/10.1103/PRXQuantum.3.010315} {\bibfield  {journal} {\bibinfo
   {journal} {PRX Quantum}\ }\textbf {\bibinfo {volume} {3}},\ \bibinfo {pages}
  {010315} (\bibinfo {year} {2022})},\ \Eprint
  {https://arxiv.org/abs/2103.06994} {arxiv:2103.06994} \BibitemShut {NoStop}%
\bibitem [{\citenamefont {Larsen}\ \emph {et~al.}(2021)\citenamefont {Larsen},
  \citenamefont {Chamberland}, \citenamefont {Noh}, \citenamefont
  {{Neergaard-Nielsen}},\ and\ \citenamefont
  {Andersen}}]{larsen_fault-tolerant_2021}%
  \BibitemOpen
  \bibfield  {author} {\bibinfo {author} {\bibfnamefont {M.~V.}\ \bibnamefont
  {Larsen}}, \bibinfo {author} {\bibfnamefont {C.}~\bibnamefont {Chamberland}},
  \bibinfo {author} {\bibfnamefont {K.}~\bibnamefont {Noh}}, \bibinfo {author}
  {\bibfnamefont {J.~S.}\ \bibnamefont {{Neergaard-Nielsen}}},\ and\ \bibinfo
  {author} {\bibfnamefont {U.~L.}\ \bibnamefont {Andersen}},\ }\bibfield
  {title} {\bibinfo {title} {Fault-tolerant continuous-variable
  measurement-based quantum computation architecture},\ }\href
  {https://doi.org/10.1103/PRXQuantum.2.030325} {\bibfield  {journal} {\bibinfo
   {journal} {PRX Quantum}\ }\textbf {\bibinfo {volume} {2}},\ \bibinfo {pages}
  {030325} (\bibinfo {year} {2021})},\ \Eprint
  {https://arxiv.org/abs/2101.03014} {arxiv:2101.03014} \BibitemShut {NoStop}%
\bibitem [{\citenamefont {Conrad}\ \emph {et~al.}(2022)\citenamefont {Conrad},
  \citenamefont {Eisert},\ and\ \citenamefont
  {Arzani}}]{conrad_gottesman-kitaev-preskill_2022}%
  \BibitemOpen
  \bibfield  {author} {\bibinfo {author} {\bibfnamefont {J.}~\bibnamefont
  {Conrad}}, \bibinfo {author} {\bibfnamefont {J.}~\bibnamefont {Eisert}},\
  and\ \bibinfo {author} {\bibfnamefont {F.}~\bibnamefont {Arzani}},\
  }\bibfield  {title} {\bibinfo {title} {{Gottesman-Kitaev-Preskill codes: A
  lattice perspective}},\ }\href {https://doi.org/10.22331/q-2022-02-10-648}
  {\bibfield  {journal} {\bibinfo  {journal} {Quantum}\ }\textbf {\bibinfo
  {volume} {6}},\ \bibinfo {pages} {648} (\bibinfo {year} {2022})},\ \Eprint
  {https://arxiv.org/abs/2109.14645} {arxiv:2109.14645} \BibitemShut {NoStop}%
\bibitem [{\citenamefont {Grimsmo}\ \emph {et~al.}(2020)\citenamefont
  {Grimsmo}, \citenamefont {Combes},\ and\ \citenamefont
  {Baragiola}}]{grimsmo_quantum_2020}%
  \BibitemOpen
  \bibfield  {author} {\bibinfo {author} {\bibfnamefont {A.~L.}\ \bibnamefont
  {Grimsmo}}, \bibinfo {author} {\bibfnamefont {J.}~\bibnamefont {Combes}},\
  and\ \bibinfo {author} {\bibfnamefont {B.~Q.}\ \bibnamefont {Baragiola}},\
  }\bibfield  {title} {\bibinfo {title} {Quantum computing with
  rotation-symmetric bosonic codes},\ }\href
  {https://doi.org/10.1103/PhysRevX.10.011058} {\bibfield  {journal} {\bibinfo
  {journal} {Phys. Rev. X}\ }\textbf {\bibinfo {volume} {10}},\ \bibinfo
  {pages} {011058} (\bibinfo {year} {2020})}\BibitemShut {NoStop}%
\bibitem [{\citenamefont {Hillmann}\ \emph {et~al.}(2022)\citenamefont
  {Hillmann}, \citenamefont {Quijandr{\'i}a}, \citenamefont {Grimsmo},\ and\
  \citenamefont {Ferrini}}]{hillmann_performance_2022}%
  \BibitemOpen
  \bibfield  {author} {\bibinfo {author} {\bibfnamefont {T.}~\bibnamefont
  {Hillmann}}, \bibinfo {author} {\bibfnamefont {F.}~\bibnamefont
  {Quijandr{\'i}a}}, \bibinfo {author} {\bibfnamefont {A.~L.}\ \bibnamefont
  {Grimsmo}},\ and\ \bibinfo {author} {\bibfnamefont {G.}~\bibnamefont
  {Ferrini}},\ }\bibfield  {title} {\bibinfo {title} {Performance of
  teleportation-based error-correction circuits for bosonic codes with noisy
  measurements},\ }\href {https://doi.org/10.1103/PRXQuantum.3.020334}
  {\bibfield  {journal} {\bibinfo  {journal} {PRX Quantum}\ }\textbf {\bibinfo
  {volume} {3}},\ \bibinfo {pages} {020334} (\bibinfo {year} {2022})},\ \Eprint
  {https://arxiv.org/abs/2108.01009} {arxiv:2108.01009} \BibitemShut {NoStop}%
\bibitem [{\citenamefont {Wang}\ \emph {et~al.}(2003)\citenamefont {Wang},
  \citenamefont {Harrington},\ and\ \citenamefont
  {Preskill}}]{wang_confinement-higgs_2003}%
  \BibitemOpen
  \bibfield  {author} {\bibinfo {author} {\bibfnamefont {C.}~\bibnamefont
  {Wang}}, \bibinfo {author} {\bibfnamefont {J.}~\bibnamefont {Harrington}},\
  and\ \bibinfo {author} {\bibfnamefont {J.}~\bibnamefont {Preskill}},\
  }\bibfield  {title} {\bibinfo {title} {Confinement-higgs transition in a
  disordered gauge theory and the accuracy threshold for quantum memory},\
  }\href {https://doi.org/10.1016/S0003-4916(02)00019-2} {\bibfield  {journal}
  {\bibinfo  {journal} {Annals of Physics}\ }\textbf {\bibinfo {volume}
  {303}},\ \bibinfo {pages} {31} (\bibinfo {year} {2003})}\BibitemShut
  {NoStop}%
\bibitem [{\citenamefont {Chubb}\ and\ \citenamefont
  {Flammia}(2021)}]{chubb_statistical_2021}%
  \BibitemOpen
  \bibfield  {author} {\bibinfo {author} {\bibfnamefont {C.~T.}\ \bibnamefont
  {Chubb}}\ and\ \bibinfo {author} {\bibfnamefont {S.~T.}\ \bibnamefont
  {Flammia}},\ }\bibfield  {title} {\bibinfo {title} {Statistical mechanical
  models for quantum codes with correlated noise},\ }\href
  {https://doi.org/10.4171/AIHPD/105} {\bibfield  {journal} {\bibinfo
  {journal} {Ann. Inst. Henri Poincar{\'e} Comb. Phys. Interact.}\ }\textbf
  {\bibinfo {volume} {8}},\ \bibinfo {pages} {269} (\bibinfo {year} {2021})},\
  \Eprint {https://arxiv.org/abs/1809.10704} {arxiv:1809.10704 [cond-mat,
  physics:quant-ph]} \BibitemShut {NoStop}%
\bibitem [{\citenamefont {Ku}(1966)}]{ku_notes_1966}%
  \BibitemOpen
  \bibfield  {author} {\bibinfo {author} {\bibfnamefont {H.~H.}\ \bibnamefont
  {Ku}},\ }\bibfield  {title} {\bibinfo {title} {Notes on the use of
  propagation of error formulas},\ }\href
  {https://doi.org/http://dx.doi.org/10.6028/jres.070C.025} {\bibfield
  {journal} {\bibinfo  {journal} {J. Res. Natl. Bur. Stand. (U. S.)}\ }\textbf
  {\bibinfo {volume} {70C}},\ \bibinfo {pages} {263} (\bibinfo {year}
  {1966})}\BibitemShut {NoStop}%
\end{thebibliography}%
\end{document}